\documentclass[a4paper,UKenglish]{lipics}
 
\usepackage{microtype}
\usepackage{amssymb}
\usepackage{amsmath,amsthm}
\usepackage{graphicx}
\usepackage{color,wrapfig,sidecap}
\usepackage{enumitem}
\usepackage{hyperref}
\usepackage{pgf,tikz}
\usetikzlibrary{arrows}
\usetikzlibrary{snakes}
\usepackage{footnote}
\usepackage{algorithm,algorithmic}

\theoremstyle{plain}
\newtheorem{claim}[theorem]{Claim}

\newtheorem{conjecture}[theorem]{Conjecture}
\newcommand{\pw}{\mathrm{pw}}

\newcommand{\poly}{\mathrm{poly}}

\newcommand{\tw}{\mathrm{tw}}

\newcommand{\MST}{\ensuremath{\mathrm{MST}}}

\bibliographystyle{plain}
\newif\ifFull 
\Fullfalse
\title{Light spanners for bounded treewidth graphs imply light spanners for $H$-minor-free graphs\footnote{This material is based upon work supported by the National Science Foundation under Grant No.\ CCF-1252833.}}
\titlerunning{Light spanners bounded treewidth graphs and $H$-minor-free graphs} 

\author{Glencora Borradaile}
\author{Hung Le}
\affil{Department of Electrical Engineering and Computer Science\\ 
	Oregon State University, USA\\
  \texttt{glencora@eecs.orst.edu, lehu@onid.oregonstate.edu}}
\authorrunning{G. Borradaile and H. Le} 

\Copyright{Glencora Borradaile and Hung Le}

\subjclass{F.2.2 Nonnumerical Algorithms and Problems, G.2.2 Graph Theory}
\keywords{Light spanners, bounded treewidth graphs, $H$-minor-free graphs, traveling salesperson problem}

\serieslogo{}
\volumeinfo
  {Billy Editor and Bill Editors}
  {2}
  {Conference title on which this volume is based on}
  {1}
  {1}
  {1}
\EventShortName{Arxiv}
\DOI{10.4230/LIPIcs.xxx.yyy.p}

\begin{document}

\maketitle

\begin{abstract}
  Grigni and Hung~\cite{GH12} conjectured that H-minor-free graphs have $(1+\epsilon)$-spanners that are light, that is, of weight $g(|H|,\epsilon)$ times the weight of the minimum spanning tree for some function $g$. This conjecture implies the {\em efficient} polynomial-time approximation scheme (PTAS) of the traveling salesperson problem in $H$-minor free graphs; that is, a PTAS whose running time is of the form $2^{f(\epsilon)}n^{O(1)}$ for some function $f$. The state of the art PTAS for TSP in H-minor-free-graphs has running time $n^{1/\poly(\epsilon)}$. We take a further step toward proving this conjecture by showing that if the bounded treewidth graphs have light greedy spanners, then the conjecture is true. We also prove that the greedy spanner of a bounded pathwidth graph is light and discuss the possibility of extending our proof to bounded treewidth graphs. 
 \end{abstract}
\section{Introduction}
\ifFull
Spanners are used to approximately preserve distances in a compact way. Chew~\cite{Chew89} introduced spanners in the geometric setting and presented an application of spanners in motion planning. Since then, spanners have been used in various areas of algorithm design: metric space searching~\cite{NPC02}, geometric approximation algorithms~\cite{RS98,BE12} and distance oracles~\cite{GLNS02} to name a few. Alth\"ofer et al.~\cite{ADDJS93} introduced greedy spanners (defined below) for planar and geometric graphs; greedy spanners are of high quality (formalized below)  compared to other types of spanners, as partly confirmed by extensive experiments by Farshi and Gudmundsson~\cite{FG10} on random geometric graphs. In designing polynomial time approximation scheme (PTAS), the planar greedy spanner was used by Klein~\cite{Klein05b} to give a PTAS for TSP on planar graphs.
\else \fi

Spanners are used to approximately preserve distances in a compact way. In this work, we focus on spanners that preserve distances within a $(1+\epsilon)$ factor (for a fixed $\epsilon < 1$) and measure quality in terms of the spanner's weight compared to the minimum spanning tree (the {\em lightness}).  Formally, given an edge-weighted graph $G$, we wish to find a spanning subgraph $S$ of $G$ such that\ifFull:
\[d_S(x,y) \le (1+\epsilon)\cdot d_G(x,y)\ \forall x,y \in V(G)\]
\[w(S) = L(\epsilon)\cdot w(\MST(G))\]
\else\footnote{We use standard graph terminology and notation; we revisit notation necessary for some proofs in Appendix~\ref{app:notation}.} $d_S(x,y) \le (1+\epsilon)\cdot d_G(x,y)\ \forall x,y \in V(G)$ and $w(S) = L(\epsilon)\cdot w(\MST(G))$
\fi
where the lightness, $L(\epsilon)$, is a function that depends only on $\epsilon$\ifFull, $d_\cdot(\cdot,\cdot)$ denotes the weight of the shortest path between the given vertices in the given graph, $w(\cdot)$ denotes the sum of the weights of the edges in the given set of edges and $\MST(\cdot)$ is the minimum spanning tree of the given graph.\else.
\fi

We focus on the {\em greedy $(1+\epsilon)$-spanner}, the spanner that is constructed by adding edges by increasing weight while doing so decreases the distance between their endpoints by a $1+\epsilon$ factor.
Alth\"ofer et al.\ showed that the greedy spanner has lightness $L(\epsilon) = O(1/\epsilon)$ for planar graphs (and also gave lightness bounds that depend on $n$ for general graphs)~\cite{ADDJS93}.  \ifFull
Since then, similar results have been shown for other minor-closed families of graphs:
\begin{center}
    \begin{tabular}{ c | c | c }
      \multicolumn{3}{c}{Lightness of the Greedy $(1+\epsilon)$-Spanner}\\\hline
    Lightness, $L(\epsilon)$  & Class of graphs & Reference \\\hline
    $O(1/\epsilon)$ & planar graphs & Alth\"ofer et al.~\cite{ADDJS93} \\ 
    $O(1/\epsilon)$ & bounded genus graphs & Grigni~\cite{Grigni00} \\ 
    $O(|H|\sqrt{\log |H|}\log n/\epsilon)$ & $H$-minor-free graphs & Grigni and Sissokho~\cite{GS02} \\ 
        $O(\pw^2/\epsilon)$ & bounded-pathwidth graphs & Theorem~\ref{thm:tw-spanner}\\\hline
    \end{tabular}
\end{center}
\else The same lightness bound holds for bounded genus graphs, as showed by Grigni~\cite{Grigni00}. However, the best lightness bound, which was shown by Grigni and Sissokho~\cite{GS02}, for $H$-minor-free graphs is $(|H|\sqrt{\log |H|}\log n/\epsilon)$.  
\fi

In this work, we investigate the possibility of removing the dependence on $n$ from the lightness for  $H$-minor-free graphs by focusing on the following conjecture of Grigni and Hung~\cite{GH12}:
\begin{conjecture}\label{conj}
  $H$-minor-free graphs have $(1+\epsilon)$-spanners with lightness that depends on $|H|$ and $\epsilon$ only.
\end{conjecture}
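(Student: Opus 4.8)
Since Conjecture~\ref{conj} is open, the plan is to reduce it to the statement that graphs of bounded treewidth admit light $(1+\epsilon)$-spanners; by Theorem~\ref{thm:tw-spanner} this hypothesis already holds for bounded pathwidth, so such a reduction constitutes real progress. Fix an $H$-minor-free graph $G$ with a minimum spanning tree $T$; the target is a $(1+\epsilon)$-spanner of lightness $g(|H|,\epsilon)$ for some function $g$.

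First I would invoke the Robertson--Seymour structure theorem: $G$ is a clique-sum, over cliques of size $h=h(|H|)$, of pieces each of which is $h$-almost-embeddable, i.e.\ a graph of genus $O(h)$ together with $O(h)$ apex vertices and $O(h)$ vortices, each vortex being a graph of pathwidth $O(h)$ attached along a single face. The three ingredients would be handled with different tools: the bounded-genus part of a piece has a light $(1+\epsilon)$-spanner by Grigni's theorem; each vortex has one by Theorem~\ref{thm:tw-spanner}; and for each apex vertex and each clique-sum clique I would add a light ``local'' spanner around it, charging its weight against $T$ using that a piece contains only $O(h)$ such objects and that the clique-sum tree can be processed so that distinct pieces charge essentially disjoint portions of $T$.

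To assemble the per-piece spanners into a single spanner of $G$ while controlling both the stretch and the total weight, I would work one weight scale at a time. Partition the edge weights into geometric classes around $\mu^j$ with $\mu=1+\Theta(\epsilon)$; at scale $j$, contract every subtree of $T$ of weight $<\delta\mu^j$, so that any path realizing a distance $\approx\mu^j$ uses only $O_\epsilon(1)$ edges and each edge of $T$ is active at only $O_\epsilon(1)$ scales. The scale-$j$ residual graph is $H$-minor-free with bounded ``radius relative to $T$''; after removing the apices (handled above) and applying the structure theorem, it decomposes into pieces of treewidth $f(|H|,\epsilon)$, on each of which I would invoke the hypothesized light spanner, then lift the result back through the contraction and add $T$. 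Each scale then contributes $O_\epsilon(1)$ times the bounded-treewidth lightness times the weight of the active part of $T$, and since consecutive scales telescope, the total weight is $g(|H|,\epsilon)\cdot w(T)$ as required; correctness holds because a short path at scale $j$ stays within one bounded-treewidth piece, where it is $(1+\epsilon)$-approximated.

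The step I expect to be the main obstacle is the decomposition of a single scale into pieces of bounded treewidth. Apex vertices can have unbounded degree and destroy local treewidth --- already for $H=K_6$, a grid plus a single universal vertex is $H$-minor-free, has treewidth $\Omega(\sqrt n)$, and lies within distance $2$ of one vertex --- so apices (and the clique-sum cliques) cannot be absorbed into the layering argument and must be spliced in separately without letting the stretch or the lightness blow up. This is precisely why the result is naturally stated conditionally, and why Theorem~\ref{thm:tw-spanner}, which disposes of the vortices, is the concrete step taken here; closing the remaining gap amounts to strengthening Theorem~\ref{thm:tw-spanner} from bounded pathwidth to bounded treewidth and completing the apex/clique-sum bookkeeping.
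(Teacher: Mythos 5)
The statement here is a conjecture, not a theorem, and you correctly recognize that the best one can do is reduce it to the bounded-treewidth hypothesis, which is what the paper's Theorem~\ref{thm:h-minor-spanner} does. However, your reduction diverges from the paper's in several ways that matter, and the specific place you flag as ``the main obstacle'' --- apices and clique-sum cliques destroying local treewidth --- is exactly where your sketch has a genuine gap, whereas the paper has a clean resolution that does not require the weight-scale machinery you introduce.

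The paper never invokes bounded local treewidth, Grigni's genus-spanner theorem, or any geometric weight-class layering. Its key mechanism is the hereditary property of greedy spanners (Lemma~\ref{lm:hereditary-prop}): any subgraph of a greedy spanner is its own greedy spanner, so the bounded-treewidth hypothesis~(\ref{eq:tw-span}) can be applied directly to carefully chosen \emph{subgraphs} of the spanner, and one never needs the entire residual graph at a scale to have small treewidth. Concretely: $(\MST(G)\cap G_0)\cup\bigcup_i C_i$ is shown to have treewidth $O(\beta^2)$ using Eppstein and Mazoit bounds, Grohe's lemma then bounds the treewidth after attaching the vortices, and the hypothesis bounds the vortex weight. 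For the surface part, instead of Grigni, the surface is cut open along $T_{\widehat G}\cup X\cup\bigcup C_i$ (at most $2\beta$ extra edges), yielding a disk whose boundary weight is already bounded, and the $1$-simple outer-planar charging scheme (Lemma~\ref{lm:disk-charging}) finishes it. For apices, the graph $J_i' = F_i\cup E_i\cup S(A_i)$ has treewidth at most $|A_i|+1$ by inspection --- a tree plus $|A_i|$ apices --- and the clique-sum $G'=\bigoplus J_i'$ of such pieces again has bounded treewidth (Demaine et al.). Since $G'$ is a subgraph of the spanner, the hereditary property lets you invoke~(\ref{eq:tw-span}) one more time. The example you give ($K_6$-minor-free grid plus a universal vertex) is handled for free under this viewpoint: $\MST$ plus the apex's edges has treewidth $\le 2$, regardless of what happens to the rest of the graph.

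Your scale-layering plan is a plausible general-purpose tool, but it is not needed here and, as you yourself note, it does not resolve the apex problem; nor do you say how to make the per-piece spanners actually live inside a single global spanner with controlled overlap, which is exactly what the hereditary property plus ``work inside the greedy spanner $S$ from the start'' buys the paper for free. The missing idea is thus: do all the bookkeeping inside the fixed greedy spanner $S$, exhibit a handful of explicitly small-treewidth subgraphs of $S$ whose union covers $S$, and invoke the hypothesis on each; you do not need locality, contraction of light subtrees, or an independent genus-spanner result.
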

If this conjecture is true, it would, among other things, imply that TSP admits an {\em efficient} PTAS for $H$-minor free graph, that is, a PTAS whose running time is of the form $2^{f(\epsilon)}n^{O(1)}$ for some function $f$, improving on the existing PTAS with running time $n^{1/\poly(\epsilon)}$ via the framework of Demaine, Hajiaghayi and Kawarabayashi~\cite{DHK11} and the spanner of Grigni and Sissokho~\cite{GS02}.  
We make progress towards proving Conjecture~\ref{conj} by reducing the heart of the problem to the simpler graph class of bounded treewidth\footnote{Formal definitions of pathwidth and treewidth are given later in this paper.} graphs:
\begin{theorem}\label{thm:h-minor-spanner}
If the greedy $(1+\epsilon)$-spanner of a graph of treewidth $\tw$  has lightness that depends on $\tw$ and $\epsilon$ only, then the greedy $(1+\epsilon)$-spanner of an $H$-minor-free graph has lightness that depends on $|H|$ and $\epsilon$ only.
\end{theorem}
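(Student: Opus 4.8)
The plan is to split the weight of the greedy $(1+\epsilon)$-spanner $S$ of the $H$-minor-free graph $G$ into geometric weight classes, to show that a single weight class can be ``simulated'' inside a bounded-treewidth graph derived from $G$, to invoke the hypothesis there, and finally to recombine the per-class bounds into a bound of the form $g(|H|,\epsilon)\cdot w(\MST(G))$. Write $T=\MST(G)$ (taken as a Kruskal tree, so $T\subseteq S$). First I would normalize: rescale so the lightest relevant edge has weight $1$; note that no edge of $S$ can have weight exceeding $w(T)$, since when such an edge $e=(u,v)$ is considered the whole of $T$ has already been accepted, giving a path of length $d_T(u,v)\le w(T)<(1+\epsilon)w(e)$; and note that the edges of $S$ lighter than $\epsilon\,w(T)/n$ contribute at most $\smallO{\epsilon\,w(T)}$ in total, since $|S|=O_{|H|}(n)$. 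It then suffices to control, for each $a\in\{1,2,4,\dots\}$ with $a\le w(T)$, the weight $w(S_a)$ of the edges of $S$ of weight in $(a,2a]$; there are only $O(\log(n/\epsilon))$ such classes.

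For a fixed $a$ the idea is to pass from $G$ to a bounded-treewidth graph $G_a$ on which the greedy spanner, restricted to this scale, looks like $S_a$. Concretely, I would take the subgraph $G_{\le 2a}$ of edges of weight at most $2a$, apply a contraction decomposition in the style of Demaine, Hajiaghayi and Kawarabayashi~\cite{DHK11} to the ``light'' edges (weight below $\epsilon a$) to obtain a class $F$ whose contraction leaves a graph $G_a:=G_{\le 2a}/F$ of treewidth $\tau=\tau(|H|,\epsilon)$, and then argue: (i) because the pieces contracted by $F$ are light relative to $a$, every pairwise distance that is at least $a$ in $G$ is preserved in $G_a$ up to a factor $1+O(\epsilon)$, so the surviving edges of $S_a$ are still mandated by a $(1+O(\epsilon))$-greedy rule in $G_a$; and (ii) consequently $w(S_a)$ is dominated by the weight of the scale-$a$ part of the greedy $(1+O(\epsilon))$-spanner of $G_a$. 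Applying the hypothesis to $G_a$ (absorbing $O(\epsilon)$ into $\epsilon$) then yields $w(S_a)\le g(\tau,\epsilon)\cdot w_a$, where $w_a$ is the weight of the scale-$\le 2a$ part of a minimum spanning forest of $G_a$ — a ``local minimum-spanning-forest weight at scale $a$.''

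Finally I would recombine. The weights $w_a$ arise from a Kruskal-style nested family of spanning forests of $G$ (contracting an $F$-component of weight below $\epsilon a$ perturbs the relevant minimum spanning forest by at most $\epsilon a$), and the crucial point is to charge $w(S_a)$ against the \emph{new} minimum-spanning-forest weight that first appears at scale $a$, not against the whole forest up to scale $a$; this is precisely the step where the naive Grigni--Sissokho counting loses a $\log n$ factor, and where the bounded-treewidth hypothesis is used sharply. With this charging, $\sum_a w_a=O_\epsilon(1)\cdot w(T)$, and summing over the $O(\log(n/\epsilon))$ classes gives $w(S)\le g(|H|,\epsilon)\cdot w(T)$. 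I expect the main obstacle to be steps (i)--(ii): producing a contraction decomposition of an $H$-minor-free graph into bounded-treewidth instances that contracts only edges light relative to the current scale — so that distances, and hence the greedy structure, are essentially preserved — while still certifying bounded treewidth, and then checking that the resulting per-scale forest weights telescope cleanly. It is this coupling between the weight scales and the treewidth decomposition, rather than either ingredient alone, that is the heart of the theorem.
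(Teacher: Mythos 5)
Your proposal takes a genuinely different route from the paper, but it has a gap at precisely the step you flag as the heart of the argument. The paper does not use weight scales or contraction decompositions at all: it invokes the Graph Minor Structure Theorem to decompose the $H$-minor-free graph into clique-sums of almost-embeddable pieces, shows that the vortices together with the MST and bounding cycles sit inside a bounded-treewidth subgraph of the spanner (via Eppstein's genus--diameter bound, Mazoit's dual-treewidth bound, and Grohe's vortex-attachment lemma), handles the surface-embedded remainder by cutting the surface open along $T_{\widehat G}\cup X$ to get a disk and applying the $1$-simple outer-planar charging scheme, and folds in the apices and clique-sums by assembling a bounded-treewidth skeleton $G'=J_1'\oplus\cdots\oplus J_\gamma'$. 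The bounded-treewidth hypothesis is applied twice, directly, to explicit subgraphs of the spanner (using the hereditary property, Lemma~\ref{lm:hereditary-prop}); no per-scale bookkeeping is needed.

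There are two concrete problems with your sketch. First, the contraction decomposition of Demaine, Hajiaghayi and Kawarabayashi partitions \emph{all} of $E(G)$ into $k$ classes so that contracting any one class yields treewidth $O_{|H|}(k)$; it does not let you nominate ``the light edges at scale $a$'' as the set to contract. If you instead apply the decomposition only to the subgraph of light edges and contract one color class of that subgraph, the heavy edges remain untouched and there is no control on the treewidth of the resulting graph. So the object $G_a$ you describe is not produced by the cited tool, and without bounded treewidth the hypothesis cannot be invoked.

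Second, and more fundamentally, the hypothesis you are given is of the form $w(\text{greedy spanner of }G_a)\le g(\tau,\epsilon)\cdot w(\MST(G_a))$; it bounds against the \emph{entire} MST weight of $G_a$, not against ``the new minimum-spanning-forest weight that first appears at scale $a$.'' Summing $w(\MST(G_a))$ over the $O(\log(n/\epsilon))$ scales gives $O(\log(n/\epsilon))\cdot w(T)$, because an MST edge at weight $b$ contributes to every scale $a\gtrsim b$. You identify this as exactly where Grigni--Sissokho lose the $\log n$, and you assert that a sharper per-scale charging is possible, but you offer no mechanism by which the black-box bounded-treewidth hypothesis would deliver such a refined, scale-local bound. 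This is not a detail: it is the entire content of the theorem you are trying to prove, and the argument as written simply restates the goal at that point rather than establishing it. (There is also a subsidiary issue that the edges of $S_a$ need not survive as greedy-mandated edges in $G_a$ after contraction, but that is secondary to the telescoping gap.)
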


Grigni and Hung gave a construction of a $(1+\epsilon)$-spanner for graphs of pathwidth $\pw$ with lightness $O(\pw^3/\epsilon)$~\cite{GH12}; however, their construction is not greedy. Rather than considering the edges by increasing order of weight, they constructed a {\em monotone spanning tree} and greedily added edges to the monotone tree. They also argued that such a spanning tree is unlikely to exist for bounded treewidth graphs, giving little hope on two fronts (the different spanner construction as well as a construction that is unlikely to generalize to graphs of bounded treewidth) that Theorem~\ref{thm:h-minor-spanner} will lead to proving Conjecture~\ref{conj} via Grigni and Hung's work.   In this paper we improve Grigni and Hung's for bounded pathwidth graphs and do so by arguing lightness for the standard greedy algorithm, removing the limitations of Theorem~\ref{thm:h-minor-spanner} as a stepping stone to Conjecture~\ref{conj}.  In Section~\ref{sc:bounded-pw}, we prove: 

\begin{theorem}\label{thm:tw-spanner}
  The greedy $(1+\epsilon)$-spanner for a graph $G$ of pathwidth $\pw$ has lightness $O(\pw^2/\epsilon)$.
\end{theorem}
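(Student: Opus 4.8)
The plan is to prove $w(S) = O(\pw^2/\epsilon)\cdot w(T)$, where $T$ is a minimum spanning tree of $G$ (so $w(T) = w(\MST(G))$). I would start from the observation that the greedy spanner contains a minimum spanning tree: whenever the greedy process considers an edge whose endpoints lie in different components of the current spanner, their distance is infinite and the edge is added, and the edges added for this reason form a minimum spanning tree (this is exactly Kruskal's algorithm). Hence $T \subseteq S$ and it suffices to bound the weight of the non-tree edges $S \setminus T$. The one metric fact I would use is that for $e = uv \in S\setminus T$ every edge of the $T$-path $P_e$ between $u$ and $v$ has weight at most $w(e)$ (otherwise swapping a heavier tree edge of $P_e$ for $e$ yields a lighter spanning tree), so $P_e$ lies entirely in the spanner by the time greedy processes $e$; since greedy added $e$, the spanner distance between $u$ and $v$ then exceeded $(1+\epsilon)w(e)$, so $w(P_e) > (1+\epsilon)w(e)$, and in particular $w(e) < w(T)$.

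Next I would bucket the non-tree edges by weight, $S_i = \{e \in S\setminus T : w(e)\in(2^i,2^{i+1}]\}$, and for each $i$ partition $T$ into clusters: delete every $T$-edge of weight exceeding $\epsilon 2^{i-1}$ and cut the remaining forest, by a standard Euler-tour argument, into vertex-disjoint subtrees of weight at most $\epsilon 2^{i-1}$ each; this yields $O(1 + w(T)/(\epsilon 2^i))$ clusters, any two vertices of a common cluster being joined by a $T$-path of weight at most $\epsilon 2^{i-1}$. Contracting the clusters turns $S_i$ into a graph $M_i$. I claim $M_i$ is simple: if two edges $uv, u'v' \in S_i$ both ran between clusters $C \ni u,u'$ and $C'\ni v,v'$ with $w(uv)\le w(u'v')$, then when greedy processed $u'v'$ the spanner already contained $uv$ together with the two short in-cluster paths, hence a walk between $u'$ and $v'$ of weight at most $\epsilon 2^i + w(uv) \le \epsilon 2^i + w(u'v') < (1+\epsilon)w(u'v')$ (using $w(u'v')>2^i$), contradicting that greedy added $u'v'$. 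Now $M_i$ is a minor of $G$, so it has pathwidth at most $\pw$ and, being simple, is $\pw$-degenerate; therefore $|S_i| = |E(M_i)| \le \pw\cdot\kappa_i$, where $\kappa_i$ is the number of clusters incident to an edge of $S_i$, and $w(S_i) \le 2^{i+1}|S_i| \le 2\pw\cdot 2^i\kappa_i$.

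The crux -- and the only place I expect to need that the decomposition is a \emph{path} decomposition and not merely a tree decomposition -- is to bound $\sum_i 2^i\kappa_i$ without losing a factor logarithmic in the aspect ratio, which is exactly what the trivial estimate $\kappa_i \le O(1 + w(T)/(\epsilon 2^i))$ summed over all nonempty buckets would cost (and is the source of the $\log n$ in the best current $H$-minor-free bound). The target is the amortized inequality $\sum_i 2^i\kappa_i = O(\pw/\epsilon)\cdot w(T)$: charge each cluster incident to an $S_i$-edge to a sub-path of weight $\Omega(\epsilon 2^i)$ inside it, and then, using the linear order of the bags $B_1,\dots,B_m$ -- for instance by charging each edge of $S_i$ to one of the separators of size at most $\pw$ that its endpoints straddle -- argue that any fixed edge of $T$ is charged at only $O(\pw)$ scales $i$. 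Granting this, $w(S\setminus T) \le \sum_i 2\pw\cdot 2^i\kappa_i = O(\pw^2/\epsilon)\cdot w(T)$, which is the claimed bound. I expect this amortization to be the real difficulty: the per-bucket sparsity reasoning above (the minor/degeneracy bound and the one-edge-per-cluster-pair claim) carries over verbatim to bounded treewidth, so it is precisely here that linearity of the path decomposition must be exploited.
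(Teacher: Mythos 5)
Your proposal takes a genuinely different route from the paper. The paper's proof passes to a \emph{normalized graph} built from a smooth path decomposition, constructs an explicit \emph{charging forest} $\Phi$ (a rooted spanning forest of the line graph of $G$) using a triangle rule, dashed/bold/mixed edges, and a rank-labelled \emph{contracted forest} $\Lambda_i$ that tracks how vertices of bag $X_i$ are connected by the $\MST$ in descendant bags, and then shows this yields an $O(\pw^2)$-simple charging scheme to the $\MST$, which by Lemma~\ref{lem:charging} gives the $O(\pw^2/\epsilon)$ lightness. Your approach instead adapts the classical bucket-by-scale/cluster argument from the Euclidean and doubling-metric spanner literature: bucket non-tree edges dyadically, cluster the $\MST$ at each scale, observe that the contracted multigraph $M_i$ is simple and is a minor of $G$, hence $\pw$-degenerate, so $|S_i| \le \pw\,\kappa_i$, and then try to amortize $\sum_i 2^i\kappa_i$.

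The problem is that this amortization -- which you yourself flag as ``the crux'' and ``the real difficulty'' -- is exactly the theorem and you do not prove it. The per-scale sparsity bound $|S_i|\le\pw\,\kappa_i$ uses nothing beyond bounded treewidth (minor-closure plus degeneracy), and the trivial estimate $\kappa_i = O(1 + w(T)/(\epsilon 2^i))$ summed over nonempty scales recovers the $O(\log n)$ loss that is already present in Grigni--Sissokho for $H$-minor-free graphs. Removing that logarithm is precisely what this theorem (and Conjecture~\ref{conj}) are about, and the single sentence you offer -- ``charge each cluster incident to an $S_i$-edge to a sub-path of weight $\Omega(\epsilon 2^i)$ inside it, and \ldots argue that any fixed edge of $T$ is charged at only $O(\pw)$ scales $i$'' -- is not an argument: you give no reason why a fixed $\MST$-edge is charged at only $O(\pw)$ scales, nor where the linearity of the bag order enters. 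In the paper this is exactly what the rank machinery on $\Lambda_i$ and Lemma~\ref{lm:associated-pseudo-triangle} (bounding the number of pseudo-triangles whose pseudo-edge contains a fixed $\MST$-edge by $2\pw^2$) are engineered to deliver, and that part of the paper occupies the bulk of Section~\ref{sc:bounded-pw} and the appendix. Until the amortized inequality $\sum_i 2^i\kappa_i = O(\pw/\epsilon)\,w(T)$ is actually established, the proposal is a plausible plan with the central step missing, not a proof.
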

\noindent While our proof does not immediately extend to graphs of bounded treewidth, the techniques are not as specific to path decompositions as Grigni and Hung's  monotone-spanning-tree technique is, and thus gives more hope for proving Conjecture~\ref{conj}.

While it may seem like a limitation in proving Conjecture~\ref{conj} that we must show that a particular construction of the $(1+\epsilon)$-spanner (namely the greedy construction) is light for bounded treewidth graphs, Filtser and Solomon (Theorem 4~\cite{FS16}) showed that if an edge-weighted graph has light spanner, its greedy spanner is also light. 

\ifFull
\subsection{Spanners for PTASes}
Spanners are used algorithmically.  We point out one implication that light spanners for $H$-minor free graphs would have in regards to the traveling salesperson problem (TSP).
A polynomial-time approximation scheme (PTAS) is an algorithm that, for a fixed $\epsilon$, runs in polynomial time and has approximation ratio $1+\epsilon$.  One framework for designing PTASes for TSP in minor-closed graph families starts by first taking a light $(1+\epsilon)$-spanner, contracting a fraction of the edges (whose weight is bounded because the spanner is light) to give a bounded treewidth graph and then solving the problem optimally.  A solution for the original graph is obtained by adding the contracted edges back.  Because the contracted edges are light, this does not increase the cost of the solution by much.  Because the spanner preserves distances within a $1+\epsilon$ factor, the error introduced is at most $\epsilon$ factor.

This framework was introduced by Klein~\cite{Klein05b} and inspired by PTASes for problems like independent set and vertex cover, that do not require the spanner step, due to Baker~\cite{Baker94}.  Initially this framework was only developed for planar graphs, but resulting in {\em efficient} PTASes; that is, PTASes whose running time is of the form $2^{f(\epsilon)}n^{O(1)}$ for some function $f$.  Klein's EPTAS for TSP had a running time of $2^{O(1)/{\epsilon^2}}n$; this EPTAS was generalized to bounded genus graphs by Borradaile, Demain and Tazari~\cite{BDT12} with a slight increase in the running time dependence on $\epsilon$.

Demaine, Hajiaghayi and Kawarabayashi generalized the PTAS framework from planar graphs to $H$-minor free graphs~\cite{DHK11}.  However, the best known lightness bound for the greedy spanner of $H$-minor free graphs at the time was $O(|H|\sqrt{\log |H|}\log n/\epsilon)$, due to Grigni and Sissokho~\cite{GS02}.  Since the lightness of the spanner depends logarithmically on $n$ and the PTAS framework results in an algorithm that is exponential in the lightness, this implied a PTAS with running time $n^{1/\poly(\epsilon)}$.  

However, if the $H$-minor free spanner has lightness $p(|H|,\epsilon)\cdot w(MST)$, as would be the case if bounded treewidth graphs have lightness $g(\tw,\epsilon)\cdot w(MST)$ (via Theorem~\ref{thm:h-minor-spanner}), the PTAS using the Demaine, Hajiaghayi and Kawarabayashi framework would be efficient, with running time $2^{O(p(|H|, \epsilon))}n^{O(1)}$.
\fi

\section{Analyzing greedy spanners}
The greedy construction for a $(1+\epsilon)$-spanner due to Alth\"ofer et al.~\cite{ADDJS93} is an extension of Kruskal's minimum spanning tree algorithm.  Start by sorting the edges by increasing weight and an empty spanner subgraph $S$; for each edge $uv$ in order, if $(1+\epsilon)w(uv) \le d_S(u,v)$, then $uv$ is added to $S$.  By observing that this is a relaxation of Kruskal's algorithm, $\MST(G) \subseteq S$.  Alth\"ofer et al.\ (Lemma 3~\cite{ADDJS93}) also showed that for any edge $e = uv$ in $S$ and any $u$-to-$v$ path $P_S(uv)$ between $u$ and $v$ in $S\setminus \{e\}$, we have:
\begin{equation} \label{eq:edge-path-ineq}
(1+\epsilon)w(e) \leq w(P_S(uv))
\end{equation} 
The following property of greedy $(1+\epsilon)$-spanners is crucial in our analysis.  \ifFull\else The proof follows by contradiction to Equation~(\ref{eq:edge-path-ineq}) and can be found in Appendix~\ref{app:proofs}.
\fi
\begin{lemma} \label{lm:hereditary-prop}
Let $S$ be the greedy $(1+\epsilon)$-spanner of a graph and let $H$ be a subgraph of $S$.  Then the greedy $(1+\epsilon)$-spanner of $H$ is itself. 
\end{lemma}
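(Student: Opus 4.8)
The plan is to run the greedy $(1+\epsilon)$-spanner algorithm on $H$ and show, by contradiction, that it never rejects an edge; since it also never adds an edge outside $H$, its output is then exactly $H$. First I would set up notation: fix the global increasing-weight ordering of edges that the greedy construction uses (breaking ties by some fixed rule), let $e_1, e_2, \ldots$ be the edges of $H$ listed in that order, and let $S'_i$ be the partial spanner built while running greedy on $H$, taken just before $e_i$ is examined. Two facts are then immediate from the construction: $S'_i$ contains only edges of $H$ that precede $e_i$ in the ordering, so in particular $e_i \notin S'_i$; and every edge of $S'_i$ is an edge of $H$, hence of $S$ (since $H \subseteq S$), so $S'_i$ is a subgraph of $S \setminus \{e_i\}$.

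Next I would suppose that some edge $e = e_i = uv$ of $H$ is rejected by greedy on $H$. By the rejection rule this means $d_{S'_i}(u,v) < (1+\epsilon)\,w(e)$; let $P$ be a $u$-to-$v$ path in $S'_i$ with $w(P) = d_{S'_i}(u,v)$. Combining the two facts above, $P$ is a $u$-to-$v$ path contained in $S \setminus \{e\}$. Since $e \in S$, Equation~(\ref{eq:edge-path-ineq}) applies to $e$ and to the path $P = P_S(uv)$, giving $(1+\epsilon)\,w(e) \le w(P) = d_{S'_i}(u,v) < (1+\epsilon)\,w(e)$, a contradiction. Hence no edge of $H$ is rejected, so the greedy $(1+\epsilon)$-spanner of $H$ equals $H$.

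The argument is short and its only substantive ingredient is this single invocation of the Alth\"ofer et al.\ inequality~(\ref{eq:edge-path-ineq}); the one place that needs a moment of care is the treatment of equal-weight edges, where I must ensure that $P$ — assembled from edges processed strictly before $e$ — genuinely avoids $e$ and genuinely lies inside $S$. Using one fixed global edge ordering for every run of the greedy algorithm makes both points automatic, so I do not anticipate any real obstacle beyond making this bookkeeping explicit.
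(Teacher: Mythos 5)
Your proof is correct and follows the same route as the paper: run greedy on $H$, suppose an edge is rejected, and contradict Equation~(\ref{eq:edge-path-ineq}) by observing that the witnessing path lies inside $S\setminus\{e\}$ while $e\in S$. The one place you differ is in the bookkeeping: you track the partial spanner $S'_i$ at the moment $e_i$ is examined and insist on a shared global edge ordering, whereas the paper simply takes the witnessing path inside the final spanner $S_H$ and notes $S_H\subseteq H\subseteq S$; since the witnessing path at rejection time is already contained in $S_H$ and automatically avoids $e$ (as $e\notin S_H$), the paper's shortcut is sound and the ordering/tie-breaking concern you raise, while harmless, is not actually needed for the contradiction to go through.
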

\ifFull
\begin{proof}
Let $S_H$ be the greedy $(1+\epsilon)$-spanner of $H$.  Suppose for a contradiction that there is an edge $e=uv \in H\setminus S_H$. Since $uv$ is not added to the greedy spanner of $H$, there must be a $u$-to-$v$-path $P_{S_H}(uv)$ in $S_H$ that witnesses the fact that $uv$ is not added (i.e.\ $(1+\epsilon)w(e) > w(P_{S_H}(uv))$).
However, $P_{S_H}(uv) \subseteq S$, contradicting Equation~\ref{eq:edge-path-ineq}. 
\end{proof}
\fi

\subsection{Charging scheme}
To argue that $S$ is {\em light}, that is, has weight $L(\epsilon) w(\MST(G)$ for some function $L(\epsilon)$, we identify a specific {\em charging path} $P_S(uv)$ from $u$ to $v$ for each non-spanning-tree edge $uv$ of the spanner.  One may think of $P_S(uv)$ as being the shortest $u$-to-$v$ path in $S$ when $uv$ is added to the spanner, but this is not necessary for the analysis; we only need that $(1+\epsilon)w(uv) \leq w(P_S(uv))$ (as is guaranteed by Equation~(\ref{eq:edge-path-ineq}) for greedy spanners) for every path in $S$.  We call $(uv,P_S(uv))$ a \emph{charging pair}.  For a spanning tree $T$ (not necessarily a minimum spanning tree), we call a set of charging pairs $(e,P_S(e))$ for all edges $e \in S\backslash T$ a \emph{charging scheme}. We say that an edge is {\em charged to} if it belongs to the charging path for another edge. A charging scheme is \emph{acyclic} if for every edge $e \not \in T$, the directed graph where vertices of the graph are edges not in $T$ and directed edges represent charged to relationship, i.e. there is a directed edge $(e_1 \rightarrow e_0)$ if $e_0$ is charged to by $e_i$, is acyclic. \ifFull

\fi
A charging scheme is \emph{$k$-simple} if each edge $e \in S \backslash T$ is charged to at most once and each edge in $T$ is charged to at most $k$ times.  \ifFull If $S$ has a $k$-simple charging scheme then:
        \[
       (1 + \epsilon) w(S\backslash T) \leq \sum_{e \in S\backslash T} w(P_S(e))\leq k\cdot w(T) + w(S \backslash T)
        \]
where the first inequality follows from edges in $S\backslash T$ having charging paths and the second  inequality follows from each edge in $T$ appearing in charging paths at most $k$ times and each edge in $S\backslash T$ appearing in  charging paths at most once. Rearranging the left- and right-most sides of this inequality gives us:\else
Based on these definitions , one can prove (see Appendix~\ref{app:proofs} for full details):
\fi
\begin{lemma} \label{lem:charging}
If $S$ is a greedy $(1+\epsilon)$-spanner of a graph that has a $k$-simple acyclic charging scheme to a spanning tree $T$, then $w(S) \leq (1 + \frac{k}{\epsilon})w(T)$.
\end{lemma}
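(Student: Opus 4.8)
The plan is to mimic the double-counting argument sketched (in the full version) just above the lemma, and then rearrange. First I would write down the weight of the spanner as $w(S) = w(T) + w(S \setminus T)$, so it suffices to bound $w(S\setminus T)$ in terms of $w(T)$. The starting point is the per-edge inequality guaranteed by the charging scheme: for every $e \in S\setminus T$ we have $(1+\epsilon)w(e) \le w(P_S(e))$, which is exactly Equation~(\ref{eq:edge-path-ineq}) applied to the designated charging path. Summing this over all $e \in S \setminus T$ gives
\[
(1+\epsilon)\,w(S\setminus T) \;\le\; \sum_{e \in S\setminus T} w(P_S(e)).
\]

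Next I would bound the right-hand side by counting, for each edge $f$ of $S$, how many charging paths it appears in, and with what multiplicity. Here the $k$-simplicity hypothesis does the work: an edge $f \in T$ contributes $w(f)$ to at most $k$ of the charging paths, and an edge $f \in S\setminus T$ contributes $w(f)$ to at most one charging path. Hence
\[
\sum_{e \in S\setminus T} w(P_S(e)) \;\le\; k\cdot w(T) + w(S\setminus T).
\]
Combining the two displays yields $(1+\epsilon)\,w(S\setminus T) \le k\,w(T) + w(S\setminus T)$, i.e.\ $\epsilon\,w(S\setminus T) \le k\,w(T)$, so $w(S\setminus T) \le \frac{k}{\epsilon} w(T)$, and therefore $w(S) = w(T) + w(S\setminus T) \le (1 + \frac{k}{\epsilon})\,w(T)$, as claimed.

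I do not expect a genuine obstacle here; the argument is a clean double count. The one point that warrants a sentence of care is why the interchange of summation is valid and why no edge is over-counted: this is precisely where the definition of ``$k$-simple'' is invoked, and one should note that a charging path is a path (hence uses each edge at most once), so an edge's total contribution across all charging paths is its weight times the number of paths containing it, bounded by $k$ for tree edges and by $1$ for non-tree edges. The acyclicity hypothesis, interestingly, is not actually needed for this particular inequality — it is listed in the statement presumably because it is maintained throughout the construction for other reasons — so I would either not use it or remark that it is not required for this bound. Everything else is the routine rearrangement shown above.
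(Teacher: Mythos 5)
Your proof is correct and follows exactly the paper's own double-counting argument: sum Equation~(\ref{eq:edge-path-ineq}) over charging pairs, bound the right side by $k\cdot w(T) + w(S\setminus T)$ using the $k$-simplicity, and rearrange. Your aside that acyclicity is not actually used in this bound is also accurate; the paper keeps that hypothesis because it is maintained throughout the construction elsewhere.
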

\noindent Indeed, in proving the greedy $(1+\epsilon)$-spanner has weight at most
$(1 + \frac{2}{\epsilon})\MST(G)$ when $G$ is planar, Althofer et al.~\cite{ADDJS93} implicitly proved the existence of a 2-simple acyclic charging scheme to $\MST(G)$. Since our paper only deals with acyclic simple charging schemes, we simply say \emph{simple charging schemes} to refer to acyclic simple charging schemes. We will use a stronger result for outer-planar graphs, planar graphs in which all the vertices are on the boundary of a common face (the outer face), which we take to be simple. \ifFull\else The proof of the following lemma is included in Appendix~\ref{app:proofs}\fi.
\begin{lemma} \label{lm:disk-charging}
If $G$ is an outer-planar graph and $T$ is a path formed by all the edges in the boundary less one edge, then $G$ has an acyclic $1$-simple charging scheme to $T$.
\end{lemma}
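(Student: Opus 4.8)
The plan is to exploit the fact that an outer-planar graph has a very restrictive structure: its bounded faces can be arranged in a tree (the weak dual), and every chord of the outer cycle splits the remaining vertices into two arcs. Let the boundary cycle be $C = v_0 v_1 \cdots v_{n-1} v_0$, and suppose $T$ is obtained by deleting the boundary edge $v_{n-1}v_0$, so $T$ is the path $v_0 v_1 \cdots v_{n-1}$. Every edge of $G$ not in $T$ is either the deleted boundary edge $v_{n-1}v_0$ or a chord $v_i v_j$ with $i < j$ and $(i,j) \neq (0,n-1)$. For each such non-tree edge $v_iv_j$, the natural charging path is the sub-path $v_i v_{i+1} \cdots v_j$ of $T$; by Equation~(\ref{eq:edge-path-ineq}) applied in the greedy spanner $S$ (noting $G \subseteq S$ need not hold, but here $G$ \emph{is} the graph whose greedy spanner we discuss — actually $G$ is outer-planar and we only need that $G$ has \emph{some} $1$-simple acyclic charging scheme, so we may take $P_S(e)$ to be this sub-path of $T$ and the inequality $(1+\epsilon)w(e) \le w(P_S(e))$ is exactly what a charging scheme requires, which Equation~(\ref{eq:edge-path-ineq}) supplies when $G$ is itself a greedy spanner; in general we just need the combinatorial $1$-simple acyclic property).

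The combinatorial heart is to show these paths form a $1$-simple acyclic scheme. First I would observe that since all chords are pairwise non-crossing (outer-planarity) and none equals $v_0v_{n-1}$, the intervals $[i,j]$ they span form a laminar family of sub-intervals of $[0,n-1]$, none of which is all of $[0,n-1]$. \textbf{Acyclicity:} if $e_1 = v_iv_j$ is charged to $e_0 = v_kv_{k+1}$, then $[k,k+1] \subseteq [i,j]$ but $e_1 \neq e_0$ forces the interval of $e_1$ to strictly contain that of $e_0$ when $e_0 \notin T$ (laminarity plus non-equality); since strict containment of intervals is a strict partial order, the charged-to digraph restricted to non-tree edges has no cycle. \textbf{$1$-simplicity:} a tree edge $v_kv_{k+1}$ is charged to by every non-tree edge whose interval contains $[k,k+1]$; a priori this could be many, so I need to refine the choice of charging paths. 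The fix is to route each chord through the \emph{weak dual tree}: charge $v_iv_j$ not along the long arc of $T$ but across the single face of $G$ that $v_iv_j$ bounds on its "inner" side, i.e. replace $v_iv_j$ by the other edges of that face. Processing faces from the leaves of the weak dual inward, each face is used to charge exactly its one "outermost" edge, and a small induction shows each edge — tree or non-tree — is charged to at most once, with the deleted edge $v_{n-1}v_0$ (if present as a non-tree edge of $G$) charged across its unique incident bounded face.

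The main obstacle I anticipate is precisely making the $1$-simplicity argument clean: the naive "charge along the short arc" scheme is $1$-simple for non-tree edges trivially but can charge a single tree edge many times, while the "charge across a face" scheme needs a careful leaf-to-root ordering of the weak dual so that when a face is processed its outermost edge is still uncharged and becomes charged exactly once, and one must check the resulting digraph is still acyclic (each edge is charged to by a face strictly "deeper" in the dual, giving the required partial order). Once the ordering is set up, verifying $(1+\epsilon)w(e)\le w(P_S(e))$ for each charging path is immediate from Equation~(\ref{eq:edge-path-ineq}), and Lemma~\ref{lem:charging} with $k=1$ is not needed here — we only assert the existence of the scheme — but the same bookkeeping will be reused when this lemma is applied in Section~\ref{sc:bounded-pw}.
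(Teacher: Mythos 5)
Your proposal is correct and arrives at essentially the paper's argument: routing each non-tree edge's charging path around the unique bounded face on its dual-child side and processing faces from the leaves of the dual tree inward is exactly the paper's post-order traversal of the interdigitating dual spanning tree $T^*$, which charges each tree edge once (it lies on exactly one bounded face) and each chord once (by the chord on the parent side of that face). The opening detour through the arc-of-$T$ scheme and its laminar-interval acyclicity analysis is ultimately discarded, but your diagnosis of its failure of $1$-simplicity and the dual-tree fix you land on match the paper's proof.
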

\ifFull \begin{proof}
Let $e$ be the edge on the boundary of $G$ that is not in $T$. Let $T^*$ be the spanning tree of the dual graph containing all the edges that do not correspond to edges of $T$. We construct a charging scheme for $G$ by traversing $T^*$ in post-order, considering all the non-outer faces.  Consider visiting face $f$ with children $f_1, \ldots, f_k$ and parent $f_0$.  Let $e_i$ be the edge of $G$ between $f$ and $f_i$ for all $i$ and let $P_0$ be the path between $e_0$'s endpoints in $T \cap \{e_1,e_2, \ldots, e_k\}$ that contains all the edges $\{e_1,e_2, \ldots, e_k\}$. Then, by Equation~\ref{eq:edge-path-ineq}, $(e_0,P_0)$ is a charging pair for $e_0$. We observe that the set of charging pairs produced from this process is a $1$-simple charging scheme to $T$.
\begin{figure}[h] 
  \centering
  \vspace{-20pt}
  \definecolor{ffqqqq}{rgb}{1,0,0}
\definecolor{uququq}{rgb}{0.25,0.25,0.25}
\definecolor{qqqqff}{rgb}{0,0,1}
\begin{tikzpicture}[line cap=round,line join=round,>=triangle 45,x=1.0cm,y=1.0cm]
\clip(0,1) rectangle (5,5.58);
\fill[line width=1.2pt,fill=black,fill opacity=0.1] (1.98,5) -- (1.1,4.38) -- (0.75,3.36) -- (1.07,2.33) -- (1.93,1.69) -- (3.01,1.67) -- (3.89,2.29) -- (4.24,3.31) -- (3.92,4.34) -- (3.06,4.98) -- cycle;
\draw [line width=1.2pt] (1.98,5)-- (1.1,4.38);
\draw [line width=1.2pt] (1.1,4.38)-- (0.75,3.36);
\draw [line width=1.2pt] (0.75,3.36)-- (1.07,2.33);
\draw [line width=1.2pt] (1.07,2.33)-- (1.93,1.69);
\draw [line width=1.2pt] (1.93,1.69)-- (3.01,1.67);
\draw [line width=1.2pt] (3.01,1.67)-- (3.89,2.29);
\draw [line width=1.2pt] (3.89,2.29)-- (4.24,3.31);
\draw [line width=1.2pt] (4.24,3.31)-- (3.92,4.34);
\draw [line width=1.2pt] (3.92,4.34)-- (3.06,4.98);
\draw [dash pattern=on 2pt off 2pt] (1.1,4.38)-- (1.93,1.69);
\draw [dash pattern=on 2pt off 2pt] (1.98,5)-- (1.93,1.69);
\draw [dash pattern=on 2pt off 2pt] (1.93,1.69)-- (3.89,2.29);
\draw [dash pattern=on 2pt off 2pt] (3.92,4.34)-- (3.89,2.29);
\draw [dash pattern=on 2pt off 2pt] (1.98,5)-- (3.92,4.34);
\draw [dash pattern=on 2pt off 2pt] (3.92,4.34)-- (1.93,1.69);
\draw [line width=1.2pt,dash pattern=on 2pt off 2pt,color=ffqqqq] (1.98,5)-- (3.06,4.98);
\draw (2.46,5.4) node[anchor=north west] {$e$};
\draw (1.84,5.44) node[anchor=north west] {$1$};
\draw (0.94,4.81) node[anchor=north west] {$2$};
\draw (0.45,3.75) node[anchor=north west] {$3$};
\draw (0.8,2.55) node[anchor=north west] {$4$};
\draw (1.73,1.8) node[anchor=north west] {$5$};
\draw (3.01,1.83) node[anchor=north west] {$6$};
\draw (3.89,2.44) node[anchor=north west] {$7$};
\draw (4.38,3.5) node[anchor=north west] {$8$};
\draw (4.05,4.62) node[anchor=north west] {$9$};
\draw [line width=1.2pt,dotted,color=qqqqff] (1.66,3.95)-- (1.16,3.01);
\draw [line width=1.2pt,dotted,color=qqqqff] (1.66,3.95)-- (2.77,3.73);
\draw [line width=1.2pt,dotted,color=qqqqff] (2.77,3.73)-- (3.32,2.66);
\draw [line width=1.2pt,dotted,color=qqqqff] (3.32,2.66)-- (2.98,1.85);
\draw [line width=1.2pt,dotted,color=qqqqff] (2.77,3.73)-- (3.06,4.79);
\draw [line width=1.2pt,dotted,color=qqqqff] (4.07,3.21)-- (3.32,2.66);
\draw (3.06,5.41) node[anchor=north west] {$10$};
\begin{scriptsize}
\fill [color=qqqqff] (1.98,5) circle (1.5pt);
\fill [color=qqqqff] (1.1,4.38) circle (1.5pt);
\fill [color=uququq] (0.75,3.36) circle (1.5pt);
\fill [color=uququq] (1.07,2.33) circle (1.5pt);
\fill [color=uququq] (1.93,1.69) circle (1.5pt);
\fill [color=uququq] (3.01,1.67) circle (1.5pt);
\fill [color=uququq] (3.89,2.29) circle (1.5pt);
\fill [color=uququq] (4.24,3.31) circle (1.5pt);
\fill [color=uququq] (3.92,4.34) circle (1.5pt);
\fill [color=uququq] (3.06,4.98) circle (1.5pt);
\fill [color=qqqqff] (1.16,3.01) circle (1.5pt);
\fill [color=qqqqff] (1.66,3.95) circle (1.5pt);
\fill [color=qqqqff] (2.77,3.73) circle (1.5pt);
\fill [color=qqqqff] (3.32,2.66) circle (1.5pt);
\fill [color=qqqqff] (2.98,1.85) circle (1.5pt);
\fill [color=qqqqff] (4.07,3.21) circle (1.5pt);
\fill [color=qqqqff] (3.06,4.79) circle (1.5pt);
\end{scriptsize}
\end{tikzpicture}
  \vspace{-15pt}
  \caption{ An outer planar graph $G$. Bold edges are edges of $T$, dashed edges are non-tree edges and dotted edges are edges of the dual spanning tree, less the dual of $e$.}
  \label{fig:disk-graph-charging}
\vspace{-10pt}
\end{figure}
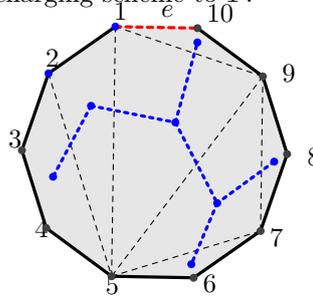
\end{proof}
\else

For a greedy spanner $S$, one may instead, when it is convenient, define a {\em weak} $k$-simple charging scheme for a supergraph of $S$. A weak $k$-simple charging scheme is a $k$-simple charging scheme in which Equation~(\ref{eq:edge-path-ineq}) need not hold for charging pairs.  \ifFull\else See Appendix~\ref{app:proofs} for the proof of the following. \fi

\begin{lemma}\label{lem:weak}
Let $S$ be a greedy spanner with spanning tree $T$ and let $\hat S$ be a supergraph of $S$ that $S$ spans.  If $\hat S$ has a weak $k$-simple charging scheme to $T$ then $S$ has a $k$-simple charging scheme to $T$.
\end{lemma}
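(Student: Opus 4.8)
The plan is to show that a weak $k$-simple charging scheme for $\hat S$ can be ``pushed down'' into $S$ by replacing each charging path that uses edges outside $S$ with a genuine charging path inside $S$, while not increasing how many times any edge is charged to. First I would fix a weak $k$-simple charging scheme for $\hat S$ to $T$: since $T \subseteq S \subseteq \hat S$ and this is a charging scheme to $T$, it consists of charging pairs $(e, P_{\hat S}(e))$ for every $e \in \hat S \setminus T$, where each $e \in \hat S\setminus T$ is charged to at most once and each $e \in T$ is charged to at most $k$ times. Observe that every edge of $S\setminus T$ is an edge of $\hat S\setminus T$, so it already receives a charging path $P_{\hat S}(e)$; the only defect is that (a) these paths may use edges of $\hat S \setminus S$, and (b) the scheme does not satisfy Equation~(\ref{eq:edge-path-ineq}) for these paths, which is exactly what we need for $S$ to be a \emph{genuine} (non-weak) charging scheme.

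The key step is to handle edges $f \in \hat S \setminus S$ that appear on some charging path $P_{\hat S}(e)$. Since $S$ spans $\hat S$, in particular $S$ spans the endpoints of $f$, so there is a $u$-to-$v$ path $Q_f$ in $S$ between the endpoints $u,v$ of $f$; moreover, because $\hat S$ is a supergraph of the greedy spanner $S$ and $f \notin S$, the greedy stopping condition guarantees $d_S(u,v) \le (1+\epsilon) w(f) \le w(Q_f')$ for an appropriate choice — more carefully, I would take $Q_f$ to be any $u$-$v$ path in $S$ (for instance the tree path in $T$, or the path witnessing why $f$ was not added to $S$) and note $(1+\epsilon) w(f) > w(Q_f)$ need not hold, so instead I argue directly: replace every occurrence of $f$ on every charging path by $Q_f$. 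This yields, for each $e \in S \setminus T$, a walk from $e$'s endpoints lying entirely in $S$; reduce it to a simple path $P_S(e)$ in $S$. By Equation~(\ref{eq:edge-path-ineq}) applied in $S$ (which holds for \emph{every} $u$-to-$v$ path in $S \setminus \{e\}$, hence for $P_S(e)$), we get $(1+\epsilon)w(e) \le w(P_S(e))$, so $(e, P_S(e))$ is a genuine charging pair.

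It then remains to check the $k$-simple bookkeeping survives this surgery. The subtle point is that substituting $Q_f$ for each edge $f \notin S$ could route charge through edges of $T$ or $S\setminus T$ that were not previously charged, or charge some edge multiple times. This is where the argument needs care, and I expect it to be the main obstacle: one must choose the replacement paths $Q_f$ so that the total charge on each $T$-edge stays at most $k$ and each $S\setminus T$-edge stays at most once. The cleanest route is to take $Q_f$ to be the \emph{tree path} in $T$ between $f$'s endpoints, so substitution only ever adds charge to $T$-edges, never to $S \setminus T$-edges; then one re-examines the count. However, this alone can blow up the $T$-charge, so the real work is to observe that in the intended applications $\hat S \setminus S$ is structured (e.g.\ each such $f$ lies in a controlled region, or the weak scheme was designed with this substitution in mind) — and more honestly, the correct statement is likely that one chooses the weak scheme's charging paths for $\hat S \setminus S$ edges to already be their genuine $S$-charging paths, so no substitution is needed there, and only edges of $S\setminus T$ carry paths that must be re-routed, each exactly once. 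Under that reading, I would argue: for $e \in S\setminus T$, take $P_S(e)$ = the path witnessing $e \notin$ further-greedy-additions, already in $S$, satisfying Equation~(\ref{eq:edge-path-ineq}); the $\hat S$-scheme tells us the multiplicity constraints on $T$ and we simply inherit them, possibly after noting charging paths in $S$ can be taken to be sub-walks of the $\hat S$ paths. The crux, and the step I'd flag as the delicate one, is verifying that this re-selection of paths within $S$ does not violate the ``charged to at most once / at most $k$ times'' counts — i.e.\ exhibiting that a weak scheme's combinatorial structure transfers, which may require the $\hat S$ paths and $S$ paths to be chosen coherently (e.g.\ via a single post-order/planarity argument as in Lemma~\ref{lm:disk-charging}) rather than independently.
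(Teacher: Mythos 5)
Your proposal correctly identifies the two obstacles — charging paths may use edges outside $S$, and the weak scheme need not satisfy Equation~(\ref{eq:edge-path-ineq}) — and you correctly dispose of the second one (once every charging path lies in $S$, Equation~(\ref{eq:edge-path-ineq}) holds automatically because it holds for \emph{every} $u$-to-$v$ path in $S \setminus \{e\}$). But you do not resolve the first obstacle, and you yourself flag the resulting bookkeeping as ``the delicate one'' without closing it. The fixes you consider — routing $f$'s substitute $Q_f$ along the tree path, or assuming the weak scheme was ``designed coherently'' — either blow up the $T$-charge, as you note, or secretly add a hypothesis the lemma does not have. This is a genuine gap: the lemma must hold for an \emph{arbitrary} weak $k$-simple charging scheme on $\hat S$, so one cannot appeal to how it was constructed.

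The missing idea is to remove the extra edges $\hat e \in \hat S\setminus S$ one at a time, and when doing so, to replace $\hat e$ in any charging path $P_{\hat S}(e)$ not by a tree path but by \emph{$\hat e$'s own charging path} $P_{\hat S}(\hat e)$ (which exists because $\hat e \notin T$ and hence is a charged, not charging, edge in the weak scheme). This is exactly what makes the counting survive: deleting $\hat e$ also deletes the charging pair $(\hat e, P_{\hat S}(\hat e))$, so every edge of $P_{\hat S}(\hat e)$ loses one unit of charge, which precisely offsets the at-most-one unit it may gain from appearing on the rerouted path for $e$. Moreover $\hat e \notin T$ so $\hat e$ was charged to at most once, so at most one path $P_{\hat S}(e)$ needs rerouting per deletion, and after rerouting one takes the simple path inside $P_{\hat S}(e)\cup P_{\hat S}(\hat e)\setminus\{\hat e\}$ between $e$'s endpoints. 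Iterating over all edges of $\hat S\setminus S$ and then invoking Equation~(\ref{eq:edge-path-ineq}) as you did gives the result. Without this conservation-of-charge observation the argument does not go through, and that observation is the content of the lemma.
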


\ifFull
\begin{proof}
  Consider an edge $\hat e \in \hat S \setminus S$. We first argue that $\hat S \setminus \{\hat e \}$ has a weak $k$-simple charging scheme. Since $T \subseteq S \cap \hat S$, $\hat e \notin T$ and so $\hat e$ can be charged to at most once. If $\hat e$ is not in the charging path of any edge in $S$, we simply ignore it. Otherwise, suppose that $\hat e$ is in the charging path $P_{\hat S}(e)$ for another edge $e$ of $\hat S$, then we define the charging path for $e$ to be the simple path between $e$'s endpoints that is in $P_{\hat S}(e) \cup P_{\hat S}(\hat e) \setminus  \{\hat e \}$.  The resulting set of paths is a weak $k$-simple charging scheme since every edge of $P_{\hat S}(\hat e)$ is charged to one fewer time (by the removal of $\hat e$) and at most once more (by $e$).

  By induction, $S$ has a weak $k$-simple charging scheme to $T$.  Since $S$ is a greedy spanner, Equation~(\ref{eq:edge-path-ineq}) holds for every charging pair, so the weak $k$-simple charging scheme to $T$ is a  $k$-simple charging scheme to $T$.
\end{proof}
\fi
\section{The greedy spanner of an \texorpdfstring{$H$}{H}-minor free graph is possibly light}\label{sec:h-minor-free-spanner}

Our result relies on the Graph Minor Structure Theorem due to Robertson and Seymour~\cite{RS03} which guarantees a structural decomposition of an $H$-minor-free graph into simpler graphs. \ifFull Note that we only rely on this decomposition for the analysis of the greedy spanner, not the construction, so we only need the existence of the decomposition (although algorithms also exist). \fi Informally, the seminal Graph Minor Structure Theorem of Robertson and Seymour states that every $H$-minor-free graph is the (small) clique-sum of graphs that are {\em almost} embeddable on graphs of small genus. We give a formal statement of the Graph Minor Structure Theorem below after some requisite definitions.


We first argue that almost-embeddable graphs have light $(1+\epsilon)$-spanners assuming bounded treewidth graphs have light $(1+\epsilon)$-spanners. We partition the spanner edges of almost-embeddable graphs into two parts: those in the surface-embeddable part and those in the non-embeddable part. We bound the weight of the surface-embeddable part by ``cutting along'' a subset of edges to create an outer-planar graph and then using the lightness bound for outer-planar graphs. Since the large-grid minor of the graph must be contained in the surface-embeddable part, we can show that the non-embeddable part has bounded treewidth. Therefore, the lightness of $(1+\epsilon)$-spanners of the non-embeddable part follows from the assumption that bounded treewidth graphs have light $(1+\epsilon)$-spanners.

\subsection{Definitions: Treewidth, pathwidth and the structure of \texorpdfstring{$H$}{H}-minor free graphs}

Note that if $G$ excludes $K_{|H|}$ as a minor, then it also excludes $H$.  
 
 \subparagraph*{Tree decomposition} \emph{A tree decomposition} of $G(V,E)$ is a pair $(\mathcal{X},\mathcal{T})$ where $\mathcal{X} = \{X_i| i \in I\}$, each $X_i$ is a subset of $V$ (called bags), $I$ is the set of indices,  and $\mathcal{T}$ is a tree whose set of nodes is $\mathcal{X}$ satisfying the following conditions:
\begin{enumerate} [noitemsep,nolistsep]
  \item The union of all sets $X_i$ is $V$.
  \item For each edge $uv \in E$, there is a bag $X_i$ containing both $u,v$.
  \item For a vertex $v \in V$, all the bags containing $v$ make up a subtree of $\mathcal{T}$.  
  \end{enumerate}
The \emph{width} of a tree decomposition $\mathcal{T}$ is $\max_{ i \in I }|X_i| -1$ and the treewidth of $G$, denoted by $\tw$, is the minimum width among all possible tree decompositions of $G$.  A \emph{path decomposition} of a graph $G(V,E)$ is a tree decomposition where the underlying tree is a path and is denoted by $(\mathcal{X},\mathcal{P})$. The pathwidth of a $G(V,E)$, denoted by $\pw$, is defined similarly. 

 \subparagraph*{$\beta$-almost-embeddable} A graph $G$ is \emph{$\beta$-almost-embeddable} if there is a set of vertices $A \subseteq V(G)$ and $\beta+1$ graphs $G_0,G_1, \ldots, G_\beta$ such that: 
	\begin{enumerate}[nolistsep,noitemsep]
	\item $|A| \leq \beta$.
	\item  $G_0 \cup G_1 \cup \ldots \cup G_\beta = G[V\backslash A]$.
	\item $G_0$ is embeddable in a surface $\Sigma$ of genus at most $\beta$.
	\item Each $G_j$ has a path decomposition $(\mathcal{X}_j,\mathcal{P}_j)$ of width at most $\beta$ and length $|I_j|$, which is the number of bags, for $j \geq 1$.
	\item There are $\beta$ faces $F_1,F_2, \ldots, F_\beta$ of $G_0$ such that $|F_j| \geq |I_j|$, $G_0 \cap G_j  \subseteq V(F_j)$, and the vertices of $F_j$ appear in the bags of $\mathcal{X}_j$ in order along $\mathcal{P}_j$ for each $j$.
	\end{enumerate}
The vertices $A$ are called \emph{apices} and the graphs $\{G_1, G_2, \ldots, G_\beta\}$ are called \emph{vortices}. The vortex $G_j$ is said to be \emph{attached} to the face $F_j$, $1 \leq j \leq \beta$. 

 \subparagraph*{$\beta$-clique-sum} Given two graphs $H_1,H_2$, a graph $H$ is called a {\em $\beta$-clique-sum} of $H_1$ and $H_2$ if it can be obtained by identifying a clique  of size at most $\beta$ in each of two graphs $H_1,H_2$ and deleting some of the clique edges. 
 
\noindent We can now state Robertson and Seymour's result:

\subparagraph*{Graph Minor Structure Theorem} (Theorem~1.3~\cite{RS03}).
{\em An $H$-minor-free graph can be decomposed into a set of $\beta(|H|)$-almost-embeddable graphs that are glued together in a tree-like structure by taking $\beta(|H|)$-clique-sums where $\beta(|H|)$ is a function of $|H|$.}

\bigskip
\noindent It will  be convenient to consider a simplified decomposition that assumes there are no edges between the apices and vortices.  This simplification introduces zero-weight edges that do not change the distance metric of the graph. \ifFull\else We include the proof of this claim in Appendix~\ref{app:proofs}.\fi

\begin{claim}\label{clm:embeddable-const}
  There is a representation of a $\beta$-almost-embeddable graph as a $2\beta$-almost-embeddable graph that has no edge between apices and vertices of vortices (that are not in the surface-embedded part of the graph) and that maintains the distance metric of the graph.
\end{claim}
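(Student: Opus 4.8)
The plan is to reroute every edge joining an apex to an \emph{interior} vortex vertex through a private zero-weight copy of that apex placed inside the vortex, and to absorb these copies into the path decompositions of the vortices at the cost of roughly doubling the width. The starting observation is that $G[V\setminus A]$ has no edge between the interior of two distinct vortices: such an edge lies in neither $G_0$ nor in any single $G_i$, contradicting condition~(2) of the definition of $\beta$-almost-embeddable. Hence the only edges we must repair are those of the form $av$ with $a\in A$ and $v\in V(G_j)\setminus V(F_j)$ for some $j$; edges from apices to $V(F_j)\subseteq V(G_0)$ are permitted by the statement and may be left untouched.

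I would construct $G'$ as follows. Let $G$ be $\beta$-almost-embeddable with apex set $A$, surface part $G_0$ of genus at most $\beta$, vortices $G_1,\dots,G_\beta$ attached to faces $F_1,\dots,F_\beta$, where $(\mathcal{X}_j,\mathcal{P}_j)$ has width at most $\beta$. For each $a\in A$ and each $j\in\{1,\dots,\beta\}$ add a fresh vertex $a_j$ together with an edge $aa_j$ of weight $0$; for each edge $av$ of $G$ with $a\in A$ and $v\in V(G_j)\setminus V(F_j)$, delete $av$ and add $a_jv$ with the same weight. Keep $A$ as the apex set, keep $G_0$ as the surface part, and set $G_j' = G_j\cup\{a_j:a\in A\}$ together with the rerouted edges, equipped with the path decomposition $\mathcal{X}_j' = \{\,X\cup\{a_j:a\in A\}\ :\ X\in\mathcal{X}_j\,\}$ along the same underlying path $\mathcal{P}_j$ (padding with empty graphs and empty faces to reach $2\beta+1$ pieces if one wants equality). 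By construction $G'$ has no edge between $A$ and a vortex vertex outside the surface part.

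Two things then need checking. First, $G'$ is $2\beta$-almost-embeddable: $|A|\le\beta\le2\beta$; $G_0$ still embeds in genus $\le\beta\le2\beta$; each $\mathcal{X}_j'$ is a valid path decomposition of $G_j'$ — old edges stay in old bags, a new edge $a_jv$ sits in $X\cup\{a_i:i\}$ for any bag $X$ containing $v$, and each $a_j$ occupies the whole path — of width at most $(\beta+1)+\beta-1=2\beta$ and the same length $|I_j|$; and since $a_j\notin V(G_0)$ we still have $G_0\cap G_j' = G_0\cap G_j\subseteq V(F_j)$, $|F_j|\ge|I_j|$ unchanged, and inserting $a_j$ into every bag does not disturb the order in which the vertices of $F_j$ appear along $\mathcal{P}_j$; condition~(2) holds because the only new non-apex edges are the $a_jv$, assigned to $G_j'$. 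Second, $G'$ preserves distances: as $w(aa_j)=0$ we have $d_{G'}(a,a_j)=0$, so any path of $G$ using a rerouted edge $av$ is matched by the equal-weight walk $a\to a_j\to v$ in $G'$, giving $d_{G'}(x,y)\le d_G(x,y)$ for $x,y\in V(G)$; conversely, contracting all the zero-weight edges $aa_j$ in any walk of $G'$ (identifying each $a_j$ with $a$) yields a walk of $G$ of no greater weight, giving the reverse inequality. The same contraction also shows $w(\MST(G'))=w(\MST(G))$.

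The only real care is in the first of these checks: slotting the new vertices into the path decomposition while keeping the width bound and the ``vertices of $F_j$ appear in order'' requirement, and being sure that \emph{all} apex–vortex-interior edges have been removed — which is exactly where the observation that no edge runs between the interiors of two vortices is used.
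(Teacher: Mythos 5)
There is a genuine gap. Your construction leaves $a_j$ as an \emph{interior} vertex of $G_j'$ — you explicitly note $a_j \notin V(G_0)$ so that $G_0 \cap G_j' = G_0 \cap G_j$ — yet you also add the edge $aa_j$ with $a \in A$. That is precisely an edge from an apex to a vortex vertex outside the surface-embedded part, which is what the claim forbids. Your sentence ``By construction $G'$ has no edge between $A$ and a vortex vertex outside the surface part'' is therefore false: you have traded many such edges for exactly one per apex--vortex pair, but one is still too many. The distance-preservation and width-counting parts of your argument are fine; it is the ``no forbidden edges'' requirement that fails.

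The paper's proof closes this by an additional step that you omit: after splitting $a$ off into $a_V$ and rerouting the apex--interior edges through $a_V$, the new vertex $a_V$ is \emph{inserted onto the boundary cycle} of the face $F_j$ to which the vortex is attached. Concretely, one takes an edge $xy$ on $F_j$ lying between the first and last bags of the vortex and replaces it with $xa_V$ and $a_Vy$ (with weights equal to the corresponding shortest-path distances). This makes $a_V$ a vertex of $G_0$, so the surviving edge $aa_V$ now runs from the apex to a surface-embedded vertex, which the claim permits. Since this insertion only subdivides a face boundary it does not change the genus, and $a_V$ appearing in every bag of $\mathcal{X}_j$ is consistent with condition (5) because it is placed at one end of the face order. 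Your width and distance calculations would then go through essentially unchanged (the extra surface edges $xa_V$, $a_Vy$ have weight equal to the distance, so they do not shorten any path). So the fix is local, but without promoting $a_j$ (your $a_V$) into $G_0$ the construction does not actually establish the claim.
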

\ifFull
\begin{proof}
Let $V_a$ be the set of vertices in vortex $V$ that are adjacent to apex $a$.  Split vertex $a$ into two vertices $a$ and $a_V$ connected by a zero-weight edge so that $a_V$'s neighbors are $V_a \cup \{a\}$ and so that contracting the zero-weight edge gives the original graph.  Add $a_V$ to all of the bags of the path decomposition of $V$.  Now all the edges that connected $a$ to $V$ are within the vortex.  

  Consider the face in the surface-embedded part of the $\beta$-almost-embeddable graph to which $V$ is attached and let $xy$ be edge in that face that is between the first and last bags of $V$ and such that $y$ is in the first bag of $V$.  Add the edges $xa_V$ and $a_Vy$ to the embedded part of the graph and give them weight equal to the distance between their endpoints.  Now $xa_V$ is the edge in that face that is between the first and last bags of the vortex and $a$ is adjacent to a vertex that is in the surface-embedded part of the $\beta$-almost-embeddable graph.  See Figure~\ref{fig:apex-reduction}.

  The {\em splitting} of $a$ into $a_v$ increases the pathwidth of the vortex by 1.  Repeating this process for all apex-vortex pairs increases the pathwidth of each vortex by at most $\beta$.
\begin{figure}[h] 
  \centering
  \vspace{-20pt}
  \input{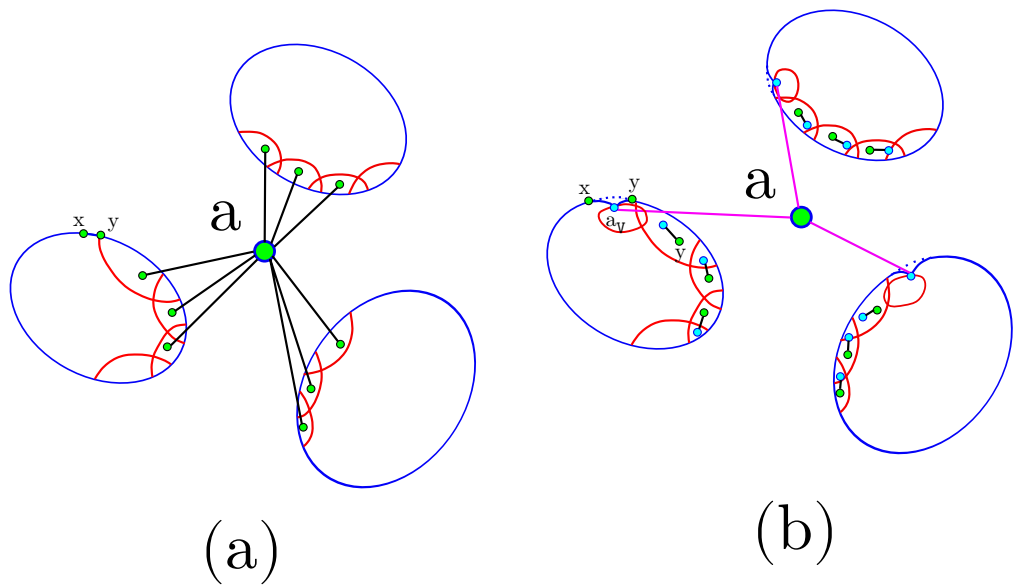}
  \vspace{-15pt}
  \caption{Apex $a$ and the vortices it is adjacent to before (a) and after (b) the reduction of Claim~\ref{clm:embeddable-const}}
  \label{fig:apex-reduction}
\vspace{-10pt}
\end{figure}
\end{proof}
\fi

In the remainder of this section, we prove Theorem~\ref{thm:h-minor-spanner} by assuming that, if $S$ is the greedy $(1+\epsilon)$-spanner of a graph of treewidth $\tw$:
\begin{equation}
w(S) \le g(\tw,\epsilon)\cdot w(\MST)\label{eq:tw-span}
\end{equation}
The bulk of the technical detail in dealing with $H$-minor free graphs is in handling the vortices, which we do first.
\subsection{Handling vortices}

  In this subsection, we consider a greedy $(1+\epsilon)$-spanner $G$ of some apex-free $\beta$-almost-embeddable graph.  We will show that:
\begin{equation}
  \label{eq:weightJ}
  w(G) \le q(\beta,\epsilon)\cdot w(\MST(G))
\end{equation}
where $q(\beta,\epsilon)$ is the lightness that only depends on $\beta$ and $\epsilon$. Note that the MST of the graph is the MST of the spanner. We assume that $G$ is connected since we can bound the weight of each component separately. Let $G = G_0 \cup G_1 \cup \ldots \cup G_\beta$ be the decomposition of $G$ into a graph $G_0$ embedded on a surface $\Sigma$ of genus at most $\beta$ and a set $G_1, \ldots, G_\beta$ vortices according to the definition of $\beta$-almost-embeddability.  Let $C_i$ be the cycle bounding the face of $G_0$ to which vortex $G_i$ is attached.

\vspace{1mm} \noindent {\bf Bounding the weight of the vortices}
First we bound the weight of the vortices and their bounding cycles $(\bigcup_{i = 1}^\beta C_i) \cup G_i$ by showing that $\MST(G) \cup (\bigcup_{i = 1}^\beta C_i \cup G_i)$ has bounded treewidth.

Let $K = (\MST(G) \cap G_0) \cup \bigcup_{i = 1}^\beta C_i$ and let $K^*$ be the dual of $K$; the vertices of $K^*$ correspond to the faces of $K$.  Consider a vertex $v$ of $K^*$ that does not correspond to a face bounded by a cycle in $\{C_1,C_2,\ldots, C_\beta\}$.  Then $v$ must be adjacent to a face that is bounded by a cycle $C_j$ for some $j$ because $K \setminus (\bigcup_{i = 1}^\beta C_i)$ is a forest.  Therefore, the diameter of $K^*$ is $O(\beta)$.
Since a graph of genus $\beta$ has treewidth $O(\beta \cdot \text{diameter})$ (Eppstein, Theorem 2~\cite{Eppstein00}), $K^*$ has treewidth $O(\beta^2)$.  Since the dual of a graph of treewidth $\tw$ and genus $\beta$ has treewidth $O(\tw+\beta)$ (Mazoit, Proposition 2~\cite{Mazoit12}), $K$ has treewidth $O(\beta^2)$. 

Grohe showed that if $G_i$ is a vortex attached to a face of $K$, then  $\tw(G_i \cup K) \leq (\pw(G_i) + 1)(\tw(K) + 1)-1$ (Lemma 2~\cite{Grohe03}). Adding in each of the vortices $G_1, G_2, \ldots, G_\beta$ to $K$  and using Grohe's result gives that the treewidth of $\MST(G) \cup (\bigcup_{i = 1}^\beta C_i \cup G_i)$ is $O(\beta^{\beta+2})$. 
Since $\MST(G) \cup (\bigcup_{i = 1}^\beta C_i \cup G_i)$ is the subgraph of a greedy spanner, $\MST(G) \cup (\bigcup_{i = 1}^\beta C_i \cup G_i)$ is a greedy spanner itself (Lemma~\ref{lm:hereditary-prop}) and by Equation~(\ref{eq:tw-span}), we have:
\begin{equation}
  \label{eq:vortices}
  \textstyle\sum_{i=1}^\beta \left(w(C_i) + w(G_i)\right) \le g(O(\beta^{\beta+2}),\epsilon)\cdot w(\MST(G)).
\end{equation}

\vspace{1mm} \noindent {\bf Bounding the surface-embedded part of the spanner}
Let $\widehat{G}$ be the graph obtained from $G_0$ by contracting the cycles  $\{C_1,C_2, \ldots, C_\beta\}$ bounding the vortices into vertices $\{c_1,c_2, \ldots,c_\beta\}$, removing loops and removing parallel edges. Let $T_{\widehat{G}}$ be the minimum spanning tree of $\widehat{G}$ and let $X$ be the set of edges that has smallest summed weight such that cutting open the surface $\Sigma$ along $T_{\widehat{G}} \cup X$ creates a disk; $|X| \le 2\beta$ (see, e.g.~Eppstein~\cite{Eppstein03}).  Since $T_{\widehat{G}} \subseteq \MST(G)$ and an edge in the spanner is also the shortest path between its endpoints, we have:
	\begin{equation} \label{eq:light-Hx}
	w(T_{\widehat{G}} \cup X) \leq (2\beta+1)w(\MST(G))
	\end{equation}
Cutting the surface open along $X \cup T_{\widehat{G}} \cup \left(\bigcup_{i=1}^\beta C_i \right)$ creates $\beta+1$ disks: one disk for each face that a vortex is attached to and one disk $\Delta$ corresponding to the remainder of the surface.  The boundary of $\Delta$, $\partial \Delta$, is formed by two copies of each of the edges of $T_{\widehat{G}} \cup X$ and one copy of each of the edges in $\cup_{i=1}^\beta C_i$ (see Figure~\ref{fig:cutting-surface} in  Appendix~\ref{app:figures}). Therefore we can use Equations~(\ref{eq:vortices}) and~(\ref{eq:light-Hx}) to bound the weight of the boundary of $\Delta$:
\begin{equation}
  \label{eq:delta}
  w(\partial \Delta) \leq (4 \beta + 2 + g(O(\beta^{\beta+2}),\epsilon))\cdot w(\MST(G))
\end{equation}
\ifFull
\begin{figure}[hb] 
  \centering
  \vspace{-20pt}
  \input{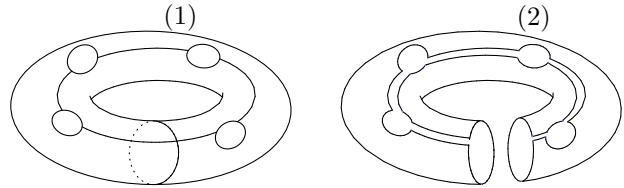}
  \vspace{-15pt}
  \caption{The surface $(2)$ is obtained by cutting the surface $\Sigma$ in $(1)$ along $T_{\widehat{G}} \cup X$. Small ovals are cycles $C_1,C_2,\ldots, C_\beta$.}
  \label{fig:cutting-surface}
\end{figure}
\fi
Let $E_\Delta$ be the set of edges of $G$ that are not in $\MST(G)$, $X$ or any of the vortices or their boundaries.  That is, $E_\Delta$ contains all the edges that we have not yet bounded.  Since $\partial \Delta$ spans $G_0$, there is a 1-simple charging scheme to $\partial \Delta$ (less an edge, Lemma~\ref{lm:disk-charging}).  Therefore, by Lemma~\ref{lem:charging} and Lemma~\ref{lm:hereditary-prop},
\begin{equation}
  \label{eq:rest}
  w(E_\Delta) \le (1 + 1/\epsilon)\cdot (4 \beta + 2 + g(O(\beta^{\beta+2}),\epsilon))\cdot w(\MST(G))
\end{equation}

\vspace{1mm} \noindent {\bf Total weight of the spanner} Since every edge of $G$ is either in $\MST(G)$, $X$, $E_\Delta$ or $G_i \cup C_i$ (for some $i$), summing Equations~(\ref{eq:vortices}),~(\ref{eq:light-Hx}) and~(\ref{eq:rest}) gives us Equation~(\ref{eq:weightJ}):
\begin{equation}
	\begin{split}
	w(G) &\le \underbrace{\left(2 + 2\beta + (1+1/\epsilon)(4 \beta + 2 + g(O(\beta^{\beta+2}),\epsilon)) + g(O(\beta^{\beta+2}), \epsilon) \right)}_{q(\beta, \epsilon)}w(\MST(G))\\
	\end{split}
\end{equation}

\subsection{Adding apices and clique-sums}

We are now ready to 
prove Theorem~\ref{thm:h-minor-spanner} by considering the apices and clique-sums of the decomposition.
Let $G = J_1\oplus J_2\oplus \ldots \oplus J_\gamma$ be the $\beta(|H|)$-clique-sum of $\beta(|H|)$-almost-embeddable graphs $J_1, J_2, \ldots, J_\gamma$ given by the Graph Minor Structure Theorem. For $J_i$, let $A_i$ be its set of apices, let $J_i^1, \ldots, J_i^\beta$ be its set of vortices and let $J_i^0$ be the graph embedded on a surface of genus at most $\beta$, as provided by the definition of $\beta(|H|)$-almost-embeddable.  We assume the representation includes no edges between apices and the internal vertices of vortices (vertices that are not in $J_i^0$) by Claim~\ref{clm:embeddable-const}.

Let $S$ be the greedy $(1+\epsilon)$-spanner of $G$. For each $J_i$, we define $S_i$ as the set of spanner edges in the apex-free $\beta(|H|)$-almost-embeddable part of $J_i$ (formally, $S_i = S \cap (\cup_\ell J_i^\ell)$).  Consider the spanning forest of $S_i$ that is induced by $\MST(G)$: $F_i = \MST(G) \cap S_i$.  We choose a subset of edges $E_i$ of $S_i \backslash F_i$ such that:
	\begin{enumerate}[nolistsep,noitemsep]
	\item[(i)] The number of components of $F_i \cup E_i$ is minimized.
	\item[(ii)] Subject to (i), the size of $E_i$ is minimized.
	\item[(iii)] Subject to (i) and (ii), the weight of $E_i$ is minimized.
     \end{enumerate}	 
By the choice of $E_i$ and since $A_i$ has no edges to the internal vertices of vortices, each tree of $F_i \cup E_i$ is a minimum spanning tree for each apex-free $\beta(|H|)$-almost-embeddable component of $S_i$.  Since $S_i$ is a subgraph of a greedy spanner, it is its own greedy spanner (Lemma~\ref{lm:hereditary-prop}) and so, by Equation~(\ref{eq:weightJ}), we have $w(S_i) \leq q(\beta(|H|), \epsilon)(w(F_i) + w(E_i))$. Summing over $i$, we have:	
	\begin{equation}\label{eq:H-i-weight}
	\begin{split}
		\textstyle\sum_{i=1}^\gamma w(S_i) & \leq q(\beta(|H|), \epsilon)\textstyle\sum_{i=1}^\gamma \Big(w(F_i) + w(E_i)\Big)\\  &\leq q(\beta(|H|), \epsilon)\Big(w(\MST(G)) + \textstyle\sum_{i=1}^\gamma w(E_i)\Big)
	\end{split}
	\end{equation}
Let $S(A_i)$ be edges of $S$ incident to vertices in $A_i$. Then, $S \cap J_i = S_i \cup S(A_i)$ and hence $S = \bigcup_{i=1}^\gamma (S_i \cup S(A_i))$. Therefore, we have:
	\begin{equation}\label{eq:spanner-weight}
	w(S) \leq \textstyle\sum_{i=1}^\gamma w(S_i) + \textstyle\sum_{i=1}^\gamma w(S(A_i))
	\end{equation}
Now define $J_i' = F_i \cup E_i \cup S(A_i)$. Then $\tw(J_i') \leq |A_i| + 1 \leq \beta(|H|) + 1$. Let $G' = J'_1 \oplus J'_2 \oplus \ldots \oplus J'_\gamma$.  We get that $\tw(G') \leq \max_i \tw(J'_i)  \leq \beta(|H|) + 1$ by a result of Demaine et al.~(Lemma 3~\cite{DHNRT04}). Note that $\MST(G') = \MST(G)$. We have (by Lemma~\ref{lm:hereditary-prop} and Equation~(\ref{eq:tw-span})): 
	\begin{equation} \label{eq: clique-sum-weight}
	 \textstyle\sum_{i=1}^\gamma w(F_i) + w (E_i) + w(S(A_i)) \leq g(\beta(|H|) + 1,\epsilon) \cdot w(\MST(G))
	\end{equation}
By Equations~\ref{eq:H-i-weight},~\ref{eq:spanner-weight} and~\ref{eq: clique-sum-weight}, we get  Theorem~\ref{thm:h-minor-spanner}:
	\begin{equation*}
	w(S) \leq \Big( (q(\beta(|H|), \epsilon) + 1)\cdot g(\beta(|H|) + 1,\epsilon)  + q(\beta(|H|), \epsilon)\Big)w(\MST(G) ).
	\end{equation*}	 

\section{The greedy spanner for bounded pathwidth graphs is light}\label{sc:bounded-pw}

Grigni and Hung proved that graphs of pathwidth $\pw$ have a $(1+\epsilon)$-spanner of lightness $O(\pw^3/\epsilon)$~\cite{GH12}. They do so by building a spanning tree that is {\em monotone} with respect to the path decomposition of weight $O(\pw^2)\,w(\MST)$ (Lemma 2~\cite{GH12}) and devising what we observe to be  an $O(\pw)$-simple charging scheme to the monotone spanning tree (Lemma 3~\cite{GH12}). \ifFull Unfortunately, monotonicity is highly dependent on the graph decomposition being a path and not a tree and these techniques are unlikely to generalize to tree decompositions. Grigni and Hung~\cite{GH12} constructed an example in which the monotone spanning tree of a bounded treewidth graph has weight $\Omega(\log(n))w(\MST)$. \fi We prove that graphs of pathwidth $\pw$ have light greedy $(1+\epsilon)$-spanners by showing that there is an $O(\pw^2)$-simple charging scheme to the $\MST$, forgoing the need for constructing a monotone spanning tree, giving Theorem~\ref{thm:tw-spanner}. Our proof gives an evidence that one can avoid the pathwidth-specific monotonicity argument, opening a door to show that graphs of bounded treewidth may have light greedy $(1+\epsilon)$-spanners as well.  We discuss the challenges for bounded treewidth graphs at the end of the paper. Throughout this section $G$ refers to the greedy $(1+\epsilon)$-spanner of some graph of pathwidth $\pw$.



\subparagraph*{Smooth decompositions} It will be convenient for our proofs to work with a standardized path decomposition. We assume that bags are ordered linearly, i.e, $X_i$ and $X_{i+1}$ are adjacent ($1 \leq i \leq |I|-1$), and the path decomposition is {\em smooth}. A path decomposition $(\mathcal{X},\mathcal{P})$ is \emph{smooth} if $|X_i| = \pw+1 \quad \forall i \in I$ and $|X_i \cap X_{i+1}| = \pw $  for all  $ 1 \leq i \leq |I|-1$. Bodlaender~\cite{Bodlaender93B} showed that a path decomposition can be turned into a smooth path decomposition of the same width in linear time. We root the path decomposition $(\mathcal{X},\mathcal{P})$ at the bag $X_{|I|}$. 
 
For adjacent bags $X_i,X_{i+1}$, we call the vertex in $X_{i+1} \backslash X_{i}$ the \emph{introduced} vertex of $X_{i+1}$ and the vertex in $X_{i} \backslash  X_{i+1}$ the \emph{forgotten} vertex of $X_i$. All vertices of $X_1$ are introduced vertices and all vertices of $X_{|I|}$ are forgotten vertices. 

\subparagraph*{Overview: designing an $O(\pw^2)$-simple charging scheme} In designing  an $O(\pw^2)$-simple charging scheme, one needs to guarantee (i) each non-tree edge is charged at most once and (ii) each tree edge\footnote{A tree edge is an edge of the minimum spanning tree.} is charged at most $O(\pw^2)$ times. At high level, we use the charging scheme for edges in $X_0 \cup \ldots \cup X_{i-1}$ to design a charging scheme for the edges introduced to $X_i$.  Let $u$ be the introduce vertex of $X_i$.  We need to define charging pairs for all non-tree edges between vertices of $u$ and $X_i \setminus \{u\}$.

The simpler case is when there is a tree edge $uv$ incident to $u$ in $X_i$.  For a non-tree edge $wu$ incident to $u$ in $X_i$, we define a charging path for $wu$ using the edges $uv$ (a tree edge) and $vw$ (an edge that already has a defined charging path since $vw$ is in a descendant bag of $X_i$).  (This will be formalized as the {\em triangle rule}.)  However, to guarantee condition (i), we must prevent the use of $wu$ in charging paths in the future.  We keep track of this by way of a {\em charging forest} whose vertices are the edges of $G$;  in this case we add an edge to the charging forest connecting $uv$ and $wu$.



The harder case is when there is no tree edge incident to $u$ in $X_i$.  In this case, we consider the $u$-to-$v$ spanning-tree path that contains only edges of ancestor bags of $X_i$.  We use this path as a sit-in for the edge $uv$ of the previous case.  To guarantee condition (ii), we must be careful to not use tree-edges in ancestor bags too many times.  Since $u$ may have an ancestral spanning-tree path to multiple vertices of $X_i \setminus \{u\}$, we delay the choice of which paths to use in defining a charging pair for edges incident to $u$ by adding {\em dashed edges} to the charging forest corresponding to all possible constructions.  Then, to achieve condition (ii), we carefully select which dashed edges to convert in defining the charging pairs for edges incident to $u$ in $X_i$.

\subparagraph*{Normalized graph} We simplify the presentation of the formal argument, we use a {\em normalized graph} which merges a graph $G$ with its smooth path decomposition $(\mathcal{X},\mathcal{P})$. For each bag $X_i$, define the \emph{bag graph} $G_i = (X_i, E_i)$ to be a subgraph of $G$ where $E_i$ is a maximal subset of edges of $G[X_i]$ incident to introduced vertices of $X_i$. This implies that each edge of $\MST(G)$ appears in exactly one bag graph. 
 For adjacent bags $X_i,X_{i+1}$, we add edges between two copies of the same vertex of $G$ in $G_i$ and $G_{i+1}$. We call the resulting graph the \emph{normalized graph} of $G$ with respect to the path decomposition $(\mathcal{X},\mathcal{P})$ and denote it by $G_\mathcal{P}(V_\mathcal{P},E_\mathcal{P})$. We assign weight 0 to edges between bag graphs and weight $w(e)$ for the copy of the edge $e$ in $G$. See Figure~\ref{fig:normalized-graph} in Appendix~\ref{app:figures} for an example.
\ifFull For an example of a normalized graph,  
\begin{figure}[h] 
  \centering
  \vspace{-20pt}
  \input{../figs/normalized-graph.tex}
  \vspace{-15pt}
  \caption{The normalized graph (3) of a graph (1) with path decomposition (2).} \label{fig:normalized-graph}
\vspace{-10pt}
\end{figure}
\else
\fi
Since the distances between vertices in $G$ and the distances between their copies in $G_\mathcal{P}$ are the same:
\ifFull		\begin{equation} \label{eq:mst-preserving}
		w(\MST(G_\mathcal{P})) = w(\MST(G))
		\end{equation}
\else $w(\MST(G_\mathcal{P})) = w(\MST(G))$. \fi
Further, since distances are preserved, we can define a charging scheme for the greedy $(1+\epsilon)$-spanner of $G_\mathcal{P}$ to $\MST(G_\mathcal{P})$.  In fact, we prove:

\begin{theorem} \label{thm:simple-charging-mst}
There is a $O(\pw^2)$-simple charging scheme for the greedy $(1+\epsilon)$-spanner of $G_\mathcal{P}$ to $\MST(G_\mathcal{P})$.
\end{theorem}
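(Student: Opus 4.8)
The plan is to build the charging scheme incrementally, processing the bags of the smooth path decomposition in increasing order $X_1, X_2, \ldots$ (from the leaves of $\mathcal{P}$ toward the root $X_{|I|}$), and to carry along a \emph{charging forest} $\mathcal{F}$ whose node set is $E(G_\mathcal{P})$ and whose arcs are of two kinds: \emph{solid} arcs recording committed ``charged-to'' relationships (oriented so that every maximal directed path ends at an edge of $\MST(G_\mathcal{P})$), and \emph{dashed} arcs recording candidate relationships that have not yet been committed. The invariant after processing $X_1, \ldots, X_i$ is: every non-tree edge lying in one of the bag graphs $G_1, \ldots, G_i$ has been assigned a charging pair $(e, P(e))$ with $P(e) \subseteq E(G_\mathcal{P})$ projecting to a simple path of $G$ between the endpoints of $e$ (so Equation~(\ref{eq:edge-path-ineq}) applies to it); the solid part of $\mathcal{F}$ is a forest in which each non-tree edge has in-degree at most one; and each dashed arc is incident only to edges whose endpoints all still lie in $X_i$. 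Since each MST edge lives in exactly one bag graph, at step $i$ it suffices to define charging pairs for the non-tree edges of $G_i$ incident to the introduced vertex $u$ of $X_i$.

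When some MST edge $uv$ lies in $G_i$, I apply the \emph{triangle rule}: for each non-tree edge $uw \in G_i$, I use that $v$ and $w$ already lie in the descendant bag $X_{i-1}$ and hence the pair $\{v,w\}$ has already been handled, take the charging path $P(uw)$ to consist of $uv$ together with the routing already established between $v$ and $w$, add the solid arc $uw \to vw$ to $\mathcal{F}$, and retire any dashed arc at $uw$. When no MST edge incident to $u$ lies in $G_i$ --- the hard case --- $u$'s MST edges all go to later bags, so for each non-tree edge $uw \in G_i$ the unique MST path from $u$ to $w$ leaves $X_i$ through ancestor bags; I record each such candidate routing as a dashed arc, and then \emph{carefully select} a subset of the dashed arcs at $u$ to promote to solid so that the induced charging paths are valid $u$-to-$w$ paths, each non-tree edge keeps in-degree at most one, no cycle is created, and no MST edge on the selected ancestral paths becomes overloaded. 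Since $|X_i| = \pw+1$, at most $O(\pw)$ non-tree edges are handled at each bag, and the selection guarantees that each MST edge is routed through by the hard-case paths of $O(\pw)$ bags; combined with the $O(\pw)$ direct charges an MST edge can receive from triangle rules in its own bag, every MST edge ends up charged $O(\pw^2)$ times while every non-tree edge is charged at most once.

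The verification then has three pieces: (i) acyclicity and the at-most-once property for non-tree edges are read directly off $\mathcal{F}$, whose solid part is a forest with sinks at MST edges; (ii) the $O(\pw^2)$ bound on MST edges is the accounting just sketched, the one point requiring care being that the dashed-arc selection in the hard case can be carried out at every bag so that the per-MST-edge load does not accumulate beyond $O(\pw)$ waves; and (iii) the edge-path inequality holds because every constructed charging path projects to a simple path in the greedy spanner $G$ and hence satisfies Equation~(\ref{eq:edge-path-ineq}), so the result is a genuine $O(\pw^2)$-simple acyclic charging scheme to $\MST(G_\mathcal{P})$ (if it is more convenient to first build a weak charging scheme on a supergraph, Lemma~\ref{lem:weak} converts it). Together with $w(\MST(G_\mathcal{P})) = w(\MST(G))$, $w(G_\mathcal{P}) = w(G)$ and Lemma~\ref{lem:charging}, this yields Theorem~\ref{thm:tw-spanner}. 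The main obstacle is exactly the hard case: when the introduced vertex has no MST edge in its own bag, the naive routing along ancestral MST paths lets a single MST edge be charged by essentially every later bag that still sees it --- $\Omega(n)$ in the worst case --- so the crux of the proof is the dashed-arc mechanism in the charging forest, which postpones the choice of ancestral routing and then commits to only $O(\pw)$ of the candidates per forgotten vertex while still preserving acyclicity and the at-most-once guarantee for non-tree edges.
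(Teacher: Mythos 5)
Your high-level architecture matches the paper's: a line-graph-style charging forest whose nodes are edges of $G_\mathcal{P}$, committed (``bold'') arcs built by the triangle rule when the introduced vertex $u$ has an MST edge in its own bag, and deferred (``dashed'') arcs in the hard case, to be promoted selectively later. You also correctly identify the crux: naively routing along ancestral MST paths when $u$ is isolated in $\MST[X_i]$ lets a single MST edge be overcharged by $\Omega(n)$.

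However, your proposal does not actually contain a proof. The entire weight of the argument rests on the phrases ``carefully select'' and ``the selection guarantees that each MST edge is routed through by the hard-case paths of $O(\pw)$ bags,'' and neither the selection rule nor the reason it achieves the claimed bound is given. The informal accounting ``$O(\pw)$ candidates promoted per forgotten vertex $\times$ $O(\pw)$ bags that see a given MST edge'' is not a valid argument: an MST edge is \emph{forgotten} after one bag and can nonetheless keep appearing (as a subpath of pseudo-edges) in the routings chosen at arbitrarily many later bags, so the ``per-bag'' quantity you multiply by is not obviously $O(\pw)$ and is the very thing that needs to be bounded. The paper's proof supplies the missing machinery: an auxiliary \emph{contracted forest} $\Lambda_i$ tracking MST connectivity among current bag vertices, a system of unique \emph{ranks} on $\Lambda_i$-edges used to break ties when converting dashed arcs, four structural invariants about rooted/unrooted and dashed-free subtrees of the charging forest maintained simultaneously, and two combinatorial lemmas --- a disjointness argument (at most $\pw$ pseudo-triangles ``strongly contain'' any given contracted-forest path) and a nesting argument (at most $\pw$ distinct rank-classes of such paths through a given $\Lambda_i$-edge) --- whose product gives the $2\pw^2$ bound. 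None of this is sketched or replaced by an alternative mechanism in your writeup. So as it stands the proposal is a plausible plan with the same shape as the paper's proof, but the hard step you yourself flag as the crux is left entirely open.
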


\noindent This theorem, along with the equivalence of the metrics of $G_\mathcal{P}$ and $G$ and Lemma~\ref{lem:charging}, gives Theorem~\ref{thm:tw-spanner}.  To simplify the construction of the charging scheme, we assume that $G$ is a $k$-path; this is without loss of generality by Lemma~\ref{lem:weak}.

\subsection{The charging forest}

The main difficulty in defining the charging scheme is the existence of introduced vertices that are connected to the $\MST$ via an edge that is in an ancestor bag of the path decomposition. For the other types of introduced vertices, there is a triangle in the vertex's bag graph that allows us to pay for the non-tree edges incident to that vertex.  Throughout this section $\MST$ refers to $\MST(G_\mathcal{P})$.

To define the $O(\pw^2)$-simple charging scheme, we construct a charging forest $\Phi$ to guide the charging. The charging forest is a rooted spanning forest of the line graph\footnote{The nodes of the line graph of a graph $G$ are the edges of $G$; the edges of the line graph are between nodes whose corresponding edges of $G$ share an endpoint.} of $G$, with one tree rooted at a vertex corresponding to each edge of $\MST(G)$ (that is, each non-zero edge of the $\MST(G_P)$).   We call the nodes of $\Phi$ {\em $\phi$-vertices} and denote a $\phi$-vertex of $\Phi$ by $(u,v)$ where $uv$ is an edge of $G$.

We use three types of edges in constructing $\Phi$: \emph{dashed edges}, \emph{bold edges} and \emph{mixed edges}. 
We construct $\Phi$ iteratively; $\Phi_i$ will be an intermediate forest of the line graph of $G[\cup_{j \le i} X_i]$.  $\Phi_i$ may contain all three types of edges but $\Phi = \Phi_{|I|}$ will contain only bold and mixed edges.  From $\Phi_{i-1}$ to $\Phi_i$, a dashed edge may be deleted or converted to a mixed edge and newly added edges will either all be bold or all be dashed.  A {\em dashed-free tree} of $\Phi_i$ is a maximal tree of $\Phi_i$ that contains no dashed edge.  Trees of the intermediate forests may be unrooted, but each tree of an intermediate forest will contain at most one root.

We also maintain a \emph{contracted forest} $\Lambda_i$ spanning vertices of $X_i$. Intuitively, $\Lambda_i$ tells us the connections between vertices of $X_i$ in $\MST[\cup_{j \leq i} X_j]$. $\Lambda_i$ is used to handle introduced vertices that have no tree edges to other vertices of the same bag. We also assign unique positive rank to each edge of $\MST$ in $G$ and assign rank 0 to all edges in $\MST$ added to $G_\mathcal{P}$ by the normalizing process. Ranks of edges of $\MST$ are used to define edge-rank in $\Lambda_i$ as follows: the rank of an edge $uv$ in $\Lambda_i$, denoted by $r_i(uv),$ is the minimum rank over the edges in the $u$-to-$v$ path of $\MST[\cup_{j \leq i}X_i]$. We assign rank to each edge of $\MST$ in a way that the rank of each edge in $\Lambda_i$ is unique. 

\subparagraph*{The triangle rule} We say that an edge $((u,v),(u,w))$ of the line graph satisfies the triangle rule if  $vw \in \MST(G)$ and $(u,v)$ and $(u,w)$ are in distinct dashed-free trees, both of which do not contain roots (i.e.\ $\phi$-vertices that correspond to edges of $\MST(G)$).  We will add edges that satisfy the triangle rule to the charging forest.  To maintain the acyclicity of the intermediate charging forests, if adding $((u,v),(u,w))$ introduces a cycle, the most recently added dashed edge on the path in the charging forest from $(u,v)$ to $(u,w)$ is deleted.

\subparagraph*{Invariants of the charging forest} Let $\Phi_i$ be the charging forest for $G_\mathcal{P}[\cup_{j \le i} X_j]$. We say that a $\phi$-vertex $(u,v)$ of $\Phi_i$ is \emph{active} if $u,v \in X_{i+1}$. We will show that $\Phi_i$ satisfies the following invariants:
\begin{enumerate}[nolistsep,noitemsep]
\item[(i)] For two trees $T_1$ and $T_2$ of $\MST[\cup_{j \le i} X_j]$, all the $\phi$-vertices of the form $(u,v)$ such that $u \in T_1$ and $v \in T_2$ are in a common unrooted tree of $\Phi_i$. Further every unrooted tree of $\Phi_i$ contains $\phi$-vertices spanning components of $\MST[\cup_{j \leq i} X_j]$.
\item[(ii)] For a tree $T$ of $\MST[\cup_{j \le i} X_j]$, all the $\phi$-vertices of the form $(u,v)$ such that $u,v \in T$ are in rooted dashed-free trees of $\Phi_i$.
\item[(iii)] Each unrooted dashed-free tree of $\Phi_i$ contains at least one active $\phi$-vertex.
\item[(iv)] For any $j \leq i$ and any two distinct trees $T_1, T_2$ of $\MST[\cup_{\ell = j}^i X_\ell]$, $(y_{j,1},y_{j,2})$ and $(y_{i,1},y_{i,2})$ are in the same unrooted dashed-free tree of $\Phi_i$ where $y_{\ell,k} \in T_k \cap X_\ell$ for $\ell = i,j$ and $k = 1,2$.
\end{enumerate}

\subsubsection{Initializing the charging forest}

We define $\Phi_1$ as a forest of only bold edges.  Recall that $G_1$ is a complete graph so there is a $\phi$-vertex for every unordered pair of vertices in $X_1$.  We greedily include bold edges in $\Phi_1$ that satisfy the triangle rule.  Equivalently, consider the subgraph $H$ of the line graph of $G_1$ consisting of edges $((u,v),(u,w))$ where $vw \in \MST$;  $\Phi_1$ corresponds to any maximal forest of $H$ each tree of which contains at most one root ($\phi$-vertex corresponding to an edge of $\MST$).  

We arbitrarily assign an unique rank to each edge of $\MST[X_1]$ from the set $\{1,2, \ldots, m_1\}$ where $m_1$ is the number of edges of $\MST[X_1]$ so that the highest-ranked edge is incident to the forgotten vertex of $X_1$.  We define $\Lambda_1$ to be the forest obtained from $\MST[X_1]$ by contracting the highest-ranked edge incident to the forgotten vertex of $\MST[X_1]$. We observe that each edge in $\Lambda_1$ has a unique rank, since the rank of each edge of $\MST[X_1]$ is unique. \ifFull \else We prove that $\Phi_1$ satisfies four invariants in Appendix~\ref{app:proofs}.
\begin{claim}\label{clm:init-inv}
Charging forest $\Phi_1$ satisfies all invariants.
\end{claim}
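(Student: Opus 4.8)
The plan is to exploit that $\Phi_1$ is built from bold edges only, so every tree of $\Phi_1$ is dashed-free and invariant (iv) is essentially vacuous for $j=i=1$: it then merely asserts that any two cross $\phi$-vertices between two trees of $\MST[X_1]$ lie in a common (dashed-free) unrooted tree, which will follow from invariant (i). It therefore suffices to establish (i), (ii) and (iii). Throughout I would use the ``maximal forest of $H$ with at most one root per tree'' description of $\Phi_1$, where $H$ is the graph on $\phi$-vertices with $(u,v)$ adjacent to $(u,w)$ iff $vw\in\MST$; recall $G_1$ is complete, so the $\phi$-vertices are exactly the unordered pairs of $X_1$ and a $\phi$-vertex is a root iff its pair is an edge of $\MST[X_1]$. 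Call a $\phi$-vertex \emph{internal} if its two endpoints lie in one tree of $\MST[X_1]$ and \emph{cross} otherwise; roots are internal.

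Next I would record three structural facts about $H$. (a) A cross $\phi$-vertex between distinct trees $T_a,T_b$ of $\MST[X_1]$ is $H$-adjacent only to cross $\phi$-vertices between $T_a$ and $T_b$: an $H$-neighbor $(u,w)$ or $(v,w)$ of $(u,v)$ has $w$ an $\MST$-neighbor of $v$ or of $u$, hence in the same tree as $v$ or $u$. (b) All cross $\phi$-vertices between a fixed pair $T_a,T_b$ lie in one connected component of $H$: from $(u,v)$ move to $(u,v')$ along the $\MST[T_b]$-path from $v$ to $v'$, then to $(u',v')$ along the $\MST[T_a]$-path from $u$ to $u'$, each step being an $H$-edge by definition. (c) Every rootless internal $\phi$-vertex $(u,v)$ is connected in $H$ to a root: since $uv\notin\MST$, letting $w$ be the $\MST$-neighbor of $v$ on the $\MST$-path from $u$ to $v$ (which stays inside their common tree), $(u,v)$ is $H$-adjacent to $(u,w)$, whose endpoints are strictly closer in $\MST$; induction on this distance reaches a $\phi$-vertex whose pair is an $\MST$-edge.

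Then I would run a short exchange argument on the maximal forest $\Phi_1$: if two trees of $\Phi_1$ lying in a common component of $H$ could be joined by an $H$-edge without creating a cycle or a two-root tree, maximality is contradicted; hence every rootless tree of $\Phi_1$ spans an entire rootless component of $H$, and every $H$-component with $r\ge 1$ roots is partitioned into exactly $r$ rooted trees, so in particular every $\phi$-vertex of an $H$-component containing a root lies in a rooted tree. With fact (c) (and roots lying trivially in rooted trees) this puts all internal $\phi$-vertices in rooted dashed-free trees, giving (ii). With facts (a) and (b) the rootless components of $H$ are precisely the maximal cross-$\phi$-vertex sets, one per unordered pair of distinct trees of $\MST[X_1]$ (each such pair having a cross $\phi$-vertex since $G_1$ is complete); hence the unrooted trees of $\Phi_1$ are exactly these sets, which yields both halves of (i), and (iv).

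Finally, for (iii): let $f$ be the forgotten vertex of $X_1$; since $f$ occurs only in $X_1$, every $\MST$-edge at $f$ lies in $\MST[X_1]$, and as $\MST$ is a spanning tree (of a graph with at least two vertices) $f$ is non-isolated there, so the tree of $\MST[X_1]$ containing $f$ has a second vertex. An unrooted tree of $\Phi_1$ is the cross-$\phi$-vertex set of some pair $T_a\ne T_b$; as $f$ lies in at most one of $T_a,T_b$ and that tree has another vertex, we may pick $u\in T_a\setminus\{f\}$ and $v\in T_b\setminus\{f\}$, so $(u,v)$ is a $\phi$-vertex of this tree with $u,v\in X_1\cap X_2\subseteq X_2$, i.e.\ active. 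I expect the main obstacle to be making the maximality exchange argument of the third paragraph fully rigorous — in particular ruling out a leftover rootless tree strictly inside a rooted component of $H$ — together with being careful about the precise meaning of $\MST[X_1]$ and of the bag graph $G_1$ in degenerate cases (e.g.\ a single-bag decomposition, where there are no cross $\phi$-vertices and (iii) is vacuous).
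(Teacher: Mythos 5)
Your proof is correct and follows essentially the same structure as the paper's: both work from the ``maximal forest of $H$ with at most one root per tree'' description of $\Phi_1$ and verify the four invariants in turn, with (iv) reducing to (i). Your facts (a), (b), (c) make explicit what the paper leaves slightly implicit (in particular, that an $H$-edge out of a cross $\phi$-vertex between $T_a,T_b$ can only move within one coordinate and hence lands on another $T_a$--$T_b$ cross $\phi$-vertex), and your exchange argument on the maximal forest is sound: a leftover rootless tree inside a rooted $H$-component could always be joined to a neighboring $\Phi_1$-tree without creating a cycle or a two-root tree, contradicting maximality, so the worry you flag at the end does not in fact leave a gap.

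The one place you genuinely diverge from the paper is Invariant (iii). The paper argues that since $T_1,T_2$ must be joined in $\MST$ by a path passing through ancestor bags, both $T_1\cap X_2$ and $T_2\cap X_2$ are nonempty, invoking the separator property of the decomposition. You instead observe that the forgotten vertex $f$ of $X_1$ has all of its $\MST$-edges inside $X_1$ and is non-isolated in $\MST$, so $f$'s tree in $\MST[X_1]$ has a second vertex; since $X_1\setminus\{f\}=X_1\cap X_2$ and $f$ lies in at most one of $T_a,T_b$, an active cross $\phi$-vertex between them always exists. Your route reaches the same conclusion via a slightly more elementary local argument that needs only smoothness of the decomposition, whereas the paper's argument appeals to a global $\MST$-path through ancestor bags. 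Both are valid; yours is a bit cleaner to make rigorous.
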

\fi
\subsubsection{Growing the charging forest}

We build $\Phi_i$ from $\Phi_{i-1}$ and show that $\Phi_i$ will satisfy the invariants using the fact that $\Phi_{i-1}$ satisfies the invariants.  Let $u$ be the introduced vertex of non-leaf bag $X_i$. If $u$ isolated in $\MST[X_i]$, we say that $u$ is a free vertex.  The construction of $\Phi_i$ from $\Phi_{i-1}$ depends on whether or not $u$ is free.
\subsubsection*{$u$ is a free vertex} 

Let $\Lambda_{i-1}$ be the contracted forest of $X_{i-1}$. To obtain $\Phi_i$ from $\Phi_{i-1}$, we add $\phi$-vertices $(u,w) \quad \forall w \in (X_i \backslash \{u\})$ and add dashed edges $((u,w_j),(u,w_k))$ for each edge $w_jw_k$ in $\Lambda_{i-1}$. We assign rank to the dashed edge $((u,w_j),(u,w_k))$ to be the rank of $w_jw_k$ in $\Lambda_{i-1}$. We additionally change some dashed edges to mixed edges which we describe below. Since converting dashed edges to mixed edges does not affect these invariants, we have: \ifFull
First, we show that $\Phi_i$ satisfies Invariants~(i),~(ii) and ~(iv), since converting dashed edges to mixed edges does not affect these invariants:
\begin{enumerate}[nolistsep,noitemsep]
\item[(i)] The only new tree of $\MST[\cup_{j \le i} X_\ell]$ compared to $\MST[\cup_{j \le i-1} X_\ell]$ is $u$.  For any pair of trees in $\MST[\cup_{\ell = j}^{i-1} X_\ell]$, Invariant~(i) holds for $\Phi_i$ because it helps for $\Phi_{i-1}$.  For a component $C$ of $\Lambda_{i-1}$, the addition of the edges $((u,w_j),(u,w_k))$ creates a new (unrooted) component spanning all $\phi$-vertices $(u,v)$ where $v$ is in the corresponding component of $\MST[\cup_{j \le i} X_j]$.  Therefore, Invariant~(i) holds for $\Phi_i$.
\item[(ii)] Since $u$ is free and no new $\phi$-vertices of the form $(u,v)$ where $u$ and $v$ are in the same tree of $\MST[\cup_{j \le i} X_j]$ are introduced, Invariant~(ii) holds for $\Phi_i$ because it holds for $\Phi_{i-1}$. 
\item[(iv)] As with $(i)$, the only new tree of $\MST[\cup_{\ell = j}^i X_\ell]$ is $u$, which only has a non-zero intersection with $X_i$.  So, for $j < i$, Invariant~(iv) holds for $\Phi_i$ because it holds for $\Phi_{i-1}$.  For $j = i$, all the components of $\MST[\cup_{\ell = j}^i X_\ell]$ are isolated vertices, so the invariant holds trivially.
\end{enumerate}
\else
\begin{claim}\label{cl:free-inv}
Charging forest $\Phi_i$ satisfies Invariants~(i),~(ii) and ~(iv).
\end{claim}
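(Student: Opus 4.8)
The plan is to verify each invariant for $\Phi_i$ by leaning on the fact that $\Phi_{i-1}$ already satisfies it, and observing that the step ``$u$ is free'' makes only a very localized change to both the $\MST$-forest structure and the charging forest. The crucial structural fact is that $\MST[\cup_{j\le i}X_j]$ differs from $\MST[\cup_{j\le i-1}X_j]$ only by the addition of the isolated vertex $u$ (since $u$ is free, it has no $\MST$-edge inside $X_i$, and by the smoothness of the decomposition no $\MST$-edge of the normalized graph involving $u$ lies in an earlier bag). The only new $\phi$-vertices are those of the form $(u,w)$ for $w\in X_i\setminus\{u\}$, and these are joined only by the newly-added dashed edges mirroring the edges of $\Lambda_{i-1}$. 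Since converting a dashed edge to a mixed edge changes neither which $\phi$-vertices lie in a common (unrooted or rooted) tree nor the partition into dashed-free trees is irrelevant here — for Invariants (i), (ii), (iv) we only care about tree membership — it suffices to analyze the forest immediately after adding the dashed edges.

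For Invariant (i): any pair of trees $T_1,T_2$ of $\MST[\cup_{j\le i-1}X_j]$ still has all its $\phi$-vertices $(x,y)$, $x\in T_1$, $y\in T_2$, in a common unrooted tree of $\Phi_i$, because $\Phi_{i-1}\subseteq\Phi_i$ (edges are only added, and deletions only remove dashed edges that lie on cycles, which cannot disconnect). For the new tree $\{u\}$ and any old tree $T$ of $\MST[\cup_{j\le i}X_j]$: the vertices of $T$ that lie in $X_i$ form a component of $\Lambda_{i-1}$ (here I would cite how $\Lambda_{i-1}$ records the connectivity of $X_{i-1}\cap X_i$ in $\MST[\cup_{j\le i-1}X_j]$, together with the invariant that $\Lambda_{i-1}$ spans $X_{i-1}$), so adding the dashed edges $((u,w_j),(u,w_k))$ for every edge $w_jw_k$ of $\Lambda_{i-1}$ connects all the $\phi$-vertices $(u,w)$ with $w\in T$ into one unrooted tree. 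The second sentence of (i) — every unrooted tree spans distinct $\MST$-components — follows since none of the new $\phi$-vertices $(u,w)$ is a root ($u$ is free, so $uw\notin\MST$), and $(u,w)$ does span $\{u\}$ and $w$'s component.

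For Invariant (ii): $u$ contributes no new $\phi$-vertex $(x,y)$ with $x,y$ in the same $\MST$-component, since $u$ is isolated and every new $\phi$-vertex has the form $(u,w)$ with $w$ in a different component; so the set of such ``within-a-tree'' $\phi$-vertices is unchanged from $\Phi_{i-1}$, and the rooted dashed-free trees containing them are unaffected by the (purely dashed) additions near $u$. For Invariant (iv): the only new tree of $\MST[\cup_{\ell=j}^i X_\ell]$ for any $j\le i$ is again $\{u\}$, which meets only $X_i$; so for $j<i$ the relevant pairs $(y_{j,1},y_{j,2})$, $(y_{i,1},y_{i,2})$ are the same as in $\Phi_{i-1}$ and lie in a common unrooted dashed-free tree by the inductive hypothesis (and adding edges elsewhere preserves this), while for $j=i$ every component of $\MST[X_i]$ incident to a relevant vertex is a singleton, so the statement is vacuous. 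I do not anticipate a serious obstacle in this particular claim; the one point requiring care is the justification that the $X_i$-vertices of each old $\MST$-component coincide with a component of $\Lambda_{i-1}$, i.e. that $\Lambda_{i-1}$ faithfully records $\MST$-connectivity of the current bag — this is where the bulk of the argument really lives, and I would state it as a consequence of how $\Lambda$ is maintained rather than re-deriving it here.
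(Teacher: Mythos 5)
Your proposal is correct and follows essentially the same route as the paper's own proof: you rely on the same key observations — the only new component of $\MST[\cup_{j\le i}X_j]$ is the singleton $\{u\}$, the only new $\phi$-vertices are those of the form $(u,w)$ joined by dashed edges mirroring $\Lambda_{i-1}$, and the dashed-to-mixed conversion leaves tree membership (hence Invariants (i), (ii), (iv)) unchanged — and then verify each invariant by deferring to the inductive hypothesis for $\Phi_{i-1}$. Your closing caveat, that the correspondence between components of $\Lambda_{i-1}$ and $\MST$-connectivity within the current bag deserves its own justification, is a fair observation, but the paper treats this the same way (stating it as part of how $\Lambda$ is maintained rather than re-proving it inside this claim).
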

We include the formal proof of Claim~\ref{cl:free-inv} in Appendix~\ref{app:proofs}.
\fi

\subparagraph*{Converting dashed edges to mixed edges} To ensure that $\Phi_i$ will satisfy Invariant~(iii), we convert some dashed edges to mixed edges.  First, consider the newly-added $\phi$-vertex $(u,v)$ where $v$ is the forgotten vertex of $X_i$.  To ensure that $(u,v)$ will be in a dashed-free tree that contains an active $\phi$-vertex, we convert the added dashed edge $((u,v),(u,w))$, that has highest rank among newly added dashed edges, to a mixed edge.  Second, suppose that $\tau$ is a dashed-free tree of $\Phi_{i-1}$ such that its active $\phi$-vertex is of the form $(v,w)$ for some $w \in X_{i}\backslash \{u\}$.  The tree $\tau$ is at risk of becoming inactive in $\Phi_i$.  However, by Invariant~(i), $\tau$ is a subtree of a component $\hat \tau$ of $\Phi_{i-1}$ that consists of $\phi$-vertices $(u_1,u_2)$ such that $u_k \in T_k$ where $T_k$ is a component of $\MST[\cup_{j \le i-1} X_i]$ for $k = 1,2$.
Further, since $T_1$ and $T_2$ must be connected in $\MST$ by a path through ancestor nodes of the path decomposition, there must be vertices in $T_k \cap X_{i+1}$; w.l.o.g., we take $(u_1,u_2)$ to be an active $\phi$-vertex in $\hat \tau$.  We greedily convert dashed edges in $\hat \tau$ to mixed edges to grow $\tau$ until such an active $\phi$-vertex is connected to $(v,w)$ by mixed and bold edges, thus ensuring that $\Phi_i$ will satisfy Invariant~(iii).  We break ties between dashed edges for conversion to mixed edges by selecting the dashed edges that were first added to the charging forest and if there are multiple dashed edges which were added to the charging forest at the same time, we break ties by converting the dashed edges that have higher ranks to mixed edges.  Note that by converting dashed edges into mixed edges, we add edges between dashed-free trees. Thus, Invariant~(i),~(ii),~(iv) are unchanged.

Finally, we need to update $\Lambda_i$ from $\Lambda_{i-1}$. Let $\{v_1,v_2, \ldots, v_q\}$ be the set of neighbors of $v$ such that $r_{i-1}(vv_1) \leq \ldots \leq r_{i-1}(vv_q)$. Delete $v$ from $\Lambda_{i-1}$ and if $q \geq 2$, add edges $v_jv_{j+1}$ ($1 \leq j \leq q-1$) to $\Lambda_{i-1}$ to obtained the forest $\Lambda_i$. Then, by definition, $r_i(v_{j}v_{j+1}) = r_{i-1}(vv_j)$ and $r_i(xy) = r_{i-1}(xy)$ for all other edges of $\Lambda_i$. Thus, by induction hypothesis, the rank of each edge of $\Lambda_{i}$ is unique.

\subsubsection*{$u$ is not a free vertex}

To build $\Phi_i$ from $\Phi_{i-1}$, we add $\phi$-vertices $(u,w) \forall w \in X_i \backslash \{u\}$ and greedily add bold edges that satisfy the triangle rule. We include the formal proof that $\Phi_i$ satisfies Invariant~(i),~(ii), and ~(iv) in Appendix~\ref{app:proofs}.
 \ifFull In the case that $u$ has no tree edge to the forgotten vertex of $X_i$, we will additionally change some dashed edges to mixed edges which we describe below; since this only affects Invariant~(i), we describe these conversions when we discuss Invariant~(iii).  We prove that $\Phi_i$ satisfies each of the invariants in turn.

\vspace{1mm} \noindent {\bf Invariant (i)} Let $T_1$ and $T_2$ be distinct trees of $\MST[\cup_{j \leq i} X_j]$.  If $T_1$ and $T_2$ are distinct trees of $\MST[\cup_{j \leq i-1} X_j]$, then the invariant holds for $\Phi_i$ because it holds for $\Phi_{i-1}$.  Otherwise, we may assume w.l.o.g.\ that $T_1$ contains $u$ and a subtree $T_3$ that is a component of $\MST[\cup_{j \leq i-1} X_j]$.  Let $x$ be a vertex of $T_3 \cap X_i$ and $w$ be a vertex of $T_2 \cap X_i$.  Then $ux$ is an edge of $\MST$ and $(u,w),(w,x)$ will be connected in $\Phi_i$ by greedy applications of the triangle rule.  By the same argument as used for showing that $\Phi_1$ satisfies Invariant~(i), $(u,w)$ will not be connected to a root $\phi$-vertex (a $\phi$-vertex corresponding to a tree edge) in $\Phi_i$.	

\vspace{1mm} \noindent {\bf Invariant (ii)} We need only prove this for the tree $T$ of $\MST[\cup_{j \leq i} X_i]$ that contains $u$ as other cases are covered by the fact that  $\Phi_{i-1}$ satisfies Invariant~(ii).  Let $T_1$ and $T_2$ be distinct trees of $\MST[\cup_{j \leq i-1} X_i]$ that are subtrees of $T$.  Let $u_1$ and $u_2$ be vertices of $T_1 \cap X_i$ and $T_2 \cap X_i$, respectively.  

We start by showing that $(u_1,u_2)$ is in a rooted dashed free tree of $\Phi_i$.  Let $v_1$ and $v_2$ be the neighbors of $u$ in $T_1$ and $T_2$, respectively.  (Note, it may be that, e.g., $u_1 = v_1$.)  By Invariant~(ii), $(v_2,u_2)$ is in a rooted dashed-free tree of $\Phi_{i-1}$.  By the triangle rule, $(v_1,u_2)$ will be connected by (a possibly non-trivial sequence of) bold edges to $(u,u_2)$ which in turn will be connected by (a possibly non-trivial sequence of) bold edges to $(v_2,u_2)$.  Therefore, $(v_1,u_2)$ will be in a rooted dashed-free tree of $\Phi_{i}$. In this next paragraph, we show that $(u_1,u_2)$  and $(v_1,u_2)$ are in the same dashed-free tree of $\Phi_{i-1}$, which we have just shown belongs to a rooted dashed-free tree of $\Phi_{i}$, showing that $(u_1,u_2)$ is in a rooted dashed free tree of $\Phi_i$.

To show that $(u_1,u_2)$  and $(v_1,u_2)$ are in the same dashed-free tree of $\Phi_{i-1}$, consider an index $k$ such that $u_1$ and $v_1$ are connected to $x_1$ in  $\MST[\cup_{\ell = k}^i X_\ell]$ and $u_2$ is connected to $x_2$ in $\MST[\cup_{\ell = k}^i X_\ell]$.  Then by Invariant~(iv) for $\Phi_{i-1}$, $(x_1,x_2)$ is in the same dashed-free tree as $(u_1,u_2)$ and $(v_1,u_2)$. 

Now consider a $\phi$-vertex $(x_1,x_2)$ where $x_k \in T_k \cap X_j$  for $k = 1,2$ and $j < i$.  Let $T_k'$ be the subtree of $T_k$ that is in $\MST[\cup_{\ell = j}^{i-1} X_\ell]$ and let $u_k = T_k' \cap X_i$ for $k = 1,2$.  We just showed that $(u_1,u_2)$ is in a rooted dashed-free tree of $\Phi_{i}$.  By Invariant~(iv) for $\Phi_{i-1}$, $(x_1,x_2)$ and $(u_1,u_2)$ are in the same dashed-free tree.  Therefore, $(x_1,x_2)$ is in a rooted dashed-free tree of $\Phi_{i}$. 

\vspace{1mm} \noindent {\bf Invariant (iii)}  Let $v$ be the forgotten vertex of $X_i$.

If $u = v$, then any $\phi$-vertex that was active in $\Phi_{i-1}$ is still active in $\Phi_i$. Thus, $(u,w)$ is connected to existing $\phi$-vertices that remain active for any $w \in X_i \setminus \{u\}$. 

If $u \ne v$ and $uv \in \MST[X_i]$, then $\phi$-vertex $(x,v)$ will become inactive in $\Phi_i$.  However, $(x,v)$ is connected to $(x,u)$ by the triangle rule or it was already connected to a root in a dashed-free tree. In either case, $\Phi_i$ satisfy Invariant~(iii).

Otherwise, to ensure that previously active $\phi$-vertices of the form $(v,w)$ for $w \in X_{i}$ get connected to dashed-free trees that contain active $\phi$-vertices, we convert dashed edges to mixed edges.  We do so in the same way as for the above described method for when $u$ was a free vertex, guaranteeing that $\Phi_i$ will satisfy Invariant~(iii).

\vspace{1mm} \noindent {\bf Invariant (iv)} Consider $j \leq i$, trees $T_1, T_2$ of $\MST[\cup_{\ell = j}^i X_\ell]$, and $\phi$-vertices $(y_{j,1},y_{j,2})$ and $(y_{i,1},y_{i,2})$  as defined in Invariant~(iv).  For the case $j = i$, the proof that $\Phi_{i}$ satisfies Invariant~(iv) is the same as for $\Phi_1$.  Further, if $u \notin T_1,T_2$, $\Phi_{i}$ satisfies Invariant~(iv) because $\Phi_{i-1}$ satisfies Invariant~(iv), therefore, we assume w.l.o.g.\ that $u \in T_1$.  Finally, consider the components of $T_1$ in $\MST[\cup_{\ell = j}^{i-1} X_\ell]$; if $y_{j,1}$ and $y_{i,1}$ are in the same component, then $\Phi_{i}$ satisfies Invariant~(iv) because $\Phi_{i-1}$ satisfies Invariant~(iv).  Therefore, we assume they are in different components, $T_1^p$ and $T_1^q$, respectively.  

Let $x$ be the neighbor of $u$ in $T_1^p$.  By Invariant~(iv) for $\Phi_{i-1}$, $(x,y_{i,2})$ and $(y_{j,1},y_{j,2})$ are in a common unrooted dashed-free tree.  We show that $(x,y_{i,2})$ and $(y_{i,1},y_{i,2})$ are in a common unrooted dashed-free tree of $\Phi_i$, proving this invariant is held.  Let $y$ be the neighbor of $u$ in $T_1^q$.
\begin{enumerate}[nolistsep,noitemsep]
\item By the triangle rule, $(x,y_{i,2})$ will get connected by (a possibly non-trivial sequence of) bold edges to $(u,y_{i,2})$ because $xu \in \MST$ and $(u,y_{i,2})$ will get connected by (a possibly non-trivial sequence of) bold edges to $(y, y_{i,2})$ because $yu \in \MST$.
\item Let $k$ be the index such that $y$ and $y_{i,1}$ are connected to $a$ in $\MST[\cup_{\ell = k}^i X_\ell]$ and $y_{i,2}$ is connected to $b$ in $\MST[\cup_{\ell = k}^i X_\ell]$.  Then by Invariant~(iv), $(a,b)$ is in the same unrooted dashed-free tree as $(y, y_{i,2})$ and $(y_{i,1},y_{i,2})$.
\end{enumerate}
Together these connections show that $(x,y_{i,2})$ and $(y_{i,1},y_{i,2})$ are in a common unrooted dashed-free tree.

\else
\begin{claim}\label{cl:non-free-inv}
Charging forest $\Phi_i$ satisfies Invariants~(i),~(ii) and ~(iv).
\end{claim}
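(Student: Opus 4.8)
The plan is to verify Invariants~(i),~(ii) and~(iv) for $\Phi_i$ one at a time, each time invoking the corresponding invariant of $\Phi_{i-1}$ as an inductive hypothesis together with two facts available in this (non-free) case: (a) greedily adding bold edges that satisfy the triangle rule connects, by a (possibly empty) sequence of bold edges, $(u,w)$ to $(w,x)$ whenever $ux\in\MST$; and (b) exactly as in the argument that $\Phi_1$ satisfies Invariant~(i), no triangle-rule application ever merges a $\phi$-vertex into a tree that contains a root, since that would close a cycle through the root. These two facts, plus the observation that no new $\phi$-vertices are created outside the bag graph of $X_i$, will do most of the routine work.

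For Invariant~(i), take distinct trees $T_1,T_2$ of $\MST[\cup_{j\le i}X_j]$. If they are already distinct in $\MST[\cup_{j\le i-1}X_j]$, the invariant is inherited from $\Phi_{i-1}$; otherwise one of them, say $T_1$, is $u$ together with a component $T_3$ of the previous step. Picking $x\in T_3\cap X_i$ and $w\in T_2\cap X_i$, we have $ux\in\MST[X_i]$, so fact~(a) connects $(u,w)$ to $(w,x)$, and Invariant~(i) of $\Phi_{i-1}$ applied to $T_3,T_2$ completes the unrooted tree spanning all the relevant $\phi$-vertices, with fact~(b) ensuring no root is absorbed. The ``further'' clause of Invariant~(i) is inherited since the only new $\phi$-vertices lie in the bag graph of $X_i$.

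Invariants~(ii) and~(iv) are the substance of the claim. For~(ii) only the tree $T$ of $\MST[\cup_{j\le i}X_j]$ containing $u$ needs treatment. Given a $\phi$-vertex $(x_1,x_2)$ with $x_1,x_2\in T$, reduce to the two previous-step components $T_1,T_2$ that $u$ joins, with representatives $u_k\in T_k\cap X_i$ and $\MST$-neighbors $v_k$ of $u$ in $T_k$. Invariant~(ii) of $\Phi_{i-1}$ places $(v_2,u_2)$ in a rooted dashed-free tree; two triangle-rule chains (using $uv_1,uv_2\in\MST$) connect $(v_1,u_2)$ to $(u,u_2)$ and $(u,u_2)$ to $(v_2,u_2)$ by bold edges, so $(v_1,u_2)$ joins that rooted dashed-free tree. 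One then argues that $(u_1,u_2)$, $(v_1,u_2)$, and finally an arbitrary $(x_1,x_2)$ with $x_1,x_2\in X_j$ for $j<i$, all lie in that \emph{same} dashed-free tree of $\Phi_{i-1}$: choose the index $k$ at which $u_1,v_1$ become connected (to some $x_1$) and $u_2$ becomes connected (to some $x_2$) in $\MST[\cup_{\ell=k}^i X_\ell]$, and apply Invariant~(iv) of $\Phi_{i-1}$; a second application handles the general $j$. Invariant~(iv) for $\Phi_i$ is proved by the same pattern: the case $j=i$ is identical to the $\Phi_1$ argument, the case $u\notin T_1\cup T_2$ is inherited, so reduce to $u\in T_1$ with $y_{j,1},y_{i,1}$ in distinct previous-step components $T_1^p,T_1^q$ having $\MST$-neighbors $x,y$ of $u$; use Invariant~(iv) of $\Phi_{i-1}$ to place $(x,y_{i,2})$ with $(y_{j,1},y_{j,2})$, chain $(x,y_{i,2})\to(u,y_{i,2})\to(y,y_{i,2})$ via the triangle rule, then pick the merge index $k$ for $y,y_{i,1},y_{i,2}$ and apply Invariant~(iv) once more.

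The main obstacle is the nested use of Invariant~(iv) of $\Phi_{i-1}$ inside the proof of Invariant~(ii) (and the analogous self-reference inside Invariant~(iv)): one must identify the correct ``merge index'' $k$ and check that $(u_1,u_2)$, $(v_1,u_2)$ and $(x_1,x_2)$ are simultaneously tracked by a single dashed-free tree of $\Phi_{i-1}$, even though dashed edges may have been deleted or converted to mixed edges in earlier steps. Keeping the bookkeeping of which $\phi$-vertices are active and which trees are dashed-free precise is where all the care goes; the triangle-rule manipulations and the no-root argument are routine once facts~(a) and~(b) are in hand.
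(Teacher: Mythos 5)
Your proposal follows essentially the same route as the paper: inherit what is unchanged from $\Phi_{i-1}$, use bold triangle-rule chains through the new $\phi$-vertices $(u,w)$ to relink components that $u$ merges, and invoke Invariant~(iv) of $\Phi_{i-1}$ (with a carefully chosen merge index $k$) to show the relevant $\phi$-vertices already share a dashed-free tree. The decomposition into facts~(a) and~(b) and the identification of the nested use of Invariant~(iv) as the delicate point match the paper's argument both in structure and in the specific choices of $T_1, T_2, T_1^p, T_1^q$, the $\MST$-neighbors of $u$, and the merge index $k$.
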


We show that $\Phi_i$ satisfies Invariant~(iii) here. Let $v$ be the forgotten vertex of $X_i$. Note that $u$ may have no tree edge to the forgotten vertex of $X_i$. In this case, we will ensure that previously active $\phi$-vertices of the form $(v,w)$ for $w \in X_{i}$ get connected to dashed-free trees that contain active $\phi$-vertices, we convert dashed edges to mixed edges.  We do so in the same way as for the above described method for when $u$ was a free vertex, guaranteeing that $\Phi_i$ will satisfy Invariant~(iii). Otherwise, we consider two subcases:

\begin{enumerate}[noitemsep,nolistsep]
\item If $u = v$, then any $\phi$-vertex that was active in $\Phi_{i-1}$ is still active in $\Phi_i$. Thus, $(u,w)$ is connected to existing $\phi$-vertices that remain active for any $w \in X_i \setminus \{u\}$. 
\item If $u \ne v$, then $\phi$-vertex $(x,v)$ will become inactive in $\Phi_i$.  However, $(x,v)$ is connected to $(x,u)$ by the triangle rule or it was already connected to a root in a dashed-free tree. In either case, $\Phi_i$ satisfy Invariant~(iii).
\end{enumerate}

\fi 
\noindent Finally, we show how to update $\Lambda_i$. Let $v$ be the forgotten vertex of $X_i$ and $u_1,u_2, \ldots, u_p$ be neighbors of $u$ in $\MST[X_i]$ such that $u_1 = v$ if $uv \in \MST[X_i]$. Let $r'$ be the maximum rank of over ranked edges of  $\MST[\cup_{j \leq i-1}X_j]$. We assign rank $r_i(u,u_j) = r' + j$ for all $1 \leq j \leq p$. We will update $\Lambda_i$ from $\Lambda_{i-1}$ depending on the relationship between $u$ and $v$: 

\begin{enumerate}[nolistsep,noitemsep]
\item If $u = v$, then we add edges $p-1$ edges $u_ju_{j+1}$, $1 \leq j \leq p-1$, to  $\Lambda_{i-1}$. 
\item If $u\not= v$ and $uv \in \MST[X_i]$, we replace $v$ in $\Lambda_{i-1}$ by $u$ and add $p$ edges $uu_j$ to $\Lambda_{i-1}$. 
\item Otherwise, we add $u$ and $p$ edges $uu_j$ to $\Lambda_{i-1}$. Let $\{v_1,v_2, \ldots,v_q\}$ be the set of neighbors of $v$ such that $r_{i-1}(vv_1) \leq \ldots \leq r_{i-1}(vv_q)$. We delete $v$ from $\Lambda_{i-1}$ and add $q-1$ edges $v_jv_{j+1}$. 
\end{enumerate}

\begin{claim}\label{cl:lamda-dist-rank}
Each edge of the forest $\Lambda_i$ has a distinct rank.
\end{claim}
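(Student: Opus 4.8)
The plan is to prove Claim~\ref{cl:lamda-dist-rank} by induction on $i$, mirroring the case analysis used to construct $\Lambda_i$. For the base case, $\Lambda_1$ is obtained from $\MST[X_1]$ by contracting one edge; since $\MST[X_1]$ is a forest, for each surviving edge $xy$ the unique $x$-to-$y$ path in $\MST[\cup_{j\le 1}X_j]=\MST[X_1]$ is the edge $xy$ itself, so $r_1(xy)$ is exactly the rank assigned to $xy$. These ranks were chosen pairwise distinct, so the base case holds.

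For the inductive step, assume the edges of $\Lambda_{i-1}$ have pairwise distinct ranks, let $u$ be the introduced vertex of $X_i$, let $v$ be its forgotten vertex, and let $r'$ be the maximum rank over the edges of $\MST[\cup_{j\le i-1}X_j]$. I would first record two structural facts. Since $u$ is introduced at $X_i$, it occurs in no earlier bag, so the only MST edges of $\MST[\cup_{j\le i}X_j]$ incident to $u$ are $uu_1,\dots,uu_p$ (those whose other endpoint lies in $X_i$); these cannot close a cycle, so $u_1,\dots,u_p$ lie in pairwise distinct trees of $\MST[\cup_{j\le i-1}X_j]$ and adding $uu_1,\dots,uu_p$ merely merges those trees. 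Hence the MST path between any two vertices already connected in $\MST[\cup_{j\le i-1}X_j]$ is unchanged, and every edge of $\Lambda_{i-1}$ that survives into $\Lambda_i$ keeps its rank, $r_i(xy)=r_{i-1}(xy)$. Moreover, the ranks assigned to $uu_1,\dots,uu_p$ are $r'+1,\dots,r'+p$ by construction, all strictly larger than every rank occurring in $\Lambda_{i-1}$, each of which is at most $r'$.

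It then remains to evaluate $r_i$ on the edges newly inserted into $\Lambda_i$, of which there are two kinds. The shortcut edges produced when $u$ (or the contracted pair $uv$) is routed into $\Lambda_i$ --- the edges $u_ju_{j+1}$ when $u=v$, and the edges incident to $u$ in the other non-free sub-cases --- have an MST path consisting of one or two of the edges $uu_1,\dots,uu_p$, since two MST-neighbours of a vertex are joined through that vertex; so each such edge receives a rank of the form $r'+j$, and these are pairwise distinct and disjoint from all ranks of $\Lambda_{i-1}$. The bypass edges $v_jv_{j+1}$ created when $v$ is deleted have their MST path running through $v$, so $r_i(v_jv_{j+1})=r_{i-1}(vv_j)$; since $r_{i-1}(vv_1)<\dots<r_{i-1}(vv_q)$ and we retain only the edges of ranks $r_{i-1}(vv_1),\dots,r_{i-1}(vv_{q-1})$, the net effect of deleting $v$ on the ranks of surviving and bypass edges is to drop the single value $r_{i-1}(vv_q)$. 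Altogether the rank multiset of $\Lambda_i$ is that of $\Lambda_{i-1}$ with $r_{i-1}(vv_q)$ removed (when applicable) and fresh values exceeding $r'$ adjoined; as no recycled value exceeds $r'$, no collision arises and distinctness is preserved.

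The main obstacle I expect is bookkeeping rather than insight: one must verify the identity ``$r_i$ of each new edge equals its claimed value'' against the definition of $r_i$ as a minimum over an MST path, separately in each of the four sub-cases of the construction (free $u$; non-free $u$ with $u=v$, with $uv\in\MST[X_i]$, and with $uv\notin\MST[X_i]$), and check that replacing or deleting a vertex in $\Lambda$ corresponds to the path concatenations asserted above. Once the template ``old $\Lambda$-edges live inside single MST-trees, new MST edges only merge distinct trees, and every shortcut or bypass edge routes through the contracted vertex'' is pinned down, the threshold $r'$ cleanly separates recycled ranks from freshly assigned ones and the induction closes.
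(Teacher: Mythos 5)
Your proposal follows the same route as the paper: induction on $i$, with the key observation that the threshold $r'$ (the largest rank used on $\MST[\cup_{j\le i-1}X_j]$) cleanly separates the recycled ranks (all $\le r'$, pairwise distinct by the induction hypothesis) from the freshly assigned ranks $r'+1,\dots,r'+p$, and that deleting the forgotten vertex $v$ and adding the bypass edges $v_jv_{j+1}$ simply drops the single value $r_{i-1}(vv_q)$ from the multiset. That is exactly what the paper's case analysis establishes.

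The one sub-case where your sketch asserts something that is not correct is $u\neq v$ with $uv\in\MST[X_i]$. You write that ``the edges incident to $u$ in the other non-free sub-cases'' all have an MST path consisting of one or two of the edges $uu_1,\dots,uu_p$, and therefore get fresh ranks of the form $r'+j$. That is true for the newly added $\MST$ edges $uu_j$ themselves. But in this sub-case $\Lambda_i$ also contains the relabelled edges $uw$ obtained by replacing $v$ with $u$, where $w$ was a neighbour of $v$ in $\Lambda_{i-1}$. For such an edge the $u$-to-$w$ $\MST$ path passes through $uv=uu_1$ and then follows the old $v$-to-$w$ path, so $r_i(uw)=\min\{r'+1,\,r_{i-1}(vw)\}=r_{i-1}(vw)\le r'$ — a recycled rank, not a fresh one, and certainly not disjoint from the ranks of $\Lambda_{i-1}$. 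This does not overturn the conclusion: these edges simply live in the ``$\le r'$ block'', where they remain pairwise distinct by the induction hypothesis because the map $vw\mapsto uw$ preserves ranks bijectively. But the justification must go through the identity $r_i(uw)=r_{i-1}(vw)$ (which needs $r'+1>r_{i-1}(vw)$), rather than through the claim that every $u$-incident edge gets a rank of the form $r'+j$. That identity is exactly what the paper's proof supplies for this sub-case. With that correction your argument closes, and it is the same argument as the paper's, just phrased in terms of how the rank multiset evolves from $\Lambda_{i-1}$ to $\Lambda_i$.
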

\ifFull See Figure~\ref{fig:charging-mst} for an example of a charging forest and contracted forests
\begin{figure}[p] 
  \centering
  \vspace{-20pt}
  \input{../figs/charging-mst.tex}\\
  \input{../figs/shorten-tree-ex.tex}
  \input{../figs/charging-graph.tex}
  \vspace{-15pt}
  \caption{ Top: A normalized graph and its MST. Bold edges are edges of the MST. Non-tree edges of the first bag graphs are dotted. Non-tree edges of other bag graphs are incident to the introduced vertex of these bags and not shown in the figure. Center: Contracted forest $\Lambda_1, \ldots, \Lambda_{12}$. Circled numbers are ranks of the edges of the contracted forests. Bottom: Charging forest for the graph. Dotted edges are mixed edges. Circled vertices are roots of the charging forest}
  \label{fig:charging-mst}
\vspace{-10pt}
\end{figure}
\else 
We include the proof of Claim~\ref{cl:lamda-dist-rank} in Appendix~\ref{app:proofs}. See Figure~\ref{fig:charging-mst} in Appendix~\ref{app:figures} for an example of a charging forest and contracted forests.
\fi
    
\subsubsection{Using the charging forest to define an \texorpdfstring{$O(\pw^2)$}{pw}-simple charging scheme}

By Invariant~(ii), all trees in $\Phi$ are rooted at $\phi$-vertices correspond to edges of $\MST$. Order the edges of $E(G)\backslash \MST(G)$ given by DFS pre-order of $\Phi$. Let $(u_{i-1},v_{i-1})$ and $(u_i,v_i)$ ($i \geq 2$) be two $\phi$-vertices of a component $T$ of $\Phi$ in this order. Define $P_i$ to be the (unique) $u_i$-to-$v_i$ path in $T\cup \{(u_{i-1},v_{i-1})\}$ that contains the edge  $(u_{i-1},v_{i-1})$.  We take $(P_i,(u_i,v_i))$ to be the charging pair for $(u_i,v_i)$. Note that the roots of $\Phi$ are edges of $\MST(G)$, so the charging paths are well-defined.

 We prove that the charging scheme defined by the charging pairs is $O(\pw^2)$-simple by bounding the number of times non-zero edges of $\MST$ are charged to. By the triangle rule,  if  $((u,v),(u,w))$ is a bold edge of $\Phi$, then $vw \in \MST$. We call the set of 3 vertices $\{u,v,w\}$ a \emph{charging triangle}. If $((u,v),(u,w))$  is a mixed edge, $vw \notin \MST(G)$. In this case, we call $\{u,v,w\}$ a \emph{charging pseudo-triangle}. The $v$-to-$w$ path in $\MST(G)$ is called the \emph{pseudo-edge}; $vw$ is said to be \emph{associated} with the charging (pseudo-) triangle. We say the edge $((u,v),(u,w))$ \emph{represents} the charging (pseudo-) triangle. 

\begin{claim} \label{clm:associated-triangle}
There are at most $\pw-2$ charging triangles associated with each non-zero edge of $\MST$.
\end{claim}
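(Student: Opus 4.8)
The plan is to bound, for a fixed non-zero edge $e=vw$ of the $\MST$, the number of charging triangles $\{u,v,w\}$ whose associated edge is $e$. By the triangle rule, a bold edge $((u,v),(u,w))$ of $\Phi$ is only ever created when $vw\in\MST$ and $(u,v),(u,w)$ lie in distinct non-rooted dashed-free trees of the intermediate charging forest; the associated edge of the resulting charging triangle $\{u,v,w\}$ is exactly $e=vw$. So I must count the vertices $u$ for which such a bold edge incident to the pair $(v,w)$ is ever added. First I would observe that in order for $((u,v),(u,w))$ to be added, all three of $u,v,w$ must be simultaneously present in some bag $X_i$ (since the $\phi$-vertices $(u,v)$ and $(u,w)$ are edges of a bag graph, and bold edges are added within a bag). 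Since $vw$ is a single $\MST$ edge, it lives in exactly one bag graph $G_j$; the set of bags containing both $v$ and $w$ forms a contiguous subpath of $\mathcal P$ (by the tree-decomposition axioms applied to each endpoint, intersected). In a smooth path decomposition every bag has exactly $\pw+1$ vertices, so at any moment at most $\pw-1$ vertices $u$ besides $v,w$ can share a bag with both.

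The subtlety is that the contiguous subpath of bags containing $\{v,w\}$ may pass through many bags, so naively one could see far more than $\pw-1$ candidate vertices $u$ over the whole interval. The key point to exploit is the phrase "both of which do not contain roots" in the triangle rule: once the pair $(v,w)$ becomes rooted — i.e.\ connected in a dashed-free tree to the $\phi$-vertex corresponding to $e$ itself — no further bold edge satisfying the triangle rule with associated edge $e$ can ever be created, because the triangle rule forbids joining a rooted dashed-free tree. So the real count is the number of vertices $u$ that get a bold edge $((u,v),(u,w))$ added \emph{before} the pair's dashed-free tree becomes rooted. I would argue that this happens only while $e$ has not yet "entered" the bag graph: once the bag graph $G_j$ containing $e$ is processed, $(v,w)$ is a $\phi$-vertex that (by Invariant~(ii)) lies in a rooted dashed-free tree, since $v,w$ are in the same $\MST$-tree via $e$. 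Hence all the relevant vertices $u$ must coexist with $v$ and $w$ in a bag at a time \emph{before} $e$'s bag graph is reached — but in a smooth decomposition, the first bag containing both $v$ and $w$ is a left endpoint of the interval, and $e$ appears in the bag graph of, essentially, that first bag or very close to it. More carefully: $e=vw\in G[X_i]$ for every bag $X_i$ containing both, and $e$ is assigned to the bag graph $G_j$ where $j$ is minimal among such $i$ (since $G_j$ takes edges incident to introduced vertices — whichever of $v,w$ is introduced later is the introduced vertex of the first common bag). So $e$ enters the spanner-forest at the first common bag, and from Invariant~(ii) $(v,w)$ is immediately in a rooted dashed-free tree thereafter, meaning \emph{no} bold edge with associated edge $e$ is ever added via the triangle rule once we are past that first bag.

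Therefore the only bold edges $((u,v),(u,w))$ with associated edge $e$ that can ever be added are those added during the processing of the single first common bag $X_j$ (when $v,w$ are both present and $(v,w)$ is not yet rooted — this is the moment $e$ and the non-tree edges of $G_j$ are all handled together). At that moment the bag $X_j$ contains exactly $\pw+1$ vertices, so there are at most $\pw-1$ choices of $u\in X_j\setminus\{v,w\}$. Finally I would tighten $\pw-1$ to $\pw-2$ using that $v$ and $w$ are \emph{distinct} $\MST$ vertices \emph{plus} one more structural fact: the forgotten vertex of $X_j$ cannot play the role of $u$ here (its relevant $\phi$-vertices are either already connected to a root via the triangle rule applied to the tree edge incident to it, or handled by the conversion of dashed to mixed edges rather than by creating a fresh bold-edge charging triangle associated with $e$), which removes one further candidate and yields the bound $\pw-2$. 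The main obstacle I anticipate is making the second paragraph's argument fully rigorous — precisely pinning down that once $e$'s bag graph is reached the pair $(v,w)$ is rooted and stays rooted in all subsequent $\Phi_\ell$, which is exactly Invariant~(ii) together with the monotonicity of "rooted" under the forest-growth operations (dashed edges are only ever deleted or promoted to mixed, never disconnecting a rooted component from its root) — and then carefully accounting for the forgotten-vertex case to shave the count from $\pw-1$ down to $\pw-2$.
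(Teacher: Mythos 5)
Your proposal diverges substantially from the paper's proof, and the crucial step in your reasoning is incorrect.

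The paper's proof is a two-line counting argument: observe that every charging triangle contains the associated non-zero $\MST$ edge and a non-$\MST$ edge that lie \emph{in the same bag graph}, then note that each non-zero $\MST$ edge lives in exactly one bag graph $G_j$, and $G_j$ contains at most $\pw-2$ non-$\MST$ edges. The injectivity of the triangle $\mapsto$ non-$\MST$-edge-of-$G_j$ map finishes it. Your proposal instead tries to localize the apex $u$ to a single bag of the path decomposition and then count candidate apices, which is a genuinely different route.

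The gap is in the passage from ``once $(v,w)$ becomes rooted, no further bold edge with associated edge $e=vw$ can be created.'' This reasoning is flawed for two reasons. First, the $\phi$-vertex $(v,w)$ corresponds to an $\MST$ edge and is therefore a \emph{root of $\Phi$ by definition from the moment it exists}; ``becomes rooted'' is not a transition that happens, so it cannot mark a before/after threshold. Second, and more importantly, the clause ``both of which do not contain roots'' in the triangle rule constrains the dashed-free trees containing $(u,v)$ and $(u,w)$ --- it says nothing about the status of $(v,w)$. When a \emph{later-introduced} vertex $u'$ with $\ell_{u'} > \max(\ell_v,\ell_w)$ is processed, the new $\phi$-vertices $(u',v)$ and $(u',w)$ begin as isolated unrooted singletons of the charging forest, and nothing in the argument you give excludes the possibility that the bold edge $((u',v),(u',w))$ is added at that later stage. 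Consequently the key conclusion of your second paragraph --- that every apex $u$ must coexist with $v$ and $w$ in the \emph{first} bag where $v$ and $w$ are both present --- is not established. Your final paragraph then builds the numerical count on this unjustified claim, and the ``forgotten vertex'' step that tightens $\pw-1$ to $\pw-2$ is additionally hand-waved. The paper's own argument sidesteps all of this by counting non-$\MST$ edges of $G_j$ directly rather than counting apices.
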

\begin{proof}
A charging triangle consists of one non-zero weight edge of $\MST$ and one edge not in the $\MST$ in the same bag graph. Note that each non-zero weight edge of $\MST$ is in exactly one bag graph and  each bag graph has at most $\pw-2$ edges not in the $\MST$; that implies the claim. 
\end{proof}
\ifFull \else \noindent The proof of the following is in Appendix~\ref{app:proofs}.\fi
\begin{lemma} \label{lm:associated-pseudo-triangle}
Each edge of $\MST(G)$ is in the paths corresponding to the pseudo-edges of at most $2\pw^2$ charging pseudo-triangles.  
\end{lemma}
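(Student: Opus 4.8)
The plan is to bound, for a fixed non-zero edge $e^\star$ of $\MST$, the number of charging pseudo-triangles $\{u,v,w\}$ whose pseudo-edge (the $v$-to-$w$ path in $\MST$) passes through $e^\star$. Recall that a pseudo-triangle is represented by a mixed edge $((u,v),(u,w))$ of $\Phi$, and a mixed edge is always a converted dashed edge. A dashed edge $((u,v),(u,w))$ is added only when $u$ is the introduced vertex of some bag $X_i$ (free or not, in the subcase where $u$ has no tree edge to the forgotten vertex), and it inherits its rank from the corresponding edge $vw$ of the contracted forest $\Lambda_{i-1}$, whose rank $r_{i-1}(vw)$ is the minimum rank of an $\MST$-edge on the $v$-to-$w$ path in $\MST[\cup_{j\le i-1}X_j]$. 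The key structural fact is that, by Claim~\ref{cl:lamda-dist-rank}, the ranks of the edges of each $\Lambda_i$ are pairwise distinct, so in a fixed bag $X_i$ there is at most one dashed edge of any given rank, hence at most $\pw$ dashed edges of rank equal to $\mathrm{rank}(e^\star)$ over the whole decomposition per bag — but we need a global count.

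The core of the argument: I would show that the pseudo-edge of a pseudo-triangle created at bag $X_i$ whose associated rank is $\rho$ must have $e^\star = $ the unique $\MST$-edge of rank $\rho$ (this is exactly why ranks are assigned: the minimum-rank edge on an $\MST$-path is a canonical representative of that path), so the pseudo-edge passes through $e^\star$ only when $\rho = \mathrm{rank}(e^\star)$ — wait, that is too strong; the pseudo-edge passes through $e^\star$ whenever $e^\star$ lies on the $v$-to-$w$ $\MST$-path, not only when $e^\star$ is the minimum-rank edge on it. So instead I would argue as follows. Fix $e^\star$ and let $R = \mathrm{rank}(e^\star)$. Removing $e^\star$ from $\MST$ splits it into two trees $Y_1, Y_2$; the pseudo-edge of $\{u,v,w\}$ crosses $e^\star$ iff $v\in Y_1$, $w\in Y_2$ (up to swapping). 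Now consider the bag $X_i$ at which the mixed edge $((u,v),(u,w))$ was born as a dashed edge with rank $\rho = r_{i-1}(vw)$. If $\rho \le R$, then $e^\star$ is not on the $v$-to-$w$ path inside $\MST[\cup_{j\le i-1}X_j]$ only if $e^\star$ was not yet present; but $e^\star$ lies in exactly one bag graph $X_{i(e^\star)}$, and the path eventually closes up through ancestor bags. I would make this precise by charging each such pseudo-triangle to the pair consisting of the bag $X_i$ and the vertex of $X_i$ playing the role of $u$ in a way that is injective given $e^\star$: at bag $X_i$, among all vertices $w \in X_i\setminus\{u\}$ whose $\Lambda_{i-1}$-edge $uw$ (equivalently $vw$ after relabelling) has a pseudo-edge through $e^\star$, the distinctness of $\Lambda_{i-1}$-ranks plus the fact that the components $Y_1\cap X_i$, $Y_2\cap X_i$ each induce a contiguous piece of $\Lambda_{i-1}$ (a path, by construction of $\Lambda$) forces at most $O(\pw)$ such $w$ per bag contributing a "new" crossing of $e^\star$.

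For the global count I would observe that a crossing of $e^\star$ can be "charged" only in bags $X_i$ with $i \ge i(e^\star)$ for which $e^\star$ is still on the relevant ancestor path, and more carefully that each dashed edge whose pseudo-edge crosses $e^\star$ corresponds, via its rank $\rho$, to a distinct edge of $\MST$ of rank $\rho$ on the $v$-to-$w$ path segment "below" $e^\star$ together with one bag index; bounding the number of $(\text{bag}, \text{rank})$ pairs that can arise by $\pw$ (ranks per bag, via Claim~\ref{cl:lamda-dist-rank}) times $O(\pw)$ (bags in which a given $\MST$-edge can still appear on an active $\Lambda$-path, which is controlled by $u,v,w\in X_i$ and the linear structure of the path decomposition) yields the $2\pw^2$ bound. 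I would also invoke Claim~\ref{clm:associated-triangle} and the invariants (particularly that each tree of $\Lambda_i$ is a path and ranks propagate as $r_i(v_jv_{j+1}) = r_{i-1}(vv_j)$) to control how ranks of $\MST$-edges migrate as vertices are forgotten.

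\medskip
\noindent\textbf{Main obstacle.} The delicate point is the global accounting: a single $\MST$-edge $e^\star$ can appear on the $\MST$-paths of pseudo-edges born in many different bags, because after $v$ (or $w$) is forgotten the edge $vw$ of $\Lambda$ gets "rerouted" (an edge $vv_j$ becomes $v_jv_{j+1}$ with the same rank), so the $\MST$-path underlying a $\Lambda$-edge only grows. I expect the crux to be proving that, despite this rerouting, the number of \emph{mixed} edges (not merely dashed edges) whose pseudo-edge contains $e^\star$ is still $O(\pw^2)$ — i.e. that the conversion rule (convert the lowest-added, then highest-rank dashed edge) together with distinctness of $\Lambda$-ranks prevents more than $O(\pw)$ conversions per bag from having a pseudo-edge through $e^\star$, and that only $O(\pw)$ bags are relevant. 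Handling this interaction between the conversion tie-breaking rule and the rank-propagation in $\Lambda$ is where the real work lies; everything else is bookkeeping against Claims~\ref{clm:associated-triangle} and~\ref{cl:lamda-dist-rank}.
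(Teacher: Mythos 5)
There is a genuine gap, and it is exactly where you flag the ``main obstacle.'' Your accounting strategy is ``(ranks per bag, $\le \pw$) $\times$ (bags in which $e^\star$ can still sit on an active $\Lambda$-path, $O(\pw)$),'' but the second factor is not $O(\pw)$, and you give no argument for it. The rank of $e^\star$ can persist on an edge of $\Lambda_j$ for every $j$ from the bag where $e^\star$ is introduced until the entire $\MST$-subtree containing $e^\star$ is forgotten — potentially $\Omega(n)$ bags — because the update rule $r_j(v_k v_{k+1}) = r_{j-1}(v v_k)$ reroutes $\Lambda$-edges without retiring their underlying $\MST$-paths. So ``$\pw$ bags'' is false, and since that is the only place your plan obtains the second factor of $\pw$, the plan does not close.

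The paper's proof partitions the count along a completely different pair of axes, neither of which appears in your sketch. First, it fixes the single bag $X_i$ that introduces $e$, and notes that by the construction of $\Lambda_i$ there are at most two $\Lambda_i$-edges $u_0v_0$, $v_0w_0$ sharing a vertex whose $\MST$-paths both contain $e$; any pseudo-edge through $e$ is forced to ``go through'' one of these. Second, for each of the two it proves a \emph{nesting} lemma: among pseudo-triangles whose pseudo-edge contains $e$ and routes through (say) $u_0v_0$ in $\Lambda_{i-1}$, paths $P_\MST(u_k,v_k)$ with smaller minimum rank strictly contain those with larger minimum rank, so there are at most $\pw$ distinct such path-templates. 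Third, for each fixed template it proves that the pseudo-triangles ``strongly containing'' it correspond to pairwise-disjoint trees $T^{s_k\ell}_{w_k}$ in the path decomposition (using the forgotten-forest machinery $F^{ij}_{x,r}$ and Claim~\ref{clm:realized-dashed-edges} about when a dashed edge can survive as a mixed edge), which yields at most $\pw$. The product $2 \cdot \pw \cdot \pw$ gives the bound. Your proposal contains no analogue of the nesting argument or the disjointness-of-trees argument, and the forgotten-forest bookkeeping that makes Claim~\ref{clm:realized-dashed-edges} go through is exactly the content you defer as ``bookkeeping'' — it is not; it is the proof.
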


\ifFull We investigate how dashed edges are changed into mixed edges as this is when a charging pseudo-triangle arises. Recall that dashed edges are added to $\Phi$ when we process free vertices. Let $u$ be a free vertex that is introduced in bag $X_i$ and let $((u,v_1),(u,v_2))$ be a mixed edge (that is, a dashed edge in $\Phi_i$ that is later converted to a mixed edge).  Then by our construction, $v_1$ and $v_2$ are in the same component of $\MST[\cup_{k \le i-1}X_k]$. For each $j \geq  i$ and $w \in X_i$, let $T^{ij}_w$ be the component of $ \MST[\cup_{i \leq k \leq j}X_k]$ containing $w$. We will say that an introduced vertex of a bag $X_j$ is {\em branching} if it is not free, is not the forgotten vertex of $X_j$ and is not connected to the forgotten vertex of $X_j$ by $\MST$.  Note that dashed edges are converted to mixed edges when processing free and branching vertices.
  We say that a tree $T^{ij}_w \in  \MST[\cup_{i \leq k \leq j}X_k]$ is \emph{forgotten} in $X_j$ if $j = |I|$ or :
	\begin{itemize}[nolistsep,noitemsep]
	\item $T_w^{ij} \cap X_j = \{v\}$ which is the forgotten vertex of $X_j$.
	\item the introduced vertex of $X_j$ is branching or free.
	\end{itemize}
We note that at most one tree can be forgotten in each bag $X_j$. 

For each vertex $x \in \Lambda_{i-1}$, let $N_{i-1}^r(x)$ be the set of vertices in $\Lambda_{i-1}$ reachable from $x$ via paths consisting of edges of ranks larger than $r$ in $\Lambda_{i-1}$. Let $F_{x,r}^{ij} = \cup_{ w\in N_{i-1}^r(x) \cup \{x\}}T^{ij}_{w}$. We say that the forest $F_{x,r}^{ij}$ is forgotten in $X_j$ if every tree in $F_{x,r}^{ij}$ is forgotten in $X_{j'}$ for some $j ' \leq j$ and $F_{x,r}^{ij} \cap X_j \not= \emptyset$.

\begin{lemma} \label{lm:realized-dashed-edges}
Let $r$ be the rank of the edge $v_1v_2 \in \Lambda_{i-1}$. If the dashed edge $((u,v_1),(u,v_2))$ is changed into a mixed edge, there exists a bag $X_j$ for $j \geq i$ such that $T^{ij}_u \cap \{F^{ij}_{v_1,r} \cup F^{ij}_{v_2,r}\} = \emptyset$, $T^{ij}_{u} \cap X_j \not= \emptyset$ and exactly one of the forests  $F_{v_1,r}^{ij}, F_{v_2,r}^{ij}$ is forgotten in $X_j$.
\end{lemma}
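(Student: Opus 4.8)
The plan is to follow the dashed edge $e_d = ((u,v_1),(u,v_2))$ from the bag $X_i$ where it is created --- while processing the free vertex $u$ --- to the bag $X_j$, $j\ge i$, where it is converted to a mixed edge, and to read the three claimed properties off the mechanics of that conversion. Since dashed edges are converted to mixed edges only while processing a free or a branching vertex, the bag $X_j$ and the vertex $u_j$ of $X_j$ that triggers the conversion are well defined; the conversion is performed to repair a dashed-free tree $\tau$ of $\Phi_{j-1}$ whose unique active $\phi$-vertex has the form $(\bar v_j, w)$ with $\bar v_j$ the forgotten vertex of $X_j$, and $e_d$ is one of the dashed edges we convert while greedily growing $\tau$ inside its component $\hat\tau$ until it contains an active $\phi$-vertex.

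The first block of the argument is structural. By Invariants~(i) and~(iv), $\hat\tau$ consists exactly of the $\phi$-vertices running between two components $T_1\ni u$ and $T_2\ni v_1,v_2$ of $\MST[\cup_{k\le j-1}X_k]$, and the map $(u,w)\mapsto w$ identifies the dashed edges of $\hat\tau$ born at $X_i$ with the relevant component of the contracted forest $\Lambda_{i-1}$, rank for rank; in particular $e_d$ corresponds to $v_1v_2\in\Lambda_{i-1}$ of rank $r$, and the two ends of $e_d$ correspond to the clusters of $\Lambda_{i-1}$ of rank exceeding $r$ around $v_1$ and around $v_2$, i.e.\ to $N_{i-1}^r(v_1)\cup\{v_1\}$ and $N_{i-1}^r(v_2)\cup\{v_2\}$, whose $\MST$-evolutions from bag $X_i$ onward are precisely $F_{v_1,r}^{ij}$ and $F_{v_2,r}^{ij}$ (here Claim~\ref{cl:lamda-dist-rank} is used to guarantee that these two clusters are genuinely separated by the rank-$r$ edge $e_d$). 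Next I would observe that if $T^{ij}_u$ had merged with either cluster, then $(u,v_1)$ or $(u,v_2)$ would lie in a rooted dashed-free tree by Invariant~(ii) and $e_d$ could not still be dashed in $\Phi_{j-1}$; this yields $T^{ij}_u\cap(F^{ij}_{v_1,r}\cup F^{ij}_{v_2,r})=\emptyset$. Finally, since the repair of $\tau$ takes place inside $\hat\tau$, whose $u$-side must still meet $X_j$ for $\tau$ to be active in $\Phi_{j-1}$, we get $T^{ij}_u\cap X_j\ne\emptyset$.

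The core of the proof is the effect of the conversion's tie-breaking rule --- oldest dashed edges first, and among equally old ones the highest-ranked first. Because $e_d$ was born at $X_i$ together with a whole copy of $\Lambda_{i-1}$, and because growing $\tau$ only ever converts dashed edges incident to the current $\tau$, by the time $e_d$ is selected every dashed edge born at $X_i$ of rank exceeding $r$ that lies between $\tau$ and $e_d$ has already been absorbed; hence just before $e_d$ is converted the grown $\tau$ contains all $\phi$-vertices $(u,w)$ with $w$ in one of the two clusters --- say $N_{i-1}^r(v_1)\cup\{v_1\}$ --- that is, all of $F_{v_1,r}^{ij}$. Since the repair has not yet succeeded at this moment, this whole cluster has lost its last active $\phi$-vertex at $X_j$, which is exactly the definition of $F_{v_1,r}^{ij}$ being forgotten in $X_j$; meanwhile the far end of $e_d$ leads into $F_{v_2,r}^{ij}$, which supplies the active $\phi$-vertex that finally completes the repair and is therefore not forgotten in $X_j$. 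Thus exactly one of the two clusters is forgotten in $X_j$, establishing all three properties.

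The step I expect to be the main obstacle is this last one: one must argue rigorously that the greedy, locally-incident conversion really does sweep through the entire subtree of rank exceeding $r$ around $v_1$ before reaching $e_d$; that mixed edges produced by earlier, unrelated repairs have neither already connected $\tau$ to an active $\phi$-vertex (so that $e_d$ would never be reached) nor eaten into that subtree in a way that corrupts the $(u,w)\mapsto w$ correspondence; and that ``the $v_1$-cluster loses its last active $\phi$-vertex at $X_j$'' matches, tree by tree, the formal definition of a tree --- and then a forest --- being forgotten in a bag (the singleton-forgotten-vertex condition together with the introduced vertex of that bag being branching or free). A secondary nuisance is bookkeeping the difference between $T^{ij}_w$ and $T^{i,j'}_w$ for $j'<j$ when verifying that $F^{ij}_{v_1,r}$ is forgotten in $X_j$.
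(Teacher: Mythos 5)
Your proposal takes a direct route --- following the dashed edge from its birth at $X_i$ to the bag where it is converted and reading the three properties off the mechanics of the repair --- whereas the paper argues by contrapositive: it assumes the claimed $X_j$ does not exist and exhibits two failure scenarios (Case~1: $T^{ij}_u$ has merged with one cluster while the other still reaches $X_j$; Case~2: $T^{ij}_u$ is forgotten while neither cluster is), showing via Invariant~(iv) and the tie-breaking rule that in the first case $(u,v_k)$ would already lie in a rooted dashed-free tree of $\Phi_p$ and in the second the dashed edge would be deleted as the most recent edge on a cycle. The crucial difference is that the paper never needs to reason about the \emph{order} in which dashed edges are absorbed during a single repair; it only needs that higher-rank dashed edges are preferentially converted, which it invokes once per case to place $(u,v_k)$ and $(u,w_k)$ in the same dashed-free tree of $\Phi_p$.

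The step you correctly flag as the obstacle is in fact a genuine gap, and it is precisely what the paper's contrapositive structure is designed to sidestep. Your claim that ``by the time $e_d$ is selected, every dashed edge born at $X_i$ of rank exceeding $r$ between $\tau$ and $e_d$ has already been absorbed, so the grown $\tau$ contains all of the $v_1$-cluster'' is stated but not established: earlier repairs at intermediate bags can convert or delete dashed edges in ways that do not respect the rank order within a single cluster, and mixed edges introduced by those repairs can short-circuit the growth of $\tau$ so that $e_d$ is reached before the whole rank-$>r$ subtree around $v_1$ is swept. Your argument also does not show that the \emph{other} cluster $F^{ij}_{v_2,r}$ is \emph{not} forgotten --- you assert its far end ``supplies the active $\phi$-vertex,'' but a cluster can still have vertices in $X_j$ without being the source of the active $\phi$-vertex that ends the repair, and ``not yet forgotten'' in the technical sense (every tree forgotten in some $X_{j'}\leq X_j$, and $F\cap X_j\neq\emptyset$) is a condition on the whole history $i,\ldots,j$, not on the moment of repair. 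A secondary gap: to conclude $T^{ij}_u\cap X_j\neq\emptyset$ you pass from ``the $u$-side of $\hat\tau$ meets $X_j$'' (which is about a component of $\MST[\cup_{k\leq j-1}X_k]$, by Invariant~(i)) to a fact about $T^{ij}_u$, a component of $\MST[\cup_{i\leq k\leq j}X_k]$; these need not coincide, and the paper avoids this by instead contradicting the case assumption with Invariant~(iv) applied at a carefully chosen intermediate index $p$.
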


We now bound the number of pseudo-triangles that contain an edge $e$ of $\MST(G)$.  Let $X_i$ be the bag that containing $e$ and $\Lambda_i$ be the corresponding contracted forest.  Let $u_0,v_0$ be two vertices of $\Lambda_i$ in the same tree of $\Lambda_i$ such that $e$ is in the $u_0$-to-$v_0$ path $P_\MST(u_0,v_0)$. We say that a pseudo-triangle \emph{strongly contains} $P_\MST(u_0,v_0)$ if $P_\MST(u_0,v_0)$ is a subpath of the pseudo-edge and the rank of every edge of the psedudo-edge of the triangle is at least the minimum rank over edges of $P_\MST(u_0,v_0)$. 
 
\begin{lemma}\label{lm:num-nesting}
There are at most $\pw$ pseudo-triangles strongly containing $P_\MST(u_0,v_0)$. 
\end{lemma}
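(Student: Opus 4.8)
The plan is to upgrade ``strongly contains $P_\MST(u_0,v_0)$'' into a rigid enough constraint that the pseudo-triangles satisfying it inject into the vertex set of a single bag of the path decomposition, which has size $\pw+1$.  Write $r^*$ for the minimum rank over the edges of $P_\MST(u_0,v_0)$ and let $e^*$ be the unique edge of $P_\MST(u_0,v_0)$ of that rank.  First I would pin down the pseudo-edge of each relevant pseudo-triangle.  A pseudo-triangle $\{u,v_1,v_2\}$ arises from a dashed edge $((u,v_1),(u,v_2))$ created while processing a free vertex $u$ introduced at some bag $X_m$ and later converted to a mixed edge; here $v_1v_2$ is an edge of $\Lambda_{m-1}$ and the pseudo-edge is the $v_1$-to-$v_2$ path of $\MST(G)$, which is exactly the $\MST[\cup_{k\le m-1}X_k]$-path realizing $v_1v_2$.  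If this pseudo-triangle strongly contains $P_\MST(u_0,v_0)$, then the pseudo-edge contains $P_\MST(u_0,v_0)$, hence $e^*$, while every edge of the pseudo-edge has rank at least $r^*$; since the rank of $v_1v_2$ in $\Lambda_{m-1}$ is the minimum rank along the pseudo-edge, it must equal $r^*$.  By Claim~\ref{cl:lamda-dist-rank} the edges of $\Lambda_{m-1}$ have pairwise distinct ranks, so $v_1v_2$ is the \emph{unique} rank-$r^*$ edge of $\Lambda_{m-1}$; together with the fact that $u$ is the unique vertex introduced at $X_m$, the whole pseudo-triangle is a function of the bag index $m$, and it remains to bound how many such $m$ can occur.

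Next I would feed each such $m$ into Lemma~\ref{lm:realized-dashed-edges} with $r=r^*$: it yields a bag $X_j$, $j\ge m$, at which $T^{mj}_u$ is disjoint from $F_{v_1,r^*}^{mj}\cup F_{v_2,r^*}^{mj}$ and meets $X_j$, and at which exactly one of $F_{v_1,r^*}^{mj},F_{v_2,r^*}^{mj}$ is forgotten.  Because $v_1v_2$ is the rank-$r^*$ edge, deleting all edges of rank $\le r^*$ from the tree of $\Lambda_{m-1}$ carrying $v_1v_2$ separates $v_1$ from $v_2$, so $F_{v_1,r^*}^{mj}$ and $F_{v_2,r^*}^{mj}$ are the two ``halves'' sitting on the two sides of $e^*$ in $\MST(G)$; and by the definition of a forest being forgotten in a bag, the half that is forgotten first is closed off at $X_j$ by a \emph{single} vertex $x$, namely the forgotten vertex of $X_j$ that is the last surviving vertex of that half.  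The goal is then to show that the assignment sending the pseudo-triangle to the pair (closing vertex $x$, one bit recording which of the two sides of $e^*$ was closed off) is injective with image inside a single bag: distinct $m$ give distinct $x$ because, once a half has been entirely forgotten, no later bag can re-create the same rank-$r^*$ configuration of $\Lambda$, so the closing vertices for different $m$ cannot coincide; and all the $x$'s lie in one bag because they are pinned between the two halves, i.e.\ near where $e^*$ (and $e$) live in the decomposition, together with $u_0$ and $v_0$.  Pigeonholing on that bag, of size $\pw+1$ and excluding $u_0,v_0$, gives at most $\pw$ strongly-containing pseudo-triangles (this feeds directly into Lemma~\ref{lm:associated-pseudo-triangle}).

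The hard part will be making the injection of the previous paragraph precise — identifying the exact bag into which the closing vertices $x$ land and verifying injectivity, so as to reach the clean bound $\pw$ rather than merely $O(\pw)$.  This forces one to show that the witnessing bag $X_j$ and the forgotten half supplied by Lemma~\ref{lm:realized-dashed-edges} depend on the pseudo-triangle in a way reconstructible from $(x,\text{side bit})$ alone, using three ingredients in concert: the distinct-rank property of the $\Lambda_k$ (to recover $v_1v_2$, hence $v_1,v_2$), the fact that a free or branching introduced vertex can forget at most one tree per bag (to control how each half shrinks over the window $[m,j]$), and the observation that rank $r^*$ lives in the contracted forests over a contiguous window of bags whose end is governed by the forgetting of one side of $e^*$.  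Everything else — the rank identification of Step~1 and the final counting on $|X_i|=\pw+1$ — is routine once Claim~\ref{cl:lamda-dist-rank} and Lemma~\ref{lm:realized-dashed-edges} are in hand.
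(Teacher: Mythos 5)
Your Step~1 is correct and matches the paper's implicit setup: by the strong-containment condition, every pseudo-triangle you need to count is generated by a dashed edge of rank exactly $r^*$ (the minimum rank on $P_\MST(u_0,v_0)$), and, by Claim~\ref{cl:lamda-dist-rank}, such a dashed edge is uniquely determined once the introducing bag $X_m$ is fixed. The idea of feeding each such $m$ into Lemma~\ref{lm:realized-dashed-edges} is also the right move.

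The gap is in the injection you then try to build. You propose to map each pseudo-triangle to the pair (closing vertex $x$ of the forgotten half, a side bit) and claim this is injective with image inside a single bag. Neither clause is established, and the injectivity claim appears to be \emph{false}: the forests $F^{s_k\ell}_{u_k,r^*}$ that a pseudo-triangle $\Delta_k$ hands to Lemma~\ref{lm:realized-dashed-edges} are all restrictions of the one base forest $F^{i\ell}_{u_0,r^*}$ (respectively $F^{i\ell}_{v_0,r^*}$), so a sequence of free vertices $w_1,\dots,w_q$ introduced at successive bags $X_{s_1}<\dots<X_{s_q}$ can all have their ``forgotten half'' closed off at the same forgotten vertex of the same bag $X_\ell$. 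The closing vertex therefore does not distinguish between different pseudo-triangles, and your pigeonhole collapses. Your heuristic that ``no later bag can re-create the same rank-$r^*$ configuration of $\Lambda$'' is about the \emph{base} contracted forest, not the restricted ones each pseudo-triangle actually uses.

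What the paper tracks instead is the apex's component $T^{s_k\ell}_{w_k}$ --- the tree of $\MST[\cup_{s_k\le j\le\ell}X_j]$ containing the free vertex $w_k$ --- not any vertex of the forgotten half. Lemma~\ref{lm:realized-dashed-edges} is used to show that this tree is disjoint from both forest halves and meets the single fixed bag $X_\ell$ (the bag at which one of $F^{i\ell}_{u_0,r^*}$, $F^{i\ell}_{v_0,r^*}$ is forgotten). The crucial extra step, which has no analogue in your proposal, is that the trees $T^{s_1\ell}_{w_1},\dots,T^{s_q\ell}_{w_q}$ are pairwise vertex-disjoint: if $T^{s_j\ell}_{w_j}\subseteq T^{s_k\ell}_{w_k}$ for $k\le j$, the cycle-deletion rule (the second case in the proof of Lemma~\ref{lm:realized-dashed-edges}) would already have removed the dashed edge for $\Delta_j$, so $\Delta_j$ could not be a pseudo-triangle. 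Pairwise-disjoint subtrees each meeting the single bag $X_\ell$ of size $\pw+1$, while the forgotten half must also contribute at least one vertex of $X_\ell$, gives the bound $q\le\pw$ directly. To repair your argument you would need to switch the object you inject from ``closing vertex of the forgotten half'' to ``the apex tree $T^{s_k\ell}_{w_k}$'' and supply the disjointness argument; the rank bookkeeping you already did carries over unchanged.
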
 

\begin{proof}
Let $\Delta_1, \Delta_2, \ldots, \Delta_q $  be pseudo-triangles that strongly contain $P_\MST(u_0,v_0)$. Let $\Delta_k = \{u_k,v_k,w_k\}$ and $X_{s_k}$ be the bag that has $w_k$ as the introduced vertex ($1 \leq k \leq q$). Let $r$ be the minimum rank over edges of $P_\MST(u_0,v_0)$ and $X_\ell$ be the bag in which one of the forests $F^{i\ell}_{u_0,r}, F^{i\ell}_{v_0,r}$, say $F^{i\ell}_{u_0,r}$, is forgotten. Then, $u_k \in F^{i\ell}_{u_0,r}$ and $v_k \in F^{i\ell,r}_{v_0}$. We can assume w.l.o.g\ that $s_1 \leq s_2 \leq \ldots \leq s_q$.  Then, $F_{u_k,r}^{s_k\ell} = F^{i\ell}_{u_0,r} \cap \MST[\cup_{s_k \leq j \leq \ell}X_j]$ and $F_{v_k,r}^{s_k\ell} = F^{i\ell}_{v_0,r} \cap \MST[\cup_{s_k \leq j \leq \ell}X_j]$. By Lemma~\ref{lm:realized-dashed-edges}, $T^{s_k\ell}_{w_k} \cap X_{\ell} \not= \emptyset$. Furthermore, all the tree $T^{s_k\ell}_{w_k}$ must be disjoint since otherwise, say $T^{s_j\ell}_{w_j}$ is a subtree of $T^{s_k\ell}_{w_k}$ $( k \leq j)$,  the second case of the proof of Lemma~\ref{lm:realized-dashed-edges} implies that the dashed edge $((w_j,u_j),(w_j,v_j))$ is removed from $\Phi$. Hence, $q \leq \pw$.
\end{proof}

\begin{lemma} \label{lm:associated-pseudo-triangles}
Each edge of $\MST(G)$ is in the paths corresponding to the pseudo-edges of at most $2\pw^2$ charging pseudo-triangles.  
\end{lemma}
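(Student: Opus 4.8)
The plan is to prove Lemma~\ref{lm:associated-pseudo-triangles} by a covering argument that bottoms out in Lemma~\ref{lm:num-nesting}. Fix an edge $e\in\MST(G)$, let $X_i$ be the unique bag graph whose bag contains $e$, let $\Lambda_i$ be the contracted forest attached to $X_i$, and write $\rho$ for the rank of $e$. A charging pseudo-triangle $\Delta$ is represented by a mixed edge $((w,v_1),(w,v_2))$ of $\Phi$; this mixed edge was a dashed edge created when the free vertex $w$ was introduced in some bag $X_s$ for an edge $v_1v_2$ of $\Lambda_{s-1}$, so its pseudo-edge is exactly the $v_1$-to-$v_2$ path of $\MST$ contracted by $v_1v_2$, a path lying in $\MST[\cup_{j\le s-1}X_j]$ whose minimum-rank edge has rank $r_{s-1}(v_1v_2)$. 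If $e$ lies on this pseudo-edge then necessarily $s>i$ and $r_{s-1}(v_1v_2)\le\rho$. The goal is to bound the number of such $\Delta$ by $2\pw^2$.

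First I would attach to each such $\Delta$ a \emph{witness path} $P_\MST(u_0,v_0)$ with $u_0,v_0\in X_i$, passing through $e$, such that $\Delta$ strongly contains $P_\MST(u_0,v_0)$. The pseudo-edge meets $\MST[\cup_{j\le i}X_j]$ in maximal segments, and every transition vertex between such a segment and an excursion through a later bag is incident both to an edge of some bag $X_{j'}$ with $j'\le i$ and to an edge of some bag $X_{j''}$ with $j''>i$; since the bags containing a fixed vertex form a contiguous interval, each transition vertex lies in $X_i$. Taking the witness to be the segment through $e$ (or, when the rank test below forces it, an extension of that segment along low-rank $\Lambda_i$-edges) gives $u_0,v_0\in X_i$ in a common tree of $\Lambda_i$ with $e$ on the $\Lambda_i$-path between them. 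The delicate point is the rank inequality in ``strongly contains'': one must ensure the witness contains a minimum-rank edge of the whole pseudo-edge. I would get this from the rank-assignment rule — edges of later bags receive strictly larger ranks, so the global minimum-rank edge of the pseudo-edge lies in a bag of index $\le i$ — together with the distinctness of ranks in $\Lambda_i$ (Claim~\ref{cl:lamda-dist-rank}), extending the witness past $e$ exactly when the naive $e$-segment is rank-too-large.

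With the witnesses in hand, Lemma~\ref{lm:num-nesting} says each fixed witness path is strongly contained in at most $\pw$ pseudo-triangles, so it remains to bound the number of distinct witnesses arising from pseudo-triangles through $e$ by $2\pw$. Writing $a,b$ for the endpoints of $e$, every witness has one endpoint in the $a$-component and one in the $b$-component of $\MST[\cup_{j\le i}X_j]-e$; on each side the candidate endpoints are ordered by the rank of the incident $\Lambda_i$-edge pointing toward $e$, and by the distinct-ranks property only $\pw$ of them yield a witness, while witnesses that become ``forgotten'' in a common bag jointly account for at most $\pw$ pseudo-triangles — this is the same disjoint-subtrees observation used inside the proof of Lemma~\ref{lm:num-nesting}, namely that the subtrees $T^{s_k\ell}_{w_k}$ are pairwise disjoint and all meet the single bag $X_\ell$, which has at most $\pw+1$ vertices. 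Collapsing the overcount of ``$\le\pw$ per witness over $\le 2\pw$ witnesses'' against a single bag yields at most $2\pw^2$ charging pseudo-triangles through $e$, which is the lemma; combined with Claim~\ref{clm:associated-triangle} this is the last ingredient of Theorem~\ref{thm:simple-charging-mst}.

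The main obstacle is precisely the interaction of the two refinements above: guaranteeing strong containment of the witness (a pseudo-edge may dip in and out of $\MST[\cup_{j\le i}X_j]$, so a minimum-rank edge need not sit on the naive $e$-segment), and showing that only $O(\pw)$ essentially different witnesses occur, which requires re-amortizing the disjoint-subtrees count of Lemma~\ref{lm:num-nesting} across several witnesses sharing a forgetting bag rather than invoking it once per witness. Everything else — that a pseudo-edge is a contracted $\MST$-path, the bag-contiguity placement of transition vertices in $X_i$, and the black-box use of Lemma~\ref{lm:num-nesting} — is bookkeeping.
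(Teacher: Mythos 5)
Your overall plan is the paper's: attach a witness $P_{\MST}(u_0,v_0)$ through $e$ to each charging pseudo-triangle, bound the pseudo-triangles per witness via Lemma~\ref{lm:num-nesting} (the ``at most $\pw$ strongly containing a fixed witness'' statement), and bound the number of distinct witnesses by $2\pw$. You also correctly flag the two technical points that need work. But neither is actually resolved, and on the second one your proposed route diverges from what makes the paper's accounting close.

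On the witness count, your argument --- $\pw$ candidate endpoints on each side of $e$, ``only $\pw$ of them yield a witness'' --- does not give $2\pw$: pairing $\pw$ candidates on the $a$-side with $\pw$ on the $b$-side gives up to $\pw^2$ potential witness pairs, and it is never said why only $2\pw$ of those pairings occur. The paper instead observes that, by the construction of $\Lambda_i$, there are at most \emph{two} $\Lambda_i$-edges $u_0v_0,\,v_0w_0$ whose corresponding $\MST$-paths contain $e$, which splits the pseudo-triangles into $\Gamma(u_0,v_0)$ and $\Gamma(v_0,w_0)$; within each class, distinct-rank witnesses have \emph{nested} $\Lambda_i$-paths (if $P_\MST(u_j,v_j)$ has smaller minimum rank than $P_\MST(u_k,v_k)$ then $P_{\Lambda_i}(u_k,v_k)\subsetneq P_{\Lambda_i}(u_j,v_j)$), so there are at most $\pw$ of them. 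That two-edge split followed by a nesting chain is what delivers $2\pw$; your proposal never identifies the two $\Lambda_i$-edges and never runs the nesting argument. Your suggested fix --- re-amortizing the disjoint-subtrees count from inside Lemma~\ref{lm:num-nesting} across several witnesses sharing a forgetting bag --- is also not what the paper does and is not needed: the paper invokes Lemma~\ref{lm:num-nesting} once per witness as a black box and gets the bound purely from ``$2$ edges $\times$ $\pw$ nested witnesses $\times$ $\pw$ per witness.'' On strong containment, you correctly note the delicacy (the pseudo-edge may leave $\MST[\cup_{j\le i}X_j]$, so the global minimum-rank edge need not lie on the naive $e$-segment), but ``extending the witness past $e$ exactly when the naive $e$-segment is rank-too-large'' is a placeholder, not an argument; the paper's conditions on the $\Delta_k$ (distinct witness paths with distinct ranks containing $u_0v_0$ in $P_{\Lambda_{i-1}}(u_k,v_k)$) are what make the hand-off to Lemma~\ref{lm:num-nesting} legitimate, and your proposal would need an equivalent.
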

\begin{proof}
Let $e$ be an arbitrary edge of $\MST(G)$. Let $X_i$ be the bag that containing $e$ and $\Lambda_i$ be the corresponding contracted forest. By the way we build the contracted forest, there are at most two edges, say $u_0v_0$ and $v_0w_0$, of $\Lambda_i$ that are incident to the same vertex $v_0$ such that $e \in P_\MST(u_0,v_0) \cap P_\MST(v_0,w_0)$. We observe that any pseudo-triangle that contains $e$ in the speudo-edge must contain the path $P_\MST(u,v)$ for some $u,v \in \Lambda_i$ such that one of two edges $u_0v_0$, $v_0w_0$ is in the path $P_{\Lambda_{i-1}}(u,v)$. Let $\Gamma(u_0,v_0)$ be the set of pseudo-triangles that have $P_{\Lambda_{i-1}}(u,v)$ containing $u_0v_0$. We define $\Gamma(v_0,w_0)$ similarly. We will show that $|\Gamma(u_0,v_0)|,|\Gamma(u_0,w_0)| \leq \pw^2$, thereby, proving the lemma. 

We only need to show that $|\Gamma(u_0,v_0)| \leq \pw^2$ since $|\Gamma(u_0,w_0)| \leq \pw^2$ can be proved similarly.  Let $\Delta_1, \Delta_2, \ldots, \Delta_p$ be the pseudo-triangles containing $e$ in the pseudo-forest such that for each $k$:
	\begin{enumerate}[nolistsep,noitemsep]
	\item Each triangle $\Delta_k$ contains distinct path $P_\MST(u_k,v_k)$ as subpath in the pseudo-edge. 
	\item Edge $u_0v_0$ is in the path $P_{\Lambda_{i-1}}(u_k,v_k)$ of $\Lambda_{i-1}$.
	\item Each path $P_\MST(u_k,v_k)$ has distinct rank.
	\end{enumerate}	 
Then, by the way we construct the contracted forest, if the minimum rank over edges of $P_\MST(u_j,v_j)$ is smaller than the minimum rank over edges of $P_\MST(u_k,v_k)$, then $P_{\Lambda_i}(u_k,v_k) \subsetneq P_{\Lambda_i}(u_j,v_j)$ for $1 \leq j \not= k \leq p$.  Therefore, we can rearrange $\Delta_1, \Delta_2, \ldots, \Delta_k$ such that $P_{\Lambda_{i}}(u_1,v_1)\subsetneq P_{\Lambda_{i}}(u_2,v_2) \subsetneq \ldots \subsetneq P_{\Lambda_{i}}(u_p,v_p)$. Thus, $p \leq \pw$ . By Lemma~\ref{lm:num-nesting}, for each $k$, there are at most $\pw$ pseudo-triangles containing the same subpath $P_\MST(u_k,v_k)$. Therefore, $|\Gamma(u_0,v_0)| \leq \pw^2$. 
\end{proof}
\else \fi

Consider each charging pair $(P_i,(u_i,v_i))$  in which $P_i$ contains $(u_{i-1},v_{i-1})$ that precedes $(u_i,v_i)$ in the DFS pre-order of a given component of $\Phi$. Let $(u_{i},v_{i}) = (\hat{u}_0,\hat{v}_0), \ldots, (\hat{u}_t,\hat{v}_t) = (u_{i-1},v_{i-1})$ be the set of $\phi$-vertices of the path between $(u_{i},v_{i})$ and $(u_{i-1},v_{i-1})$ in $\Phi$. We define $Q_j$ ($1 \leq j \leq t$) be the $\hat{u}_{j-1}$-to-$ \hat{v}_{j-1}$ path of $\MST\cup \{(\hat{u}_j,\hat{v}_j)\}$ containing edge $\hat{u}_j\hat{v}_j$. Then we have:
			\[
			P_i = Q_1 \ominus Q_2 \ominus \ldots \ominus Q_{t}
			\]
where $\ominus$ is the symmetric difference between two sets. Hence, charging tree edges of $P_i$ is equivalent to charging the tree edges of $Q_1, Q_2, \ldots, Q_{t}$. We observe that tree edges of $Q_j$ are the tree edges of the (pseudo-) triangle represented by $((\hat{u}_j,\hat{v}_j),(\hat{u}_{j-1},\hat{v}_{j-1}))$. Since each edge of $\Phi$ appears twice in the collection of paths between  $(u_{i},v_{i})$ and $(u_{i-1},v_{i-1})$ in $\Phi$ for all $i$ and since different edges of $\Phi$ represents different charging (pseudo-) triangles, the tree edges of each charging (pseudo-) triangle are charged to twice. By Claim~\ref{clm:associated-triangle} and Lemma~\ref{lm:associated-pseudo-triangle}, each non-zero tree edge is charged to by at most $2(\pw-2) + 2(2\pw^2)= O(\pw^2)$ times. Hence, the charging scheme is $O(\pw^2)$-simple.
\subsection{Toward light spanners for bounded treewidth graphs}

The main difficulty in designing a simple charging scheme for bounded pathwidth graphs is the existence of free vertices. We introduce dashed edges to the charging forest when we handle free vertices and change a subset of dashed edges into mixed edges. By changing a dashed edge, say $((w,u),(w,v))$ into a mixed edge, we charge to the $u$-to-$v$ path in the MST once. Unfortunately, for bounded treewidth graphs, such charging can be very expensive. However, we observe that the $\phi$-vertex $(v,w)$ is contained in a rooted tree of $\Phi$. That means we can used $vw$ to charge to one of two edges $uv$ or $uw$. In general, for each non-tree edge $uv$ of the spanner in a bag $X_i$, we say $uv$ is a \emph{simple edge} if $u,v$ are in the same component of the MST restricted to descendant bags of $X_i$ only or ancestor bags of $X_i$ only. Simple edges can be paid for in the spanner ``cheaply''. Now, if $uv$ be an edge of the spanner in $X_i$ such that the $u$-to-$v$ path $P_\MST(u,v)$ of $\MST$ crosses back and forth through $X_i$. Let $w_1,w_2, \ldots, w_k$ be the set of vertices of $X_i$ in this order on the path $P_\MST(u,v)$. Then, we can charge the edge $uv$ by the set of simple edges $uw_1,w_1w_2,\ldots, w_kv$. We believe that this idea, with further refinement, will prove the existence of light spanners for bounded treewdith graphs.
\pagebreak





\bibliography{hung}
\ifFull
\else

\newpage
\appendix

\section{Notation}\label{app:notation}

We denote $G =(V(G),E(G))$ to be the graph with vertex set $V(G)$ and edge set $E(G)$ and use $n,m$ to denote the number of vertices and edges, respectively. The order of $G$, denoted by $|G|$, is the number of vertices of $G$. Each edge $e$ of $E(G)$ is a assigned a weight $w(e)$. We define $w(H) = \sum_{e \in E(H)}w(e)$ to be the weight of edges of a subgraph $H$ of $G$. The minimum spanning tree of $G$ is denoted by $\MST(G)$. For two vertices $u,v$, we denote the shortest distance  between them by $d_G(u,v)$. Given a subset of vertices $S$ and $u$ a vertex of $G$, we define $d_G(u,S) = \min_{v \in S}\{d_G(u,v)\}$ for $v = \arg \min_{v \in S}\{d_G(u,v)\}$. We omit the subscript $G$ when $G$ is clear from context. The subgraph of $G$ induced by $S$ is denoted by $G[S]$.

\section{Omitted Proofs}\label{app:proofs}

\begin{proof}[Proof of Lemma~\ref{lm:hereditary-prop}]
Let $S_H$ be the greedy $(1+\epsilon)$-spanner of $H$. We will prove that $S_H = H$. Note that $S_H$ is a subgraph of $H$  and shares the same set of vertices with $H$. Suppose for a contradiction that there is an edge $e=uv \in H\setminus S_H$. Since $uv$ is not added to the greedy spanner of $H$, there must be a $u$-to-$v$-path $P_{S_H}(uv)$ in $S_H$ that witnesses the fact that $uv$ is not added (i.e.\ $(1+\epsilon)w(e) > w(P_{S_H}(uv))$). However, $P_{S_H}(uv) \subseteq S$, contradicting Equation~\ref{eq:edge-path-ineq}. 
\end{proof}

\begin{proof}[Proof of Lemma~\ref{lem:charging}]
If $S$ has a $k$-simple charging scheme then:
        \[
       (1 + \epsilon) w(S\backslash T) \leq \sum_{e \in S\backslash T} w(P_S(e))\leq k\cdot w(T) + w(S \backslash T)
        \]
where the first inequality follows from edges in $S\backslash T$ having charging paths and the second  inequality follows from each edge in $T$ appearing in charging paths at most $k$ times and each edge in $S\backslash T$ appearing in  charging paths at most once. Rearranging the left- and right-most sides of this inequality gives us Lemma~\ref{lem:charging}.
\end{proof}

\begin{proof}[Proof of Lemma~\ref{lm:disk-charging}]
Let $e$ be the edge on the boundary of $G$ that is not in $T$. Let $T^*$ be the spanning tree of the dual graph containing all the edges that do not correspond to edges of $T$. We construct a charging scheme for $G$ by traversing $T^*$ in post-order, considering all the non-outer faces.  Consider visiting face $f$ with children $f_1, \ldots, f_k$ and parent $f_0$.  Let $e_i$ be the edge of $G$ between $f$ and $f_i$ for all $i$ and let $P_0$ be the path between $e_0$'s endpoints in $T \cup \{e_1,e_2, \ldots, e_k\}$ that contains all the edges $\{e_1,e_2, \ldots, e_k\}$. Then, by Equation~\ref{eq:edge-path-ineq}, $(e_0,P_0)$ is a charging pair for $e_0$. Also, since we visit $T^*$ in post-order, none of the edges $\{e_1,e_2, \ldots, e_k\}$ will be charged to when we build charging pairs for higher-ordered edges which are edges between faces of higher orders. Thus, the set of charging pairs produced from this process is a $1$-simple charging scheme to $T$.
\begin{figure}[h] 
  \centering
  \vspace{-20pt}
  \definecolor{ffqqqq}{rgb}{1,0,0}
\definecolor{uququq}{rgb}{0.25,0.25,0.25}
\definecolor{qqqqff}{rgb}{0,0,1}
\begin{tikzpicture}[line cap=round,line join=round,>=triangle 45,x=1.0cm,y=1.0cm]
\clip(0,1) rectangle (5,5.58);
\fill[line width=1.2pt,fill=black,fill opacity=0.1] (1.98,5) -- (1.1,4.38) -- (0.75,3.36) -- (1.07,2.33) -- (1.93,1.69) -- (3.01,1.67) -- (3.89,2.29) -- (4.24,3.31) -- (3.92,4.34) -- (3.06,4.98) -- cycle;
\draw [line width=1.2pt] (1.98,5)-- (1.1,4.38);
\draw [line width=1.2pt] (1.1,4.38)-- (0.75,3.36);
\draw [line width=1.2pt] (0.75,3.36)-- (1.07,2.33);
\draw [line width=1.2pt] (1.07,2.33)-- (1.93,1.69);
\draw [line width=1.2pt] (1.93,1.69)-- (3.01,1.67);
\draw [line width=1.2pt] (3.01,1.67)-- (3.89,2.29);
\draw [line width=1.2pt] (3.89,2.29)-- (4.24,3.31);
\draw [line width=1.2pt] (4.24,3.31)-- (3.92,4.34);
\draw [line width=1.2pt] (3.92,4.34)-- (3.06,4.98);
\draw [dash pattern=on 2pt off 2pt] (1.1,4.38)-- (1.93,1.69);
\draw [dash pattern=on 2pt off 2pt] (1.98,5)-- (1.93,1.69);
\draw [dash pattern=on 2pt off 2pt] (1.93,1.69)-- (3.89,2.29);
\draw [dash pattern=on 2pt off 2pt] (3.92,4.34)-- (3.89,2.29);
\draw [dash pattern=on 2pt off 2pt] (1.98,5)-- (3.92,4.34);
\draw [dash pattern=on 2pt off 2pt] (3.92,4.34)-- (1.93,1.69);
\draw [line width=1.2pt,dash pattern=on 2pt off 2pt,color=ffqqqq] (1.98,5)-- (3.06,4.98);
\draw (2.46,5.4) node[anchor=north west] {$e$};
\draw (1.84,5.44) node[anchor=north west] {$1$};
\draw (0.94,4.81) node[anchor=north west] {$2$};
\draw (0.45,3.75) node[anchor=north west] {$3$};
\draw (0.8,2.55) node[anchor=north west] {$4$};
\draw (1.73,1.8) node[anchor=north west] {$5$};
\draw (3.01,1.83) node[anchor=north west] {$6$};
\draw (3.89,2.44) node[anchor=north west] {$7$};
\draw (4.38,3.5) node[anchor=north west] {$8$};
\draw (4.05,4.62) node[anchor=north west] {$9$};
\draw [line width=1.2pt,dotted,color=qqqqff] (1.66,3.95)-- (1.16,3.01);
\draw [line width=1.2pt,dotted,color=qqqqff] (1.66,3.95)-- (2.77,3.73);
\draw [line width=1.2pt,dotted,color=qqqqff] (2.77,3.73)-- (3.32,2.66);
\draw [line width=1.2pt,dotted,color=qqqqff] (3.32,2.66)-- (2.98,1.85);
\draw [line width=1.2pt,dotted,color=qqqqff] (2.77,3.73)-- (3.06,4.79);
\draw [line width=1.2pt,dotted,color=qqqqff] (4.07,3.21)-- (3.32,2.66);
\draw (3.06,5.41) node[anchor=north west] {$10$};
\begin{scriptsize}
\fill [color=qqqqff] (1.98,5) circle (1.5pt);
\fill [color=qqqqff] (1.1,4.38) circle (1.5pt);
\fill [color=uququq] (0.75,3.36) circle (1.5pt);
\fill [color=uququq] (1.07,2.33) circle (1.5pt);
\fill [color=uququq] (1.93,1.69) circle (1.5pt);
\fill [color=uququq] (3.01,1.67) circle (1.5pt);
\fill [color=uququq] (3.89,2.29) circle (1.5pt);
\fill [color=uququq] (4.24,3.31) circle (1.5pt);
\fill [color=uququq] (3.92,4.34) circle (1.5pt);
\fill [color=uququq] (3.06,4.98) circle (1.5pt);
\fill [color=qqqqff] (1.16,3.01) circle (1.5pt);
\fill [color=qqqqff] (1.66,3.95) circle (1.5pt);
\fill [color=qqqqff] (2.77,3.73) circle (1.5pt);
\fill [color=qqqqff] (3.32,2.66) circle (1.5pt);
\fill [color=qqqqff] (2.98,1.85) circle (1.5pt);
\fill [color=qqqqff] (4.07,3.21) circle (1.5pt);
\fill [color=qqqqff] (3.06,4.79) circle (1.5pt);
\end{scriptsize}
\end{tikzpicture}
  \vspace{-15pt}
  \caption{ An outer planar graph $G$. Bold edges are edges of $T$, dashed edges are non-tree edges and dotted edges are edges of the dual spanning tree, less the dual of $e$.}
  \label{fig:disk-graph-charging}
\vspace{-10pt}
\end{figure}
\end{proof}

\begin{proof}[Proof of Lemma~\ref{lem:weak}]
  Consider an edge $\hat e \in \hat S \setminus S$. We first argue that $\hat S \setminus \{\hat e \}$ has a weak $k$-simple charging scheme.   Since $T \subseteq S \cap \hat S$, $\hat e \notin T$ and so $\hat e$ can be charged to at most once.  If $\hat e$ is in the charging path $P_{\hat S}(e)$ for another edge $e$ of $\hat S$, then we define the charging path for $e$ to be the simple path between $e$'s endpoints that is in $P_{\hat S}(e) \cup P_{\hat S}(\hat e) \setminus  \{\hat e \}$.  The resulting set of paths is a weak $k$-simple charging scheme since every edge of $P_{\hat S}(\hat e)$ is charged to one fewer time (by the removal of $\hat e$) and at most once more (by $e$).

  By induction, $S$ has a weak $k$-simple charging scheme to $T$.  Since $S$ is a greedy spanner, Equation~(\ref{eq:edge-path-ineq}) holds for every charging pair, so the weak $k$-simple charging scheme to $T$ is a  $k$-simple charging scheme to $T$.
\end{proof}

\begin{proof}[Proof of Claim~\ref{clm:embeddable-const}]
Let $V_a$ be the set of vertices in vortex $V$ that are adjacent to apex $a$.  Split vertex $a$ into two vertices $a$ and $a_V$ connected by a zero-weight edge so that $a_V$'s neighbors are $V_a \cup \{a\}$ and so that contracting the zero-weight edge gives the original graph.  Add $a_V$ to all of the bags of the path decomposition of $V$.  Now all the edges that connected $a$ to $V$ are within the vortex.  

  Consider the face in the surface-embedded part of the $\beta$-almost-embeddable graph to which $V$ is attached and let $xy$ be edge in that face that is between the first and last bags of $V$ and such that $y$ is in the first bag of $V$.  Add the edges $xa_V$ and $a_Vy$ to the embedded part of the graph and give them weight equal to the distance between their endpoints.  Now $xa_V$ is the edge in that face that is between the first and last bags of the vortex and $a$ is adjacent to a vertex that is in the surface-embedded part of the $\beta$-almost-embeddable graph.  See Figure~\ref{fig:apex-reduction}.

  The {\em splitting} of $a$ into $a_V$ increases the pathwidth of the vortex by 1.  Repeating this process for all apex-vortex pairs increases the pathwidth of each vortex by at most $\beta$.

\begin{figure}[tbh]
  \centering
    \includegraphics[height=3.0in]{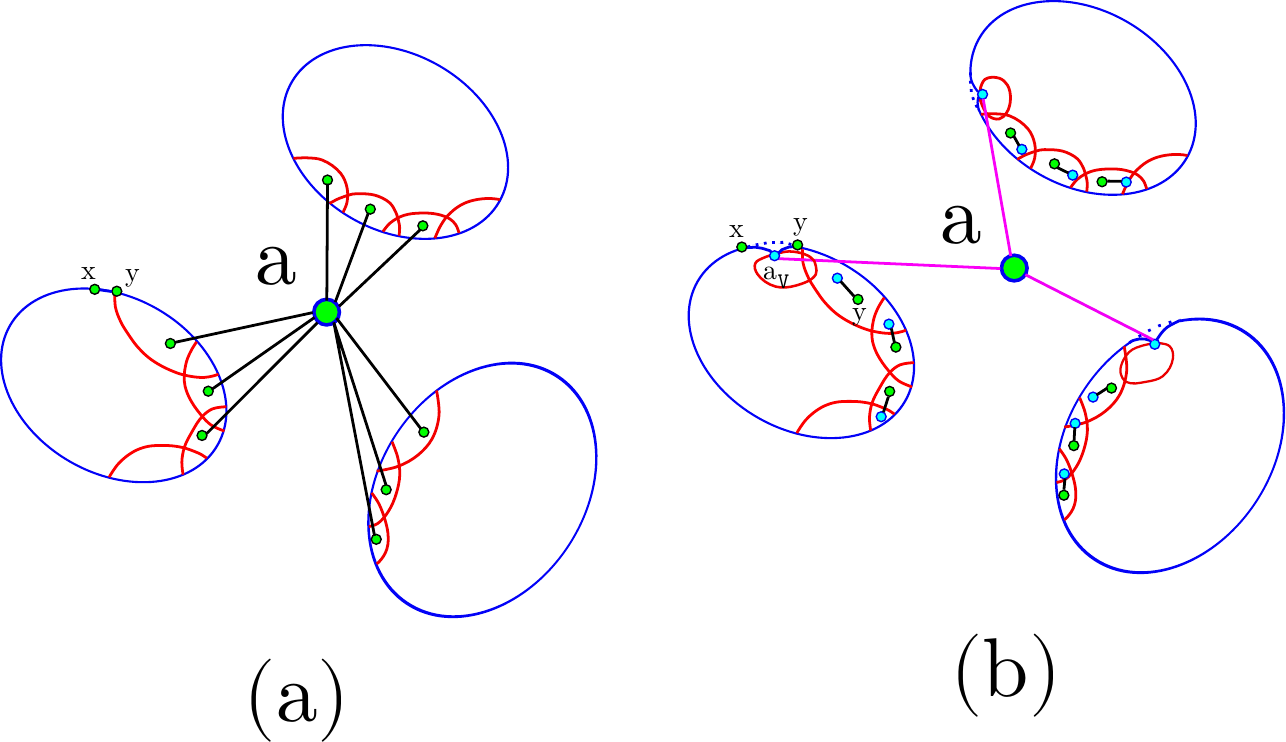}
      \caption{Apex $a$ and the vortices it is adjacent to before (a) and after (b) the reduction of Claim~\ref{clm:embeddable-const}}
       \label{fig:apex-reduction}
\end{figure}
\end{proof}

\begin{proof}[Proof of Claim~\ref{clm:init-inv}]

Note first that since there are only bold edges, the dashed-free trees of $\Phi_1$ are just the trees of $\Phi_1$. 

\subparagraph*{Invarint~(i)} Consider distinct trees $T_1$ and $T_2$ of $\MST[X_1]$ and consider $x,y \in T_1$ and $z \in T_2$.  The $\phi$-vertices $(x,z)$ and $(y,z)$ are in the same component of $H$ as witnessed by the edges of the $x$-to-$y$ path in $T_1$.  Further, $(x,z)$ cannot be connected to $(x,u)$ in $H$ where $xu \in \MST$ since that would imply $xz \in \MST$, contradicting that $T_1$ and $T_2$ are distinct trees of $\MST[X_1]$.  Therefore, there is a maximal unrooted tree of $H$ that will contain all the $\phi$-vertices of the form $(u,v)$ such that $u \in T_1$ and $v\in T_2$.

\subparagraph*{Invariant~(ii)}Consider a path $u_0,u_1, \ldots, u_k$ in $\MST[X_1]$ for $k \ge 2$; $((u_0,u_i),(u_0,u_{i+1}))$ are edges in $H$ for $i = 1,\ldots,k-1$.  Therefore $(u_0,u_1)$ (a root) and $(u_0,u_k)$ are in a common component of $H$.  Therefore any maximal tree of $H$ that contains  $(u_0,u_k)$ will contain a root.
\item[(iii)] Consider the component of $H$ described in showing $\Phi_1$ satisfies Invariant~(i).  Since $T_1$ and $T_2$ must be connected in $\MST$ by a path through ancestor nodes of the path decomposition, there is a vertex $u_i \in T_i$ such that $u_i \in X_2$ for $i = 1,2$.  Therefore $(u_1,u_2)$ is an active $\phi$-vertex in this component of $H$ and will be included in the maximal tree of this component.

\subparagraph*{Invariant~(iv)} In this case, Invariant~(iv) reduces to Invariant~(i).
\end{proof}

\begin{proof}[Proof of Claim~\ref{cl:free-inv}]
We show that $\Phi_i$ satisfies Invariants~(i),~(ii) and ~(iv) in turn:
\begin{enumerate}
\item[(i)] The only new tree of $\MST[\cup_{j \le i} X_\ell]$ compared to $\MST[\cup_{j \le i-1} X_\ell]$ is $u$.  For any pair of trees in $\MST[\cup_{\ell = j}^{i-1} X_\ell]$, Invariant~(i) holds for $\Phi_i$ because it helps for $\Phi_{i-1}$.  For a component $C$ of $\Lambda_{i-1}$, the addition of the edges $((u,w_j),(u,w_k))$ creates a new (unrooted) component spanning all $\phi$-vertices $(u,v)$ where $v$ is in the corresponding component of $\MST[\cup_{j \le i} X_j]$.  Therefore, Invariant~(i) holds for $\Phi_i$.
\item[(ii)] Since $u$ is free and no new $\phi$-vertices of the form $(u,v)$ where $u$ and $v$ are in the same tree of $\MST[\cup_{j \le i} X_j]$ are introduced, Invariant~(ii) holds for $\Phi_i$ because it holds for $\Phi_{i-1}$. 
\item[(iv)] As with $(i)$, the only new tree of $\MST[\cup_{\ell = j}^i X_\ell]$ is $u$, which only has a non-zero intersection with $X_i$.  So, for $j < i$, Invariant~(iv) holds for $\Phi_i$ because it holds for $\Phi_{i-1}$.  For $j = i$, all the components of $\MST[\cup_{\ell = j}^i X_\ell]$ are isolated vertices, so the invariant holds trivially.
\end{enumerate}
\end{proof}

\begin{proof}[Proof of Claim~\ref{cl:non-free-inv}]
We prove that $\Phi_i$ satisfies each of the invariants in turn.

\subparagraph*{Invariant~(i)} Let $T_1$ and $T_2$ be distinct trees of $\MST[\cup_{j \leq i} X_j]$.  If $T_1$ and $T_2$ are distinct trees of $\MST[\cup_{j \leq i-1} X_j]$, then the invariant holds for $\Phi_i$ because it holds for $\Phi_{i-1}$.  Otherwise, we may assume w.l.o.g.\ that $T_1$ contains $u$ and a subtree $T_3$ that is a component of $\MST[\cup_{j \leq i-1} X_j]$.  Let $x$ be a vertex of $T_3 \cap X_i$ and $w$ be a vertex of $T_2 \cap X_i$.  Then $ux$ is an edge of $\MST$ and $(u,w),(w,x)$ will be connected in $\Phi_i$ by greedy applications of the triangle rule.  By the same argument as used for showing that $\Phi_1$ satisfies Invariant~(i), $(u,w)$ will not be connected to a root $\phi$-vertex (a $\phi$-vertex corresponding to a tree edge) in $\Phi_i$.	

\subparagraph*{Invariant~(ii)} We need only prove this for the tree $T$ of $\MST[\cup_{j \leq i} X_i]$ that contains $u$ as other cases are covered by the fact that  $\Phi_{i-1}$ satisfies Invariant~(ii).  Let $T_1$ and $T_2$ be distinct trees of $\MST[\cup_{j \leq i-1} X_i]$ that are subtrees of $T$.  Let $u_1$ and $u_2$ be vertices of $T_1 \cap X_i$ and $T_2 \cap X_i$, respectively.  

We start by showing that $(u_1,u_2)$ is in a rooted dashed free tree of $\Phi_i$.  Let $v_1$ and $v_2$ be the neighbors of $u$ in $T_1$ and $T_2$, respectively.  (Note, it may be that, e.g., $u_1 = v_1$.)  By Invariant~(ii), $(v_2,u_2)$ is in a rooted dashed-free tree of $\Phi_{i-1}$.  By the triangle rule, $(v_1,u_2)$ will be connected by (a possibly non-trivial sequence of) bold edges to $(u,u_2)$ which in turn will be connected by (a possibly non-trivial sequence of) bold edges to $(v_2,u_2)$.  Therefore, $(v_1,u_2)$ will be in a rooted dashed-free tree of $\Phi_{i}$. In this next paragraph, we show that $(u_1,u_2)$  and $(v_1,u_2)$ are in the same dashed-free tree of $\Phi_{i-1}$, which we have just shown belongs to a rooted dashed-free tree of $\Phi_{i}$, showing that $(u_1,u_2)$ is in a rooted dashed free tree of $\Phi_i$.

To show that $(u_1,u_2)$  and $(v_1,u_2)$ are in the same dashed-free tree of $\Phi_{i-1}$, consider an index $k$ such that $u_1$ and $v_1$ are connected to $x_1$ in  $\MST[\cup_{\ell = k}^i X_\ell]$ and $u_2$ is connected to $x_2$ in $\MST[\cup_{\ell = k}^i X_\ell]$.  Then by Invariant~(iv) for $\Phi_{i-1}$, $(x_1,x_2)$ is in the same dashed-free tree as $(u_1,u_2)$ and $(v_1,u_2)$. 

Now consider a $\phi$-vertex $(x_1,x_2)$ where $x_k \in T_k \cap X_j$  for $k = 1,2$ and $j < i$.  Let $T_k'$ be the subtree of $T_k$ that is in $\MST[\cup_{\ell = j}^{i-1} X_\ell]$ and let $u_k = T_k' \cap X_i$ for $k = 1,2$.  We just showed that $(u_1,u_2)$ is in a rooted dashed-free tree of $\Phi_{i}$.  By Invariant~(iv) for $\Phi_{i-1}$, $(x_1,x_2)$ and $(u_1,u_2)$ are in the same dashed-free tree.  Therefore, $(x_1,x_2)$ is in a rooted dashed-free tree of $\Phi_{i}$. 

\subparagraph*{Invariant~(iv)} Consider $j \leq i$, trees $T_1, T_2$ of $\MST[\cup_{\ell = j}^i X_\ell]$, and $\phi$-vertices $(y_{j,1},y_{j,2})$ and $(y_{i,1},y_{i,2})$  as defined in Invariant~(iv).  For the case $j = i$, the proof that $\Phi_{i}$ satisfies Invariant~(iv) is the same as for $\Phi_1$.  Further, if $u \notin T_1,T_2$, $\Phi_{i}$ satisfies Invariant~(iv) because $\Phi_{i-1}$ satisfies Invariant~(iv), therefore, we assume w.l.o.g.\ that $u \in T_1$.  Finally, consider the components of $T_1$ in $\MST[\cup_{\ell = j}^{i-1} X_\ell]$; if $y_{j,1}$ and $y_{i,1}$ are in the same component, then $\Phi_{i}$ satisfies Invariant~(iv) because $\Phi_{i-1}$ satisfies Invariant~(iv).  Therefore, we assume they are in different components, $T_1^p$ and $T_1^q$, respectively.  

Let $x$ be the neighbor of $u$ in $T_1^p$.  By Invariant~(iv) for $\Phi_{i-1}$, $(x,y_{i,2})$ and $(y_{j,1},y_{j,2})$ are in a common unrooted dashed-free tree.  We show that $(x,y_{i,2})$ and $(y_{i,1},y_{i,2})$ are in a common unrooted dashed-free tree of $\Phi_i$, proving this invariant is held.  Let $y$ be the neighbor of $u$ in $T_1^q$.
\begin{enumerate}[nolistsep,noitemsep]
\item By the triangle rule, $(x,y_{i,2})$ will get connected by (a possibly non-trivial sequence of) bold edges to $(u,y_{i,2})$ because $xu \in \MST$ and $(u,y_{i,2})$ will get connected by (a possibly non-trivial sequence of) bold edges to $(y, y_{i,2})$ because $yu \in \MST$.
\item Let $k$ be the index such that $y$ and $y_{i,1}$ are connected to $a$ in $\MST[\cup_{\ell = k}^i X_\ell]$ and $y_{i,2}$ is connected to $b$ in $\MST[\cup_{\ell = k}^i X_\ell]$.  Then by Invariant~(iv), $(a,b)$ is in the same unrooted dashed-free tree as $(y, y_{i,2})$ and $(y_{i,1},y_{i,2})$.
\end{enumerate}
Together these connections show that $(x,y_{i,2})$ and $(y_{i,1},y_{i,2})$ are in a common unrooted dashed-free tree.
\end{proof}

\begin{proof}[Proof of Claim~\ref{cl:lamda-dist-rank}]
We prove the claim for each case of the construction of $\Lambda_i$:

\begin{enumerate}[nolistsep,noitemsep]
\item If $u = v$, we only add new edges to $\Lambda_{i-1}$ to obtain $\Lambda_{i}$. Since each added edge has distinct rank that is larger than the ranks of edges of $\Lambda_{i-1}$, the claim follows.
\item If $u\not= v$ and $uv \in \MST[X_i]$, since $r_i(xy) = r_{i-1}(xy)$ for any edge $xy$ such that $x,y \neq u$, we only need to consider the case when $u \in \{x,y\}$. Observe that for each neighbor $w$ of $u$,  $r_i(uw) = r_{i-1}(vw)$ if $vw\in \Lambda_{i-1}$ or $uw$ is among the edges that are added to $\Lambda_{i-1}$. Since each newly added edge has unique rank, the claim follows. 
\item Otherwise, we have $r_{i}(v_jv_{j+1}) = r_{i-1}(v)v_j$ for $1 \leq j \leq p-1$ and $r_i(xy) = r_{i-1}(xy)$ for all other edges. Thus, each edge in $\Lambda_i$ has unique rank by the induction hypothesis. 
\end{enumerate}
\end{proof}


Let $w_k$ ($k=1,2$) be a vertex in $ N^{r}_{i-1}(v_k) \cup \{v_k\}$ such that:

\begin{proof}[Proof of Lemma~\ref{lm:associated-pseudo-triangle}]
We investigate how dashed edges are changed into mixed edges as this is when a charging pseudo-triangle arises. Recall that dashed edges are added to $\Phi$ when we process free vertices. Let $u$ be a free vertex that is introduced in bag $X_i$ and let $((u,v_1),(u,v_2))$ be a mixed edge (that is, a dashed edge in $\Phi_i$ that is later converted to a mixed edge).  Then by our construction, $v_1$ and $v_2$ are in the same component of $\MST[\cup_{k \le i-1}X_k]$. For each $j \geq  i$ and $w \in X_i$, let $T^{ij}_w$ be the component of $ \MST[\cup_{i \leq k \leq j}X_k]$ containing $w$. We will say that an introduced vertex of a bag $X_j$ is {\em branching} if it is not free, is not the forgotten vertex of $X_j$ and is not connected to the forgotten vertex of $X_j$ by $\MST$.  Note that dashed edges are converted to mixed edges when processing free and branching vertices.

 We say that a tree $T^{ij}_w \in  \MST[\cup_{i \leq k \leq j}X_k]$ is \emph{forgotten} in $X_j$ if $j = |I|$ or :
	\begin{itemize}[nolistsep,noitemsep]
	\item $T_w^{ij} \cap X_j = \{v\}$ which is the forgotten vertex of $X_j$.
	\item the introduced vertex of $X_j$ is branching or free.
	\end{itemize}
We note that at most one tree can be forgotten in each bag $X_j$. 

For each vertex $x \in \Lambda_{i-1}$, let $N_{i-1}^r(x)$ be the set of vertices in $\Lambda_{i-1}$ reachable from $x$ via paths consisting of edges of ranks larger than $r$ in $\Lambda_{i-1}$. Let $F_{x,r}^{ij} = \cup_{ w\in N_{i-1}^r(x) \cup \{x\}}T^{ij}_{w}$. We say that the forest $F_{x,r}^{ij}$ is forgotten in $X_j$ if every tree in $F_{x,r}^{ij}$ is forgotten in $X_{j'}$ for some $j ' \leq j$ and $F_{x,r}^{ij} \cap X_j \not= \emptyset$.

\begin{claim} \label{clm:realized-dashed-edges}
Let $r$ be the rank of the edge $v_1v_2 \in \Lambda_{i-1}$. If the dashed edge $((u,v_1),(u,v_2))$ is changed into a mixed edge, there exists a bag $X_j$ for $j \geq i$ such that $T^{ij}_u \cap \{F^{ij}_{v_1,r} \cup F^{ij}_{v_2,r}\} = \emptyset$, $T^{ij}_{u} \cap X_j \not= \emptyset$ and exactly one of the forests  $F_{v_1,r}^{ij}, F_{v_2,r}^{ij}$ is forgotten in $X_j$.
\end{claim}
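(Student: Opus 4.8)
The plan is to pin down the bag $X_j$ at which $((u,v_1),(u,v_2))$ is converted to a mixed edge and then to read the three conclusions off the state of the construction at that moment. Recall that dashed edges are created only when a free vertex is introduced---here $u$, at bag $X_i$---and that a dashed edge is converted only while processing a later bag $X_j$ ($j\ge i$) in order to restore Invariant~(iii): some dashed-free tree $\tau$ whose unique active $\phi$-vertex $(v,w)$ is about to become inactive (because $v$ is the forgotten vertex of $X_j$) is grown inside its ambient component $\hat\tau$ by converting dashed edges until $\tau$ absorbs a $\phi$-vertex that is still active. I take $X_j$ to be the bag during whose processing $((u,v_1),(u,v_2))$ is converted; this is the bag the claim asks for.

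The heart of the argument is to identify the two dashed-free trees this conversion merges. First I would observe that the dashed edges created at step $X_i$ form, on the $\phi$-vertices of the ``$u$-star'' $\{(u,w)\}$, a forest isomorphic to $\Lambda_{i-1}$, each dashed edge $((u,w_j),(u,w_k))$ carrying the rank of $w_jw_k$ in $\Lambda_{i-1}$, these ranks being pairwise distinct by Claim~\ref{cl:lamda-dist-rank}. Next, using the tie-breaking rule for conversion---among dashed edges added at the same step (all $u$-star edges were added while processing $X_i$) the one of higher rank is converted first---I would argue that by the time $((u,v_1),(u,v_2))$ (of rank $r$) is converted, the $u$-star dashed edges of rank $>r$ on the relevant growth path have already been converted; combined with which $u$-star dashed edges the cycle-deletion step of the triangle rule has removed, this shows that just before the conversion the dashed-free tree containing $(u,v_1)$ consists exactly of the $\phi$-vertices $(u,w)$ with $w\in N^{r}_{i-1}(v_1)\cup\{v_1\}$ (together with whatever the triangle rule attached by bold edges), and symmetrically for $(u,v_2)$ and $N^{r}_{i-1}(v_2)\cup\{v_2\}$. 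Finally, Invariant~(iv) supplies the dictionary between the two sides and the two forests: a $\phi$-vertex $(u,w)$ lies in an \emph{active} dashed-free tree exactly when the tree $T^{ij}_w$ is still active in $X_j$, and as $w$ ranges over $N^{r}_{i-1}(v_1)\cup\{v_1\}$ these trees are precisely the trees of $F^{ij}_{v_1,r}$; so the $(u,v_1)$-side carries an active $\phi$-vertex iff $F^{ij}_{v_1,r}$ is not forgotten in $X_j$.

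With this dictionary the three conclusions follow. Since the conversion is performed to connect $\tau$ to a still-active $\phi$-vertex, the side of $((u,v_1),(u,v_2))$ not containing $\tau$ must still carry an active $\phi$-vertex while the side containing $\tau$ has exhausted its active $\phi$-vertices; translating, exactly one of $F^{ij}_{v_1,r},F^{ij}_{v_2,r}$ is forgotten in $X_j$, the ``$\cap X_j\ne\emptyset$'' clause of ``forgotten'' holding because the relevant dashed-free trees were active at $X_{j-1}$ and hence meet $X_j$. For disjointness: if $T^{ij}_u$ met $F^{ij}_{v_1,r}$---i.e.\ $u$ lay in the same tree of $\MST[\cup_{i\le\ell\le j}X_\ell]$ as some $w\in N^{r}_{i-1}(v_1)\cup\{v_1\}$---then by Invariant~(ii) the $\phi$-vertex $(u,v_1)$ would already sit in a rooted dashed-free tree, and the bold edge certifying that merge would have closed a cycle through the $u$-star, so the cycle-deletion rule (with its tie-break) would have \emph{deleted} $((u,v_1),(u,v_2))$ before $X_j$ rather than leaving it to be converted there, a contradiction; hence $T^{ij}_u\cap(F^{ij}_{v_1,r}\cup F^{ij}_{v_2,r})=\emptyset$. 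And $T^{ij}_u\cap X_j\ne\emptyset$ because the conversion can only be triggered while the dashed-free trees of the $u$-star between $F^{ij}_{v_1,r}$ and $F^{ij}_{v_2,r}$ are still active or just becoming inactive at $X_j$, which forces $u$'s $\MST$-component not to have been entirely forgotten before $X_j$.

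I expect the main obstacle to be the second step: showing that at the precise moment of conversion the dashed-free tree of $(u,v_1)$ corresponds exactly to the vertex set $N^{r}_{i-1}(v_1)\cup\{v_1\}$, hence to $F^{ij}_{v_1,r}$. This requires controlling simultaneously the order in which dashed edges are converted (through both tie-breaking rules), which dashed edges had earlier been \emph{deleted} by cycle-closing triangle-rule steps, and how interleaved bold edges merged dashed-free trees, all reconciled with Invariant~(iv). Securing the exact ``exactly one is forgotten'' dichotomy, rather than the weaker ``at least one'', rests on the additional observation that while both sides still carry active $\phi$-vertices there is simply no call to convert this particular dashed edge.
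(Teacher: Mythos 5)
Your plan inverts the paper's argument: you fix $X_j$ as the bag during whose processing $((u,v_1),(u,v_2))$ is converted and try to read the three conclusions directly off the state at that moment, whereas the paper argues by contradiction, assuming no suitable $j$ exists and showing in two cases that the dashed edge would then either already be in a rooted dashed-free tree (so never considered for conversion) or be deleted by the cycle rule. Both arguments lean on the same ingredients --- the tie-breaking rules, Invariant~(iv), and the cycle-deletion step of the triangle rule --- so the difference is one of framing rather than of machinery.

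The obstacle you flag is, however, a genuine and load-bearing gap, and it is worse than you acknowledge. Your dictionary requires that just before the conversion the dashed-free tree of $(u,v_1)$ in the $u$-star be \emph{exactly} $\{(u,w): w\in N^{r}_{i-1}(v_1)\cup\{v_1\}\}$. This is not entailed by the stated rules: (a) the tie-break for \emph{conversion} (higher rank first among same-age dashed edges) controls which $u$-star dashed edges are converted, but the \emph{deletion} rule is ``most recently added'' and carries no specified tie-break among the same-age $u$-star edges, so when the triangle rule closes a cycle through the $u$-star it is not determined which dashed edge of rank $>r$ disappears; (b) bold-edge applications of the triangle rule can merge a $u$-star dashed-free tree with other dashed-free trees in ways not governed by the rank structure of $\Lambda_{i-1}$ at all; and (c) which $u$-star dashed edges have been converted by bag $X_j$ depends on the global sequence of forgotten vertices between $X_i$ and $X_j$, not only on $r$. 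The paper avoids this: instead of characterizing the whole dashed-free tree, it exhibits a single well-chosen witness $w_k\in N^{r}_{i-1}(v_k)\cup\{v_k\}$ (the one whose component meets $X_p$ with minimal $\Lambda_{i-1}$-distance to $v_k$), invokes the conversion tie-break only to place $(u,v_k)$ and $(u,w_k)$ in the same dashed-free tree, and then applies Invariant~(iv) through $(w,w_k)$ and $(x,y_k)$; this requires only one-sided, partial information about the intermediate state. To salvage your direct approach you would need to either prove the exact characterization (which would require pinning down the deletion tie-break and controlling bold-edge merges) or weaken it to a one-sided inclusion and redo the ``exactly one is forgotten'' dichotomy accordingly.
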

\begin{proof}
Suppose that the claim fails, then there are two cases:

	\begin{enumerate}[nolistsep,noitemsep]
	\item  There exists $j \geq i$ such that $T^{ij}_u = T_{w}^{ij}$ for some $w \in F^{ij}_{v_1,r}$ and $F^{ij}_{v_2,r} \cap X_j \not = \emptyset$. In this case, $i < j$ since $u$ is a free vertex. Let $p$ be the index such that $u$ and $w$ are connected to $x$ in $\MST[\cup_{i \leq q \leq p} X_q]$. 
Let $w_k$ ($k=1,2$) be a vertex in $ N^{r}_{i-1}(v_k) \cup \{v_k\}$ such that:
	\begin{enumerate}[nolistsep,noitemsep]
	\item[(i)] $T^{ij}_{w_k} \cap X_p \not= \emptyset$
	\item[(ii)] Subject to (i), the distance $d_{\Lambda_{i-1}}(w_k,v_k)$ is minimum. 
	\end{enumerate}
Let $y_k = T^{ij}_{w_k} \cap X_p$. Since the tie-breaking rule prefers changing the dashed edges of higher ranks into mixed edges, $(u,v_k)$ is in the same dashed-free tree of $\Phi_p$ as $(u,w_k)$. By Invariant~(iv), $(u,w_k)$ and $(w,w_k)$  are in the same dashed-free tree of $\Phi_p$ as $(x,y_k)$. Since $(w,w_k)$ is in a rooted dashed-free tree of $\Phi_p$, by Invariant~(i), $(u,v_k)$ is in a rooted dashed free tree. Therefore, the dashed edge  $((u,v_1),(u,v_2))$ is not converted to a mixed edge.
	
	\item There exists $j \geq i$ such that $T^{ij}_u$ is forgotten and both forests  $F^{ij}_{v_1,r}, F^{ij}_{v_2,r}$ are not forgotten in $X_j$. Since $u$ is a free vertex, there must be a vertex $\hat{u}$ in $X_i$ such that $\hat{u} \in T^{ij}_u$. Let $p$ be the index such that $u$ and $\hat{u}$ are connected to $x$ in $\MST[\cup_{i \leq q \leq p} X_q]$. Let $w_k$ ($k=1,2$) be a vertex in $ N_{i-1}^{r}(v_k) \cup \{v_k\}$ as in the first case and $y_k = T^{ij}_{w_k} \cap X_p$. Then, by the tie-breaking rule, $(u,v_k)$ and $(u,w_k)$ are in the same dashed-free tree of $\Phi_p$. By Invariant~(iv), $(u,w_k)$ and $(\hat{u},w_k)$ are in the same dashed-free tree of $\Phi_p$ as  $(x,y_k)$. Therefore, there is a cycle in which $((u,v_1),(u,v_2))$ is the most recent added dashed edge. Thus,  $((u,v_1),(u,v_2))$ is deleted and not converted to a mixed edge.\qedhere
\end{enumerate}
\end{proof}
We now bound the number of pseudo-triangles that contain an edge $e$ of $\MST(G)$.  Let $X_i$ be the bag that containing $e$ and $\Lambda_i$ be the corresponding contracted forest.  Let $u_0,v_0$ be two vertices of $\Lambda_i$ in the same tree of $\Lambda_i$ such that $e$ is in the $u_0$-to-$v_0$ path $P_\MST(u_0,v_0)$. We say that a pseudo-triangle \emph{strongly contains} $P_\MST(u_0,v_0)$ if $P_\MST(u_0,v_0)$ is a subpath of the pseudo-edge and the rank of every edge of the psedudo-edge of the triangle is at least the minimum rank over edges of $P_\MST(u_0,v_0)$. 

\begin{claim}\label{clm:num-nesting}
There are at most $\pw$ pseudo-triangles strongly containing $P_\MST(u_0,v_0)$. 
\end{claim}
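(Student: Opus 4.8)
The plan is to injectively associate each pseudo-triangle that strongly contains $P_\MST(u_0,v_0)$ with a distinct vertex of a single bag of the path decomposition, which bounds their number by roughly the bag size. Write $\Delta_1,\dots,\Delta_q$ for these pseudo-triangles, with $\Delta_k=\{u_k,v_k,w_k\}$, where $w_k$ is the free vertex whose processing created the mixed edge $((w_k,u_k),(w_k,v_k))$ and $X_{s_k}$ is the bag in which $w_k$ is introduced; reorder so that $s_1\le s_2\le\dots\le s_q$. Let $r$ be the minimum rank among the edges of $P_\MST(u_0,v_0)$. Since $P_\MST(u_0,v_0)$ is a subpath of every pseudo-edge and, by the definition of strong containment, each edge of each pseudo-edge has rank at least $r$, the two endpoints of each pseudo-edge sit on opposite sides of $P_\MST(u_0,v_0)$ inside the rank-$>r$ reachability forests; concretely $u_k\in F^{i\ell}_{u_0,r}$ and $v_k\in F^{i\ell}_{v_0,r}$ for the relevant indices.

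Next I would invoke Claim~\ref{clm:realized-dashed-edges}, which for each $k$ produces a bag in which exactly one of the two side-forests is forgotten while the tree $T^{s_k\cdot}_{w_k}$ still meets that bag. The key structural point to establish is that all of these ``forgetting'' events can be taken to occur at one common bag $X_\ell$ — namely the bag where one of $F^{i\ell}_{u_0,r}$, $F^{i\ell}_{v_0,r}$ (say $F^{i\ell}_{u_0,r}$) first becomes entirely forgotten. Using the identities $F_{u_k,r}^{s_k\ell}=F^{i\ell}_{u_0,r}\cap\MST[\cup_{s_k\le j\le\ell}X_j]$ and $F_{v_k,r}^{s_k\ell}=F^{i\ell}_{v_0,r}\cap\MST[\cup_{s_k\le j\le\ell}X_j]$, this $X_\ell$ serves as the witness bag for every $\Delta_k$, so Claim~\ref{clm:realized-dashed-edges} gives that each $T^{s_k\ell}_{w_k}$ intersects $X_\ell$.

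Finally I would argue that the trees $T^{s_1\ell}_{w_1},\dots,T^{s_q\ell}_{w_q}$ are pairwise vertex-disjoint. If instead $T^{s_j\ell}_{w_j}\subseteq T^{s_k\ell}_{w_k}$ for some $k\le j$, then when $w_j$ is introduced in $X_{s_j}$ it lies inside a component of the $\MST$ already represented by $\Delta_k$, and the argument from the second case of the proof of Claim~\ref{clm:realized-dashed-edges} shows that the dashed edge $((w_j,u_j),(w_j,v_j))$ lies on a cycle created by a later application of the triangle rule and is therefore deleted rather than converted to a mixed edge — contradicting that $\Delta_j$ is a pseudo-triangle. Hence the $q$ disjoint trees each meet $X_\ell$ in a distinct vertex, and since $|X_\ell|=\pw+1$ with at least one vertex of $X_\ell$ separating the $u_0$-side from the $v_0$-side forest, we get $q\le\pw$. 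I expect the main obstacle to be not the final counting but justifying that a \emph{single} bag $X_\ell$ simultaneously witnesses the hypothesis of Claim~\ref{clm:realized-dashed-edges} for all $q$ triangles, together with the bookkeeping needed to pass between the global forests $F^{i\ell}_{\cdot,r}$ anchored at $u_0,v_0$ and the local forests $F^{s_k\ell}_{\cdot,r}$ anchored at the later-introduced vertices $w_k$; making the disjointness step interface cleanly with the cycle-deletion rule of the charging-forest construction is the delicate point.
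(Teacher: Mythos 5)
Your proposal follows essentially the same route as the paper's proof: you fix the same witness bag $X_\ell$ where one of the side-forests $F^{i\ell}_{u_0,r}$, $F^{i\ell}_{v_0,r}$ is first forgotten, use the same restriction identities to transfer from the globally anchored forests to the locally anchored ones, invoke Claim~\ref{clm:realized-dashed-edges} to place each $T^{s_k\ell}_{w_k}$ in $X_\ell$, and argue pairwise disjointness via the cycle-deletion rule to conclude $q\le\pw$.
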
 
\begin{proof}
Let $\Delta_1, \ldots, \Delta_q $  be pseudo-triangles that strongly contain $P_\MST(u_0,v_0)$. Let $\Delta_k = \{u_k,v_k,w_k\}$ and $X_{s_k}$ be the bag that has $w_k$ as the introduced vertex ($1 \leq k \leq q$). Let $r$ be the minimum rank over edges of $P_\MST(u_0,v_0)$ and $X_\ell$ be the bag in which one of the forests $F^{i\ell}_{u_0,r}, F^{i\ell}_{v_0,r}$, say $F^{i\ell}_{u_0,r}$, is forgotten. Then, $u_k \in F^{i\ell}_{u_0,r}$ and $v_k \in F^{i\ell,r}_{v_0}$. We can assume w.l.o.g\ that $s_1 \leq s_2 \leq \ldots \leq s_q$.  Then, $F_{u_k,r}^{s_k\ell} = F^{i\ell}_{u_0,r} \cap \MST[\cup_{s_k \leq j \leq \ell}X_j]$ and $F_{v_k,r}^{s_k\ell} = F^{i\ell}_{v_0,r} \cap \MST[\cup_{s_k \leq j \leq \ell}X_j]$. By Claim~\ref{clm:realized-dashed-edges}, $T^{s_k\ell}_{w_k} \cap X_{\ell} \not= \emptyset$. Furthermore, all the tree $T^{s_k\ell}_{w_k}$ must be disjoint since otherwise, say $T^{s_j\ell}_{w_j}$ is a subtree of $T^{s_k\ell}_{w_k}$ $( k \leq j)$,  the second case of the proof of Claim~\ref{clm:realized-dashed-edges} implies that the dashed edge $((w_j,u_j),(w_j,v_j))$ is removed from $\Phi$. Hence, $q \leq \pw$.
\end{proof}
Let $e$ be an arbitrary edge of $\MST(G)$. Let $X_i$ be the bag that containing $e$ and $\Lambda_i$ be the corresponding contracted forest. By the way we build the contracted forest, there are at most two edges, say $u_0v_0$ and $v_0w_0$, of $\Lambda_i$ that are incident to the same vertex $v_0$ such that $e \in P_\MST(u_0,v_0) \cap P_\MST(v_0,w_0)$. We observe that any pseudo-triangle that contains $e$ in the pseudo-edge must contain the path $P_\MST(u,v)$ for some $u,v \in \Lambda_i$ such that one of two edges $u_0v_0$, $v_0w_0$ is in the path $P_{\Lambda_{i-1}}(u,v)$. Let $\Gamma(u_0,v_0)$ be the set of pseudo-triangles that have $P_{\Lambda_{i-1}}(u,v)$ containing $u_0v_0$. We define $\Gamma(v_0,w_0)$ similarly. We will show that $|\Gamma(u_0,v_0)|,|\Gamma(u_0,w_0)| \leq \pw^2$, thereby, proving the lemma. 

We only need to show that $|\Gamma(u_0,v_0)| \leq \pw^2$ since $|\Gamma(u_0,w_0)| \leq \pw^2$ can be proved similarly.  Let $\Delta_1, \Delta_2, \ldots, \Delta_p$ be the pseudo-triangles containing $e$ in the pseudo-forest such that for each $k$:
	\begin{enumerate}[nolistsep,noitemsep]
	\item Each triangle $\Delta_k$ contains distinct path $P_\MST(u_k,v_k)$ as subpath in the pseudo-edge. 
	\item Edge $u_0v_0$ is in the path $P_{\Lambda_{i-1}}(u_k,v_k)$ of $\Lambda_{i-1}$.
	\item Each path $P_\MST(u_k,v_k)$ has distinct rank.
	\end{enumerate}	 
Then, by the way we construct the contracted forest, if the minimum rank over edges of $P_\MST(u_j,v_j)$ is smaller than the minimum rank over edges of $P_\MST(u_k,v_k)$, then $P_{\Lambda_i}(u_k,v_k) \subsetneq P_{\Lambda_i}(u_j,v_j)$ for $1 \leq j \not= k \leq p$.  Therefore, we can rearrange $\Delta_1, \Delta_2, \ldots, \Delta_k$ such that $P_{\Lambda_{i}}(u_1,v_1)\subsetneq P_{\Lambda_{i}}(u_2,v_2) \subsetneq \ldots \subsetneq P_{\Lambda_{i}}(u_p,v_p)$. Thus, $p \leq \pw$ . By Claim~\ref{clm:num-nesting}, for each $k$, there are at most $\pw$ pseudo-triangles containing the same subpath $P_\MST(u_k,v_k)$. Therefore, $|\Gamma(u_0,v_0)| \leq \pw^2$. 
\end{proof}
\section{Figures} \label{app:figures}

\begin{figure}[hb] 
  \centering
  \vspace{-20pt}
  \input{figs/cutting-surface.tex}
  \vspace{-15pt}
  \caption{The surface $(2)$ is obtained by cutting the surface $\Sigma$ in $(1)$ along $T_{\widehat{G}} \cup X$. Small ovals are cycles $C_1,C_2,\ldots, C_\beta$.}
  \label{fig:cutting-surface}
\end{figure}

\begin{figure}[tbh]
  \centering
    \includegraphics[height=2.4in]{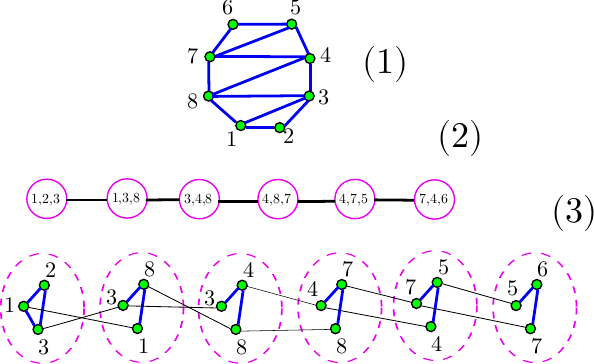}
      \caption{The normalized graph (3) of a graph (1) with path decomposition (2). Bold edges in (3) are edges of the original graph and thin edges in (3) are zero weighted edges.}
       \label{fig:normalized-graph}
\end{figure}

\begin{figure}[tbh]
  \centering
    \includegraphics[height=6.0in]{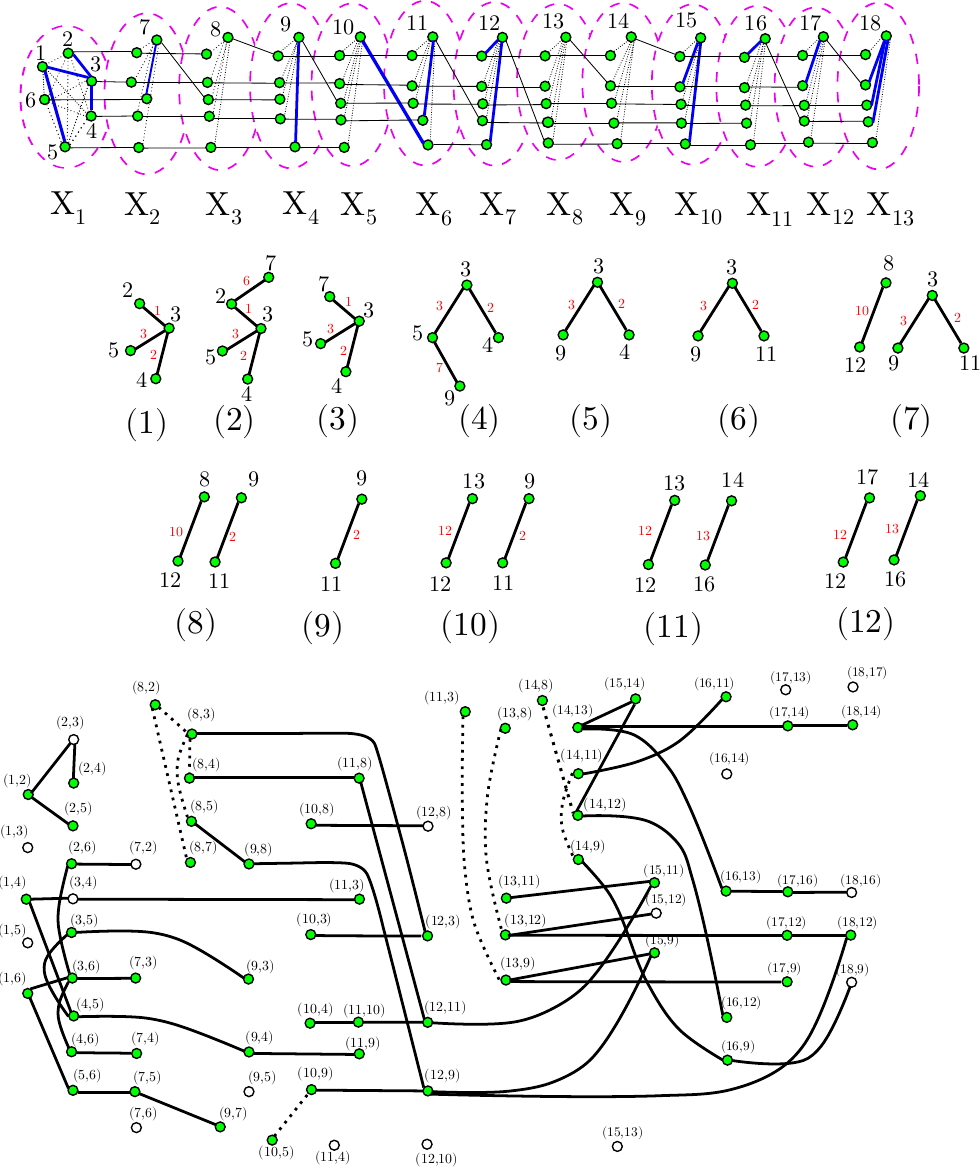}
      \caption{Top: A normalized graph and its MST. Dotted edges are non-tree edges. Non-dotted edges are edges of the MST. Thin edges are zero weighted edges. Center: Contracted forest $\Lambda_1, \ldots, \Lambda_{12}$. Small numbers are ranks of the edges of the contracted forests. Bottom: Charging forest for the graph. Dotted edges are mixed edges. Hollow vertices are roots of the charging forest}
       \label{fig:charging-mst}
\end{figure}

\fi
\end{document}